\newfont{\mycrnotice}{ptmr8t at 7pt}
\newfont{\myconfname}{ptmri8t at 7pt}
\newcommand{\A}{\mathcal{A}}
\newcommand{\Alg}{\mathfrak{A}}
\newcommand{\T}{\mathfrak{T}}
\newcommand{\Str}{\mathfrak{S}}
\newcommand{\astr}{\mathbf{a}}
\newcommand{\bstr}{\mathbf{b}}
\newcommand{\SReg}{\mathrm{SReg}}
\newcommand{\Sur}{\mathrm{Sur}}
\newcommand{\Rev}{\mathrm{Rev}}
\newcommand{\SRev}{\mathrm{SRev}}
\newcommand{\n}{\mathfrak{n}}
\newcommand{\m}{\mathfrak{m}}
\newcommand{\estr}{\mathfrak{e}}
\newcommand{\rt}{\mathfrak{r}}
\newcommand{\lf}{\mathfrak{l}}
\newcommand{\Ind}{\mathbb{I}}
\newcommand{\fa}{\:\forall}
\newcommand{\g}{\gamma}
\newcommand{\argmax}{\mathop{\mathrm{argmax}}}
\newcommand{\Expect}{\mathbb{E}}
\newcommand{\Prob}{\mathbb{P}}
\newcommand{\gb}{{\boldsymbol\g}}
\newcommand{\nub}{{\boldsymbol\nu}}
\newcommand{\vb}{{\mathbf{v}}}
\newcommand{\wb}{{\mathbf{w}}}
\newcommand{\pb}{{\mathbf{p}}}
\newcommand{\rb}{{\mathbf{r}}}
\newcommand{\nO}{\mathtt{0}}
\newcommand{\nI}{\mathtt{1}}
\newcommand{\N}{\mathfrak{N}}
\newcommand{\ttS}{\mathtt{S}}
\newcommand{\ttB}{\mathtt{B}}
\theoremstyle{plain}
\newtheorem{theorem}{Theorem}
\newtheorem{lemma}{Lemma}
\newtheorem{corollary}{Corollary}
\newtheorem{proposition}{Proposition}
\theoremstyle{definition}
\newtheorem{remark}{Remark}
\newtheorem{definition}{Definition}
\begin{document}
\title{Optimal Pricing in Repeated Posted-Price Auctions}  
\author{
	 {\bf Arsenii Vanunts}\\ Yandex\thanks{16, Leo Tolstoy St., Moscow, Russia, 119021 (www.yandex.com)}, MSU\thanks{Lomonosov Moscow State University, Faculty of Mechanics and Mathematics; GSP-1, 1 Leninskiye Gory, Main Building, Moscow, Russia, 119991} \\ {\tt\small avanunts@yandex.ru} \\
	  {\bf Alexey Drutsa}\\ Yandex\thanks{16, Leo Tolstoy St., Moscow, Russia, 119021 (www.yandex.com)}, MSU\thanks{Lomonosov Moscow State University, Faculty of Mechanics and Mathematics; GSP-1, 1 Leninskiye Gory, Main Building, Moscow, Russia, 119991} \\ {\tt\small adrutsa@yandex.ru}
}
\date{19 March 2018}

\maketitle

\begin{abstract}
We study revenue optimization pricing algorithms for repeated posted-price auctions where a seller interacts with a single strategic buyer that holds a fixed private valuation. 
We show that, in the case when both the seller and the buyer have the same discounting in their cumulative utilities (revenue and surplus), there exist two optimal algorithms. The first one constantly offers the Myerson price, while the second pricing proposes a ``big deal": pay for all goods in advance (at the first round) or get nothing.
However, when there is an imbalance between the seller and the buyer in the patience to wait for utility, we find that the constant pricing, surprisingly, is no longer optimal. 
First, it is outperformed by the  pricing algorithm ``big deal", when the seller's discount rate is lower than the one of the buyer.
Second, in the inverse case of a less patient buyer, we reduce the problem of finding an optimal algorithm to a multidimensional optimization problem (a multivariate analogue of the functional used to determine Myerson's price) that does not admit a closed form solution in general, but  can be solved by numerical optimization techniques (e.g., gradient ones).
We provide extensive analysis of numerically found optimal algorithms to demonstrate that they are non-trivial, may be non-consistent, and generate larger expected revenue than the constant pricing with the Myerson price.
\end{abstract}

\section{Introduction}

Revenue maximization in online advertising is an important development direction of leading Internet companies (like real-time ad exchanges~\cite{2015-ManagSci-Balseiro}, search engines~\cite{2013-IJCAI-He}, and social networks), in which a large part of ad inventory is sold via widely applicable second price auctions~\cite{2013-IJCAI-He,2014-ICML-Mohri}, including the generalizations GSP~\cite{2014-ECRA-Sun} and VCG~\cite{1981-MOR-Myerson}. 
The optimization of revenue in these auctions is mostly controlled by means of  reserve prices, whose proper setting is studied both by game-theoretical methods~\cite{1981-MOR-Myerson,2009-Book-Krishna} and by machine learning approaches~\cite{2007-Book-Nisan,2013-SODA-Cesa-Bianchi,2014-ECRA-Sun,2014-ICML-Mohri,2017-NIPS-Medina,2017-WWW-Drutsa}. 
A large number of online auctions in, for example, ad exchanges involve only a single buyer~\cite{2013-NIPS-Amin,2014-NIPS-Mohri,2014-NIPS-Amin,2017-WWW-Drutsa}, and, in this case, a second-price auction with reserve reduces to a \emph{posted-price auction}~\cite{2003-FOCS-Kleinberg} where the seller sets a reserve price for a good (e.g., an advertisement
space) and the buyer decides whether to accept or reject it (i.e., to bid above or below the price).

In our study, we focus on a scenario in which the seller \emph{repeatedly} interacts through a posted-price mechanism with the \emph{same} strategic buyer that holds a \emph{fixed} private valuation for a good and seeks to maximize his cumulative  surplus.
At each round of this game, the seller is able to chose the price based on previous  decisions of the buyer: he applies a deterministic online learning algorithm announced to the buyer in advance~\cite{2014-NIPS-Mohri}.
While previous studies on this scenario~\cite{2013-NIPS-Amin,2014-NIPS-Mohri,2017-WWW-Drutsa} provide the seller with pricing algorithms that guarantee \emph{lower bounds} on his cumulative revenue  for any buyer valuation (via worst-case strategic regret minimization), we search for pricing algorithms that \emph{exactly} maximize the \emph{expectation} of the seller's cumulative revenue over a given distribution of buyer valuations.
The cumulative utilities (surplus for the buyer and revenue for the seller) are considered as discounted sums of corresponding instant utilities gained at each round, what allows us to cover a wide range of games (including the ones with infinite number of rounds and finite games without discounting).

We start our study from addressing the case when both the seller and the buyer have the same discount.
We show that the constant pricing algorithm with the Myerson price $p^\ast=\argmax_pH_D(p),$ where $ H_D(p) = p \cdot \Prob_{V\sim D}[V\ge p]$ and $D$ is the valuation distribution,  maximizes our optimization objective (see Theorem~\ref{maintheorem}). 
This result tells us that any dynamic learning of prices based on previous  decisions of the buyer can \emph{not} increase the expected cumulative revenue of the seller with respect to a much simpler approach that offers the optimal constant price over all rounds.
Further we also show that the above mentioned optimal pricing is not unique. 
Namely, there exists an optimal pricing algorithm (referred to as ``big deal") that proposes the following choice to the buyer: pay a large price at the first round and get all goods in the subsequent rounds for free, otherwise get nothing (see Prop.~\ref{prop_non_uniqu}).
The same discount for both participants of the game assumes that we do not give any advantage to each of them over the other one. However, in many real applications, there exists an imbalance between the sides in the patience to wait for utility. This asymmetry is often modeled by different discounts  for them~\cite{2013-NIPS-Amin,2014-NIPS-Amin,2014-NIPS-Mohri}.
In our work, we address both the case of less patient seller and the case of less patient buyer.

First, in the case when the buyer's discount rate is larger than the seller's one, we find that the algorithm ``big deal" with a specific price at the first round can still be effectively applied by the seller (i.e., with optimal outcome). Namely, it allows the seller to  ``accumulate" all his revenue at the first round and, in this way, to avoid the uncomfortable discounting in the future rounds; this discount makes the constant algorithm with Myerson's price suboptimal (see Sec.~\ref{sec_NonEqualDisc_BgS}).
Second, in the inverse case, when the buyer's discount rate is lower than the seller's one, the optimization problem becomes surprisingly more complicated. In this case, we reduce it to the optimization of a bilinear form in $\vb = \{v_j\}_j$ and $\{\Prob_{V\sim D}[V \ge v_i]\}_j$   (see Theorem~\ref{th_problem_equivalence}). 
This functional constitutes a multivariate analogue of the one-dimensional function  $H_D(p) $ widely used in static auctions to find the optimal pricing. 
Our reduction does not admit a closed form solution in general, but  allows to find the optimal algorithm by means of state-of-the-art numerical optimization techniques (e.g., gradient ones).
In contrast to the previous cases, the optimal algorithm in this case of less patient buyer is non-trivial and its prices depend on both the valuation distribution and the discounts. 
Finally, we numerically solve the above mentioned reduced problem for a series of representative discounts and analyze properties of  the obtained optimal algorithms (see Sec.~\ref{sec_NonEqualDisc_BlS}).  In this way, we show, in particular, that an optimal algorithm may  be non-consistent\footnote{A consistent algorithm never sets prices lower (higher) than earlier accepted (rejected, resp.) ones.} and provides revenue larger than the constant algorithm with Myerson's price.


The most important conclusion consists in the following. 
Only in the case of equal discounts, the seller cannot advantageously use the ability to change prices in dynamic fashion (i.e., to learn them) w.r.t.\ the static approach. But, both in the case when the seller is far more ready to wait for revenue than the buyer, and, more surprisingly, in the inverse case, the seller can boost his revenue w.r.t.\ the one obtained by the optimal constant algorithm. 
Overall, the above described thorough study of optimal pricing algorithms for repeated auctions with different discounts constitutes the main contribution of our work.
The ideas behind our techniques of theoretical analysis are simple and, to the best of our knowledge, novel; they might thus be used for future foundations of repeated auctions, e.g., the ones with multiple buyers.




\section{Preliminaries, problem statement and related work}
\label{sec_Prelim}
\subsection{Setup of repeated posted-price auctions}
\label{subsec_Setup}
We consider the following standard mechanism of \emph{repeated posted-price auctions}~\cite{2013-NIPS-Amin,2014-NIPS-Mohri,2016-SSRN-Chen,2017-WWW-Drutsa,2017-ArXiV-Drutsa}.
The seller repeatedly proposes goods (e.g., advertisement spaces) to a single buyer over a sequence of rounds (one good per round). 
The buyer holds a \emph{fixed private valuation} $v \in [0; +\infty)$ for a good, i.e., the valuation $v$ is unknown to the seller and is equal for goods offered in all rounds. 
At each round $t \in \mathbb{N}$, the seller offers a price $p_t$ for a good, and the buyer makes his allocation decision $a_t \in \{0, 1\}$: to buy the currently offered good ($a_t = 1$), or not ($a_t = 0$). 
In our setting, the seller's price $p_t, t\in\mathbb{N},$ depends on the previous answers $a_1, .., a_{t - 1}$ of the buyer (a.k.a.\ the history up to the round $t$), i.e., the seller uses \emph{a pricing algorithm} $\A$ to set prices in the deterministic online learning manner~\cite{2013-NIPS-Amin,2014-NIPS-Mohri,2017-WWW-Drutsa}. 
The sequence of the buyer's answers is denoted by $\astr = \{a_t\}_{t = 1}^\infty$ and is referred to as \emph{a buyer strategy}. 

Hence, given an algorithm $\A$ and a strategy $\astr$, the price sequence $\{p_t\}_{t = 1}^\infty$ is uniquely determined. 
The \emph{instant surplus} $a_t (v - p_t)$ and the \emph{instant revenue} $a_t p_t$ are thus gained by the buyer and the seller, respectively, at each round $t\in\mathbb{N}$. 
An instant surplus (or revenue) obtained in different rounds may contribute differently to the total (cumulative) profit of the buyer (or the seller, respectively).
We model this by discount factors $\g^\ttB_t$ and $\g^\ttS_t$ at each round $t\in\mathbb{N}$ and get \emph{the total discounted surplus} and \emph{the total discounted revenue} of the following form:
\vspace{-0.2cm}
\begin{equation}
	\label{eq_Sur_Rev}
	\Sur_{\gb^\ttB}(\A, v, \astr) := \sum_{t = 1}^{\infty}\g^\ttB_t a_t (v - p_t)  \quad \text{and}\quad  \Rev_{\gb^\ttS}(\A, \astr) := \sum_{t = 1}^{\infty} \g^\ttS_t a_t p_t , \quad \hbox{respectively.}
\vspace{-0.2cm}
\end{equation}
We assume that \emph{the discount sequences} $\gb^\ttB = \{\g^\ttB_t\}_{t = 1}^\infty$ and $\gb^\ttS = \{\g^\ttS_t\}_{t = 1}^\infty$ are non-negative, $\g^\ttB_{t},\g^\ttS_{t}\ge0$, $\fa t\in\mathbb{N}$, and the series converges, $\Gamma^\ttB\!\!:=\!\!\sum_{t = 1}^{\infty} \g^\ttB_t, \Gamma^\ttS\!\!:=\!\!\sum_{t = 1}^{\infty} \g^\ttS_t < \infty$. We also assume that there are no zeros between positive numbers in the sequences $\gb^\ttB$ and $\gb^\ttS$.
Note that discounts allow us to consider a general setting, which covers a wide range of cases including finite games without discounting (i.e., $\g^\ttB_t = \g^\ttS_t=\Ind_{\{t\le T\}}$\footnote{$\Ind_{B}$ denotes the indicator of the condition $B$, i.e.,  $\Ind_{B} = 1$, when $B$ holds, and $0$, otherwise.} for some horizon $T\in\mathbb{N}$) and  infinite games with 
discount rates that decrease geometrically (i.e., $\g^\ttB_t = \g^\ttS_t=\g^{t-1}$ for some $\g\in(0,1)$)~\cite{2013-NIPS-Amin}.

Both the seller and the buyer may have the same discount ($\gb^\ttB_t = \gb^\ttS_t$), which is a reasonable assumption since it does not give any privilege to each party over the other one. For instance, money inflation, a common interpretation of the discount factor, affects the preferences of both participants for current
gains versus future ones equally.
The case when the discounts are different ($\gb^\ttB_t \neq \gb^\ttS_t$) is important for real applications as well~\cite{2013-NIPS-Amin}.
The discounting can also be considered as a model for uncertainty of the participants about the total number of rounds of their interaction (i.e., the factor $\g_t$ is a priori probability that repeated auctions will last exactly $t$ rounds).


Following a standard assumption in mechanism design, which matches the practice in ad exchanges~\cite{2014-NIPS-Mohri}, the pricing algorithm $\A$, used by the seller, \emph{is announced to the buyer in advance}~\cite{2013-NIPS-Amin,2017-WWW-Drutsa}. In this case, the buyer is able to act strategically against this algorithm, i.e., to chose \emph{the optimal strategy} $\astr^{Opt}(\A, v, \gb^\ttB)$ in the set of all possible strategies $\Str := \{0, 1\}^\mathbb{N}$, i.e., $ \astr^{Opt}(\A, v, \gb^\ttB) = \argmax_{\astr \in \Str} \Sur_{\gb^\ttB}(\A, v, \astr)$\footnote{\label{fn_SRev}We show existence of the maximum in Appendix~\ref{app_subsec_optstrat_exist}. If there is a tie, i.e., more than one optimal strategy, the buyer selects one of them arbitrary (as in~\cite{1981-MOR-Myerson,2009-Book-Krishna}).},
This leads us to the definition of the \emph{strategic revenue} of the pricing algorithm $\A$, which faces the strategic buyer with a valuation $v\in[0,\infty)$: 
\begin{equation}
	\label{eq_SRev}
	\SRev_{\gb^\ttS,\gb^\ttB}(\A, v): = \Rev_{\gb^\ttS}(\A, \astr^{Opt}(\A, v, \gb^\ttB)).
\end{equation}

\subsection{Notation and auxiliary definitions}
\label{subsec_Notations}
Following~\cite{2003-FOCS-Kleinberg,2014-NIPS-Mohri,2017-WWW-Drutsa}, we associate a deterministic pricing algorithm with a complete infinite binary tree $\T$ in which each vertex is labeled with a price. The algorithm offers the price from a current node (starting from the root) and moves to the left  (right) child of the node if the buyer answers $a_t=0$ ($=1$, respectively).
Clearly, buyer decisions at rounds $1,..,t$ encode bijectively paths from the root to tree nodes and, thus, nodes as well. Hence, we apply short notations for the nodes by means of the dictionary of finite strings $\N := \{\nO, \nI\}^\ast$: the root is the empty string $\estr$, its left child is $\nO$, the right one is $\nI$, the right child of $\nO$ is $\nO\nI$, etc. (e.g., $\nO^k$ denotes the string of $k$ zeros).
Similarly, we denote buyer strategies by infinite strings from the alphabet $\{\nO, \nI\}$\footnote{We purposely use different outline of the numbers \emph{zero} and \emph{one} to distinguish their use in numerical expressions (as $0$, $1$) and their use in strings that encode nodes or strategies (as elements of the alphabet $\{\nO, \nI\}$).} to save space (e.g., the buyer that follows $\nI\nO^\infty$ accepts the price  at the first round, $a_1=1$, and rejects all remaining ones, $a_t=0,t>1$).
Overall, the set of pricing algorithms $\Alg$ is equivalent to the set of mappings from the nodes $\N$ to $[0; +\infty)$, and we use thus  them interchangeably: $\Alg = [0; +\infty)^\N$. The price of an algorithm $\A\in\Alg$ offered at a node $\n\in\N$ is denoted by $\A(\n)$.

\subsection{Problem statement}
Let possible buyer valuations be distributed on $[0,+\infty)$ according to some distribution $D$, i.e., the buyer valuation $v$ (fixed over all rounds) is a realization of a random variable $V\sim D$. 
Following a standard assumption in classical auction theory~\cite{2007-Book-Nisan,2009-Book-Krishna}, \emph{the valuation distribution} $D$ is known by the seller.
We also assume that the distribution $D$ has finite expectation, i.e., $\Expect_{V\sim D}[V]<\infty$, and is continuous; these assumptions are standard in auction theory as well~\cite{1981-MOR-Myerson,2009-Book-Krishna}.
So, we consider the problem of finding a pricing algorithm  $\A^\ast \in \Alg$ that maximizes the expected strategic revenue\footnote{Note that, in repeated auctions, revenue is usually compared to the one that would have been earned by offering the buyer's valuation $v$ if it was known in advance to the seller, resulting in the notion of the strategic regret $\SReg_{\gb^\ttS,\gb^\ttB}(\A, v):= \Gamma^\ttS v \!-\!  \SRev_{\gb^\ttS,\gb^\ttB}(\A, v)$. Regret is a powerful instrument to obtain lower bounds on revenue~\cite{2003-FOCS-Kleinberg,2013-NIPS-Amin,2017-WWW-Drutsa}, but, in our setup, minimization of the expected strategic regret is equivalent to our problem.}: $\Expect_{V\sim D} [ \SRev_{\gb^\ttS,\gb^\ttB}(\A, V) ] \rightarrow \max$.

From a game-theoretic view,  we consider a two-player non-zero sum repeated game with incomplete information and unlimited supply in which the seller commits to the pricing (since he announces the algorithm before the auctions take place).
An attentive reader may also note that, due to the commitment and the presence of only one buyer, our setting can be formalized as a two stage game. 
The common knowledge here are the discounts $\gb^\ttB $, $\gb^\ttS $,  and the prior distribution $D$ of the private valuation $V$, while the realization $v$ of $V$ is known only by the buyer. 
At the first stage, the seller picks a pricing algorithm $\A\in\Alg$, his choice is announced to the buyer; 
at the second stage, the buyer picks a buyer strategy $\astr\in\Str$. 
The buyer's utility is the surplus and the seller's one is the expected revenue (see Eq.~(\ref{eq_Sur_Rev})). 
Thus, if some pricing $\A^*\in\Alg$ is a solution to our problem, then the pair $(\A^*,\astr^{Opt}(\A^*,v,\gb^\ttB))$ will be an equilibrium of above described game.



\begin{remark}
	\label{remark_first_discount}
	Note that both an optimal buyer strategy and an optimal algorithm will remain optimal, if the discount $\gb^\ttB$ or $\gb^\ttS$ is multiplied by any positive constant. Hence, from here on in our paper we assume w.l.o.g. that $\gamma^\ttB_1 = 1$ and $\gamma^\ttS_1 = 1$.
\end{remark}

\subsection{Related work}
\label{subsec_RelWork}
Optimization of seller revenue  in auctions was generally reduced to a selection of proper reserve prices for buyers\footnote{Of course, there are other options to optimize revenue like quality scores for advertisements in ad auctions~\cite{2013-IJCAI-He}, but they are significantly less popular. And, surely, revenue optimization was also considered in other contexts such as trade-offs between  auction stakeholders~\cite{2014-WWW-Goel} or between auction properties (e.g., simplicity, expressivity~\cite{2015-NIPS-Morgenstern}, and revenue monotonicity~\cite{2014-WWW-Goel}). } (e.g., in VCG~\cite{1981-MOR-Myerson}, GSP~\cite{2014-ECRA-Sun}, and other auctions~\cite{2016-WWW-Paes}).
In such setups, these prices usually depend on distributions of buyer bids or valuations~\cite{1981-MOR-Myerson} and was in turn estimated by machine learning techniques~\cite{2013-IJCAI-He,2014-ECRA-Sun,2016-WWW-Paes}, while alternative approaches learned reserve prices directly~\cite{2014-ICML-Mohri,2017-NIPS-Medina}.
In contrast to these works, we consider an online deterministic learning framework for repeated auctions.

Revenue optimization for repeated auctions was mainly concentrated on algorithmic reserve prices, that are updated in online fashion over time, and was also known as dynamic pricing, see the extensive survey~\cite{2015-SORMS-den-Boer} on this field.
Oh the one hand, dynamic pricing was studied under game-theoretic view in context of different aspects such as
budget constraints~\cite{2015-ManagSci-Balseiro,2016-EC-Balseiro},
mean field equilibria~\cite{2011-ECOMexch-Iyer,2015-ManagSci-Balseiro}, 
strategic buyer behavior~\cite{2015-EC-Chen,leme2012sequential},
multi-period contracts~\cite{1985-EL-Besanko}, etc.
A series of studies~\cite{1993-JET-Schmidt,2015-SODA-Devanur,2017-EC-Immorlica} close to ours considered repeated sales where the seller does not commit for its pricing policy (in contrast to our setting), what required thus special approaches (such as the concept of perfect Bayesian equilibrium) to address the revenue optimization problem. That studies showed that the seller earns less in settings without commitment than with it.
Another line of works like~\cite{pavan2014dynamic,2013-OR-Kakade} studied auction environment settings of a general form and was aimed to find revenue optimal mechanisms that are incentive compatible (truthful). In contrast to these studies, we consider a specific mechanism of repeated posted-price auctions and do not require its truthfulness (e.g., the algorithms in Sec.~\ref{subsec_finite_game_studies} and~\ref{subsec_infinite_game_studies}).
Finally, our work can be considered as further development of classical auction theory~\cite{2007-Book-Nisan,2009-Book-Krishna}: in particular, in the case of a more patient seller, to address the optimal pricing problem we derive a multidimensional optimization functional, defined in Eq.~(\ref{prop_problem_reduction_eq_2}), which is a multivariate analogue of the classical one, $p \cdot \Prob_{V\sim D}[V\ge p]$, used to determine the optimal reserve price in static auctions.
Overall, the optimal pricing in our scenario of repeated posted-price auctions with different discounts for the seller and the buyer, to the best of our knowledge, was never considered in existing studies, and we believe that the key ideas behind our analysis may be used for future foundation on repeated auctions.

Oh the other hand,  revenue optimization in dynamic pricing was considered from algorithmic and learning approaches:
as bandit problems~\cite{2011-COLT-Amin,2015-NIPS-Zoghi,2015-NIPS-Lin}
(e.g., UCB-like pricing~\cite{2015-TEC-Babaioff}, bandit feedback models~\cite{2016-JMLR-Weed});
from the buyer side (valuation learning~\cite{2011-ECOMexch-Iyer,2016-JMLR-Weed}, competition between buyers and optimal bidding~\cite{2014-WWW-Hummel,2016-JMLR-Weed}, interaction with several sellers~\cite{2016-ICML-Heidari}, etc.);
from the seller side against several buyers~\cite{2013-SODA-Cesa-Bianchi,2017-SSRN-Kanoria,2016-EC-Roughgarden,2016-NIPS-Feldman};
and a single buyer with stochastic valuation (myopic~\cite{2003-FOCS-Kleinberg,2016-SODA-Chawla} and strategic buyers~\cite{2013-NIPS-Amin,2014-NIPS-Amin,2014-NIPS-Mohri,2016-SSRN-Chen}, feature-based pricing~\cite{2014-NIPS-Amin,2016-EC-Cohen}, limited supply~\cite{2015-TEC-Babaioff}).

The most relevant studies from these works on online learning are \cite{2013-NIPS-Amin,2014-NIPS-Mohri,2017-WWW-Drutsa,2017-ArXiV-Drutsa}, where our scenario of the strategic buyer with a fixed private valuation is considered. 
Amin et al.~\cite{2013-NIPS-Amin} proposed to seek for algorithms that have the lowest possible upper bound on the strategic regret for the \emph{worst case} buyer valuation, i.e., $\sup_{v\in[0,1]}[\SReg_{\gb^\ttS,\gb^\ttB}(\A, v, \astr)]\le O(f(T))$, where $T$ is the finite game horizon. 
This problem was recently solved in~\cite{2017-WWW-Drutsa}, where the algorithm PRRFES with a tight regret bound in $\Theta(\log\log T)$ was proposed. Some extensions of this algorithm were proposed in~\cite{2017-ArXiV-Drutsa}.
In contrast to these studies, first, we search for a pricing algorithm that maximizes the  strategic revenue \emph{expected} over buyer valuations, i.e., $\Expect_v[\SRev_{\gb^\ttS,\gb^\ttB}(\A, v)]$,  (equivalently, s.t.\ $\Expect_v[\SReg_{\gb^\ttS,\gb^\ttB}(\A, v)]\rightarrow\min$), which matches the practice of ad exchanges and optimization goals in classical auction theory~\cite{2009-Book-Krishna}.
Second, our revenue optimization problem is solved \emph{exactly} (not approximately and not via optimization of lower/upper bounds).
Third, our study considers a more general setup in which not only the buyer's surplus is discounted over rounds, but also the seller's revenue does.

%

\section{Constant pricing algorithms}
\label{sec_ConstAlg}

We start investigation of the problem from study of \emph{constant algorithms}, i.e., such algorithms that propose only one price over all rounds independently of the buyer's decisions.  
\begin{definition}
	A pricing algorithm $\A$ is said to be \emph{constant}, if there exists a price $ p \in [0; +\infty)$ s.t., at each node $\n \in \N$, the algorithm's price $\A(\n)$  equals $p$. This price $p$ is referred to as \emph{the algorithm price} and  is denoted by $p(\A)$. \emph{The set of all constant algorithms} is denoted by $\Alg_0 \subset \Alg$.
\end{definition}


Note that 
since a constant algorithm $\A \in \Alg_0$ offers a price $p=p(\A)$ that is independent of buyer decisions, the buyer has no incentive to lie and behaves thus truthfully. Hence, the buyer either rejects the price all the rounds, or accepts it (in our notations, applies the strategy $\nO^\infty$ or $\nI^\infty$, resp.) depending on whether his valuation $v$ is lower than $p$ or not. 
Since $\Rev_{\gb^\ttS}(\A, \nO^\infty)=0$ and $\Rev_{\gb^\ttS}(\A, \nI^\infty)=p\sum_{t = 1}^{\infty} \g^\ttS_t$, the expectation of the strategic revenue of the constant algorithm $\A$ is
\begin{equation*}
	\Expect_{V\sim D} \left[ \SRev_{\gb^\ttS,\gb^\ttB}(\A, V) \right] = \Prob [V < p] \cdot \Rev_{\gb^\ttS}(\A, \nO^\infty) + \Prob [V \ge p]  \cdot \Rev_{\gb^\ttS}(\A, \nI^\infty) = \Prob [V \ge p] \cdot p \cdot \Gamma^\ttS.
\end{equation*}
It is easy to see that a constant algorithm $\A$ is optimal if its price $p(\A)$ is the global maximum point of the function $H_D(p) := \Prob [V \ge p] \cdot p$, which is well known in the theory of non-repeated auctions~\cite{1981-MOR-Myerson,2007-Book-Nisan,2009-Book-Krishna}.
The existence of a global maximum point of $H_D(p)$ for our distribution $D$ is shown in Appendix~\ref{app_subsec_globmax_H_D}, and we refer to the leftmost one of them as \emph{the Myerson price} $p^\ast(D)$~\cite{1981-MOR-Myerson}.
Note that this price can be find via the first-order necessary condition $p = (1-F_D(p))/f_D(p)$, when the distribution $D$ has continuous probability density $f_D$ ($F_D$ is its cumulative distribution function).


\begin{definition}
	The constant algorithm $\A \in \Alg_0$ with the price $p(\A)$ equal to the  Myerson price $p^\ast(D)$ of the distribution $D$ is called \emph{the optimal constant algorithm} and is denoted by $\A^\ast_D$. 
\end{definition}

\section{Equal discounts of the seller and the buyer}
\label{sec_EqDiscounts}
In this section, we study the case when the seller and the buyer  discount their utilities equally, i.e., $\gb := \gb^\ttS = \gb^\ttB$, and we use the following notation for the strategic revenue: $\SRev_\gb := \SRev_{\gb, \gb}$.
First of all, we summarize some useful properties of surplus and revenue as functions of the valuation $v$.
\begin{remark}
	\label{remark_strat_prop}
	Let a pricing algorithm $\A\in\Alg$ and the discount sequence $\gb$ be given. For simplicity, we will use the following short notations of surpluses as mappings from the valuation domain:
	$S_{\astr}(v):=\Sur_\gb(\A,v,\astr)$ and $S(v):=\Sur_\gb(\A,v,\astr^{Opt}(\A, v, \gb))$, for which the following  hold:
	\begin{enumerate}
		\item for each strategy $\astr \in \Str$, the surplus $S_{\astr}$ w.r.t.\ this strategy is a linear function of $v$ of the form $S_{\astr}(v)=q_{\astr} v - r_{\astr}$, where $q_{\astr} = \sum_{t=1}^\infty\g_ta_t$ is the \emph{discounted quantity} of purchased goods and $r_{\astr}$ is the discounted revenue of the seller (i.e., $r_{\astr} = \Rev_\gb(\A,\astr)$);
		\item the strategic (optimal) surplus $S$ is convex as a function of $v$, because it is the maximum of a set of linear functions: $S(v)=\max_{\astr\in\Str}S_{\astr}(v)$ (by definition);
		\item the strategic surplus $S(v)$ is non-negative for any $v \ge 0$ since, for the strategy $\astr = \nO^\infty$, we have $ S_\astr(v) = 0$, which implies in turn that $S(v) \ge S_\astr(v) = 0, \fa v\ge 0$;
		\item the derivative $S'(v)$ exists for almost all $v \in[0; +\infty)$ (i.e., it does not exist on a set of  Lebesgue measure zero), because $S(v)$ is convex and is thus absolutely continuous.
		
	\end{enumerate}
\end{remark}


\begin{lemma}\label{Rlemma}
	For any pricing algorithm $\A\in\Alg$, the strategic revenue $R(v):=\SRev_\gb(\A,v)$ is increasing on the valuation domain $[0; +\infty)$, it starts from zero (i.e., $R(0) = 0$), and the random variable $R(V)$ has thus finite non-negative expectation (i.e., $0 \le \Expect\left[R(V)\right] < +\infty$). 
\end{lemma}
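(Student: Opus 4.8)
The plan is to derive everything from the structural facts about the strategic surplus $S(v)$ collected in Remark~\ref{remark_strat_prop}, together with the linear decomposition $S_{\astr}(v) = q_{\astr} v - r_{\astr}$. The key observation is that the strategic revenue $R(v)$ can be recovered from the slope and value of the optimal surplus: if $\astr = \astr^{Opt}(\A,v,\gb)$ is an optimal strategy at valuation $v$, then $S(v) = q_{\astr} v - r_{\astr}$ and $r_{\astr} = R(v)$, so at any point of differentiability we have $S'(v) = q_{\astr}$ and hence $R(v) = q_{\astr} v - S(v) = S'(v)\, v - S(v)$. First I would make this precise: by convexity $S$ is absolutely continuous, $S'$ exists a.e., and at a point $v$ where $S'(v)$ exists the supporting line of the graph of $S$ at $v$ is unique, forcing $q_{\astr} = S'(v)$ for every optimal strategy $\astr$ (a subgradient argument). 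This gives the clean formula $R(v) = v\,S'(v) - S(v)$ for a.e.\ $v$.

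Next I would prove monotonicity. Take $0 \le v_1 < v_2$ with both points being points of differentiability of $S$ (the complement has measure zero, but in fact a direct comparison argument avoids even this). Let $\astr^{(1)}, \astr^{(2)}$ be optimal strategies at $v_1, v_2$ with discounted quantities $q_1 = S'(v_1)$, $q_2 = S'(v_2)$ and revenues $r_1 = R(v_1)$, $r_2 = R(v_2)$. Optimality at $v_1$ gives $q_1 v_1 - r_1 \ge q_2 v_1 - r_2$, and optimality at $v_2$ gives $q_2 v_2 - r_2 \ge q_1 v_2 - r_1$; adding these yields $(q_2 - q_1)(v_2 - v_1) \ge 0$, i.e.\ $q_1 \le q_2$, which is just convexity again. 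Now $r_2 - r_1 = (q_2 v_2 - S(v_2)) - (q_1 v_1 - S(v_1))$; rewriting $r_1 = q_1 v_1 - S(v_1)$ and using the inequality $q_1 v_1 - r_1 \ge q_1 v_2 - r_2$ (this follows from the $v_2$-optimality inequality above, since $q_2 v_2 - r_2 \ge q_1 v_2 - r_1$ rearranges to $r_1 - r_2 \ge q_1(v_2 - v_1) - (q_2 v_2 - q_1 v_2)$... ) — more cleanly: from $q_2 v_2 - r_2 \ge q_1 v_2 - r_1$ we get $r_2 - r_1 \le (q_2 - q_1) v_2$, and from $q_1 v_1 - r_1 \ge q_2 v_1 - r_2$ we get $r_2 - r_1 \ge (q_2 - q_1) v_1 \ge 0$ since $q_2 \ge q_1$ and $v_1 \ge 0$. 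Hence $R(v_2) = r_2 \ge r_1 = R(v_1)$, proving $R$ is non-decreasing; I would then note strict monotonicity ("increasing") holds in the weak sense used in the paper, or argue it on the range where $S$ is strictly convex.

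For $R(0) = 0$: at $v = 0$ the strategy $\nO^\infty$ gives surplus $0$, and by Remark~\ref{remark_strat_prop}(3) $S(0) \ge 0$; conversely any strategy $\astr$ has $S_{\astr}(0) = -r_{\astr} \le 0$ because revenue $r_{\astr} = \Rev_\gb(\A,\astr) \ge 0$ (all prices and discounts are non-negative). So $S(0) = 0$ and the optimal strategy at $v=0$ has $r_{\astr} = 0$, i.e.\ $R(0) = 0$. Finally, for the expectation bound: $R(V) \ge 0$ a.s.\ gives $\Expect[R(V)] \ge 0$. For finiteness, bound $R(v) \le \Gamma^\ttS v$ pointwise — the discounted revenue never exceeds offering price $v$ every round when the buyer buys, formally $r_{\astr} = \sum_t \g^\ttS_t a_t p_t \le \sum_t \g^\ttS_t a_t v \le \Gamma^\ttS v$ whenever $\astr$ is a surplus-nonnegative strategy (if some $p_t > v$ on the purchased rounds the buyer would do strictly better dropping those rounds, so WLOG $p_t \le v$ on bought rounds for the optimal strategy; alternatively use $R(v) = vS'(v) - S(v) \le vS'(v) \le v\,\Gamma^\ttS$ since $S'(v) \le \Gamma^\ttS$ as the maximal possible discounted quantity). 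Then $\Expect[R(V)] \le \Gamma^\ttS \Expect[V] < \infty$ by the standing assumption $\Expect_{V\sim D}[V] < \infty$.

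The main obstacle I anticipate is the bookkeeping around optimal strategies when ties occur (more than one optimal $\astr$ at a given $v$) and around points where $S$ is not differentiable: I want the identity $R(v) = vS'(v) - S(v)$ and the comparison inequalities to be valid without having to invoke differentiability, so the cleanest route is to phrase everything via the two "revealed-preference" inequalities between $v_1$ and $v_2$ above, which only use the definition of $\astr^{Opt}$ as a maximizer and never differentiate $S$. Differentiability and the explicit formula for $R$ are then a pleasant corollary rather than a load-bearing step, and finiteness of the expectation follows from the elementary envelope bound $R(v) \le \Gamma^\ttS v$.
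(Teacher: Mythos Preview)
Your proof is correct and follows essentially the same route as the paper: for monotonicity the paper uses the identical pair of optimality inequalities but finishes geometrically, evaluating the two lines $S_{\astr^1}, S_{\astr^2}$ at $v=0$ to compare intercepts $-r_{\astr^1} \ge -r_{\astr^2}$, whereas you first extract $q_1 \le q_2$ and then read off $r_2 - r_1 \ge (q_2 - q_1)v_1 \ge 0$; the arguments for $R(0)=0$ and for the bound $R(v) \le \Gamma v$ match the paper's. One caution: your first justification that accepted prices satisfy $p_t \le v$ (``the buyer would do strictly better dropping those rounds'') is invalid for an adaptive algorithm, since declining at round $t$ changes all subsequent prices; your alternative bound $R(v) = Q(v)v - S(v) \le Q(v)v \le \Gamma v$ (using $S(v)\ge 0$ and $Q(v)\le \Gamma$) is exactly the paper's argument and requires neither that claim nor differentiability of $S$.
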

\begin{proof}
	We prove only the first claim since the utilized technique will be useful further.
	The other claims are quite simple and are deferred to Appendix~\ref{app_subsec_proof_RLemma} due to space constraints.
	For any two valuations $v_1$ and $v_2\in[0; +\infty)$ s.t. $v_1<v_2$, and two corresponding optimal strategies $\astr^1$ and $\astr^2\in\Str$, i.e., such that  $S(v_j) = S_{\astr^j}(v_j)$, $j=1,2,$ (using the notations from Remark~\ref{remark_strat_prop}),  we have 
	\begin{equation*} \label{increasepair}
		S_{\astr^1}(v_1) \ge S_{\astr^2}(v_1) \quad \hbox{and}
		\quad S_{\astr^2}(v_2) \ge S_{\astr^1}(v_2).
	\end{equation*}
	Therefore, since $S_{\astr^j}, j=1,2,$ are linear, they either coincide (then $r_{\astr^1}=r_{\astr^2}$), or have an intersection point $w$ in $[v_1,v_2]\!\subset\![0; +\infty)$. In the latter case, one gets $S_{\astr^1}(v)\!\ge\!S_{\astr^2}(v) \fa v\in [0,w]$, which implies $-r_{\astr^1}\ge -r_{\astr^2}$ when $v=0$. Hence, we obtain $R(v_2)=r_{\astr^2}\ge r_{\astr^1}=R(v_1)$ for any $v_2>v_1\ge 0$.
\end{proof}

Similarly to the optimal surplus function $S(\cdot)$ and the strategic revenue one $R(\cdot)$, we introduce \emph{the strategic purchased quantity} $Q(\cdot)$ as a map from the valuation domain, i.e., $Q(v) := \sum_{t = 1}^\infty \g_t a^O_t(v)$, where $\{a^O_t(v)\}_{t = 1}^\infty=\astr^{Opt}(\A, v, \gb)$. Note that $S(v)=Q(v)v-R(v)$, for each $v\in[0,+\infty)$.

\begin{lemma}\label{QSlemma}
	Assume that, for a given $v \ge 0$, the derivative $S^\prime(v)$ exists. Then, $Q(v)$ is uniquely defined and equals to $S^\prime(v)$  for any optimal strategy $\astr$ of the buyer that holds the valuation $v$. 
\end{lemma}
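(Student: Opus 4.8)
The plan is to exploit the structure established in Remark~\ref{remark_strat_prop}: each surplus-per-strategy function $S_\astr(v) = q_\astr v - r_\astr$ is affine with slope $q_\astr$, and the strategic surplus $S(v) = \max_{\astr \in \Str} S_\astr(v)$ is their upper envelope, hence convex. The quantity $Q(v)$ associated with an optimal strategy $\astr$ at valuation $v$ is precisely the slope $q_\astr$ of the affine function $S_\astr$ that realizes the maximum at $v$. So the statement amounts to the familiar fact that, at a point of differentiability of a convex function given as a supremum of affine functions, every supporting affine function that is tight at that point has slope equal to the derivative — and in particular all such tight affine functions share the same slope there.

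Concretely, first I would fix $v \ge 0$ where $S'(v)$ exists, and let $\astr$ be any optimal strategy for that valuation, so $S(v) = S_\astr(v) = Q(v)\,v - r_\astr$ with $Q(v) = q_\astr = \sum_{t=1}^\infty \g_t a_t$. Since $S_\astr$ is one of the functions in the family whose maximum defines $S$, we have the global inequality $S(u) \ge S_\astr(u) = Q(v)\,u - r_\astr$ for all $u \ge 0$, with equality at $u = v$. Thus the affine function $u \mapsto S_\astr(u)$ is a global supporting line of the convex function $S$ at $v$; its slope $Q(v)$ is therefore a subgradient of $S$ at $v$, i.e., $Q(v) \in \partial S(v)$. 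Next, because $S'(v)$ is assumed to exist, the subdifferential $\partial S(v)$ is the singleton $\{S'(v)\}$ (standard convex-analysis fact: a convex function on an interval is differentiable at an interior point iff its subdifferential there is a single point, and at the boundary point $v=0$ one argues with the one-sided derivative, which also forces uniqueness when the two-sided derivative exists by hypothesis). Hence $Q(v) = S'(v)$. Since this identification holds for \emph{every} optimal strategy $\astr$ at $v$, the value $Q(v)$ is the same regardless of which optimal strategy is chosen, which is exactly the claimed uniqueness.

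If one prefers to avoid invoking subdifferential calculus as a black box, the same conclusion follows by an elementary two-sided difference-quotient argument: from $S(u) \ge Q(v)u - r_\astr$ with equality at $u = v$, subtracting $S(v) = Q(v)v - r_\astr$ gives $S(u) - S(v) \ge Q(v)(u - v)$ for all $u$; dividing by $u - v > 0$ and letting $u \downarrow v$ yields $S'(v) \ge Q(v)$, while dividing by $u - v < 0$ and letting $u \uparrow v$ yields $S'(v) \le Q(v)$, so $S'(v) = Q(v)$.

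The only place requiring a little care — the main obstacle, such as it is — is the boundary case $v = 0$ and, more generally, making sure the one-sided limits used above are legitimate: they are, because $S$ is convex and finite on $[0,+\infty)$ (Remark~\ref{remark_strat_prop}), so its one-sided derivatives exist everywhere and the hypothesis ``$S'(v)$ exists'' means precisely that they coincide at $v$; the supporting-line inequality then pins both of them to $Q(v)$. No genuine difficulty arises, since the affine structure of each $S_\astr$ does all the work.
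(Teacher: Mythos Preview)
Your proposal is correct and takes essentially the same approach as the paper. The paper's proof is precisely your elementary difference-quotient alternative: it writes $S_\astr(v+\delta) - S(v+\delta) = (q_\astr - S'(v))\delta + o(\delta) \le 0$ for all $\delta$ (since $S_\astr \le S$ with equality at $v$), forcing $q_\astr = S'(v)$; your subdifferential framing is just a slightly more abstract packaging of the same supporting-line argument.
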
 
The proof of this lemma is simple and rather technical; it is also deferred to Appendix~\ref{app_subsec_proof_QSlemma} due to space constraints.
Lemma~\ref{QSlemma} together with the identity $\SRev_\gb(\A, v) \!=\! R(v)\!=\!  Q(v) v \!-\! S(v)$ gives us:
\begin{corollary}
	\label{corollary_uniqDefSRev}
	For almost all $v\in[0; +\infty)$, the strategic revenue $\SReg_\gb(\A,v)$  is uniquely defined for any optimal strategy $\astr$ of the buyer that holds the valuation $v$\footnote{Remind that the strategic revenue may not be uniquely defined (see Footnote~\ref{fn_SRev} near the definition of  the strategic revenue).}.
\end{corollary}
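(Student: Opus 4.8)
The plan is to deduce the statement directly from the identity $S(v)=Q(v)v-R(v)$ recorded just before the corollary, together with Lemma~\ref{QSlemma}. The key observation is that, among the three quantities $S(v)$, $Q(v)$, $R(v)$ attached to a valuation $v$, the optimal surplus $S(v)$ is manifestly independent of \emph{which} optimal strategy one selects — by definition it equals $\max_{\astr\in\Str}S_{\astr}(v)$, and every optimal strategy attains this common maximal value — so any ambiguity in the strategic revenue $R(v)=\SRev_\gb(\A,v)$ can only be inherited from ambiguity in the discounted purchased quantity $Q(v)$, and the latter is exactly what Lemma~\ref{QSlemma} controls.

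Concretely, I would first invoke item~4 of Remark~\ref{remark_strat_prop}: since $S$ is convex, the set $E\subset[0;+\infty)$ of valuations at which $S'$ fails to exist has Lebesgue measure zero. I would then fix an arbitrary $v\in[0;+\infty)\setminus E$ and an arbitrary optimal strategy $\astr$ of a buyer holding valuation $v$. By Lemma~\ref{QSlemma}, the discounted quantity $Q(v)$ realized along $\astr$ equals $S'(v)$, a number that does not depend on the choice of $\astr$. Substituting into $\SRev_\gb(\A,v)=R(v)=Q(v)v-S(v)=S'(v)\,v-S(v)$ exhibits the right-hand side as a quantity depending only on $v$, hence the same for every optimal strategy. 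Since $E$ is null, this is precisely the asserted uniqueness for almost all $v\in[0;+\infty)$.

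I do not expect a genuine obstacle here: the corollary is bookkeeping layered on top of Lemma~\ref{QSlemma}, and indeed the text already flags it as immediate. The only point I would state explicitly in the write-up is why $S(v)$ is unambiguous — namely that all optimal strategies by definition achieve the maximal surplus $S(v)$ at $v$ — since the entire argument rests on isolating the potential ambiguity into the $Q$-term and then eliminating it via the lemma.
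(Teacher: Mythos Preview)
Your proposal is correct and mirrors the paper's own reasoning exactly: the paper derives the corollary in one line from Lemma~\ref{QSlemma} together with the identity $\SRev_\gb(\A,v)=Q(v)v-S(v)$, which is precisely your argument. Your explicit remark that $S(v)$ is unambiguous because all optimal strategies attain the common maximum is the only detail the paper leaves implicit, and it is a welcome clarification.
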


\begin{remark} \label{Qremark}
	Function $Q(v)$ is defined almost everywhere and non-decreasing on its domain, since $Q^\prime(v) = S^{\prime \prime}(v)$, which also defined almost everywhere and not less than 0, since $S$ is convex on its domain\footnote{Note that this fact can be proved directly like in Lemma~\ref{Rlemma}.}. Also by the definition $Q(v) \le \Gamma$ and, thus, $Q(+\infty)$ is finite.
\end{remark}


\subsection{Optimality of the constant algorithm with the Myerson price}
\label{subsec_OptAlgProof}
We use notations for the distribution functions: $F(v) := \Prob[V \le v]$ and $G(v) := 1 - F(v) = \Prob[V > v]$.
\begin{lemma}
	\label{lemma_ExpR_to_Int_dQ_dS}
	For the mappings $S(v)$, $R(v)$, and $Q(v)$ the following identity holds:
	\begin{equation*}
		\label{eq_lemma_ExpR_to_Int_dQ_dS}
		\Expect \left[R(V)\right] = \int\limits_{[0; +\infty)} G(v) Q(v) dv + \int\limits_{[0; +\infty)} G(v) v dQ(v)  - \int\limits_{[0; +\infty)} G(v) dS(v).
	\end{equation*}
\end{lemma}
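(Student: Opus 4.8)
The plan is to start from the pointwise identity $R(v) = Q(v)v - S(v)$ (valid for almost all $v$, by Lemma~\ref{QSlemma} and the remark preceding it), and to compute $\Expect[R(V)] = \int_{[0;+\infty)} R(v)\, dF(v)$ by integrating by parts against the distribution $F$. Since $V\ge 0$, I would write $\Expect[R(V)] = \int_{[0;+\infty)} (Q(v)v - S(v))\, dF(v)$ and handle the two pieces separately. The natural tool is the Lebesgue--Stieltjes integration-by-parts formula $\int_{[0,\infty)} \phi(v)\, dF(v) = -\int_{[0,\infty)} G(v)\, d\phi(v)$ plus boundary terms, using $dF = -dG$ and $G(+\infty)=0$, $G(0)=\Prob[V>0]$.

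First I would treat $\int S(v)\, dF(v)$. Writing $dF=-dG$ and integrating by parts, $\int_{[0;+\infty)} S(v)\,dF(v) = -[S(v)G(v)]_0^{+\infty} + \int_{[0;+\infty)} G(v)\, dS(v)$. Here $S(0)=0$ (Remark~\ref{remark_strat_prop}, item 3, gives $S(0)\ge 0$; combined with $S_{\nO^\infty}\equiv 0$ and convexity one checks $S(0)=0$), so the boundary term at $0$ vanishes; the boundary term at $+\infty$ vanishes because $S(v)\le Q(+\infty)v \le \Gamma v$ grows only linearly while $\Expect[V]<\infty$ forces $vG(v)\to 0$. This yields $\int S\, dF = \int G\, dS$, which is the last term on the right-hand side (with its sign). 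Next, for $\int Q(v)v\, dF(v)$ I would apply the same device to the function $\phi(v) = Q(v)v$: its Stieltjes differential is $d(Q(v)v) = Q(v)\,dv + v\, dQ(v)$ (product rule for the product of the absolutely continuous $v$ and the non-decreasing $Q$; one must be slightly careful here since $Q$ may have jumps, so this is really $v\,dQ(v) + Q(v^-)\,dv$, but the $dv$-part is unaffected). Integration by parts gives $\int Q(v)v\, dF(v) = -[Q(v)v\,G(v)]_0^{+\infty} + \int G(v)\bigl(Q(v)\,dv + v\, dQ(v)\bigr)$. The boundary term at $0$ is zero, and at $+\infty$ it vanishes by the same $vG(v)\to 0$ argument together with boundedness of $Q$. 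Combining, $\Expect[R(V)] = \int G(v)Q(v)\,dv + \int G(v)v\,dQ(v) - \int G(v)\,dS(v)$, as claimed.

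The main obstacle, and the step I would be most careful about, is justifying the boundary terms at infinity and the validity of Stieltjes integration by parts for functions that are only known to be non-decreasing / convex (hence of bounded variation on compacta but a priori only $\sigma$-finite in total variation). I would make this rigorous by first integrating over $[0,N]$, establishing the identity with explicit boundary terms $S(N)G(N)$, $Q(N)N\,G(N)$, then letting $N\to\infty$: the decay $NG(N)\to 0$ follows from $\Expect[V] = \int_0^\infty G(v)\,dv < \infty$, and $S(N)/N$, $Q(N)$ stay bounded by $\Gamma$, so all boundary contributions vanish; convergence of the three integrals over $[0,N]$ to their counterparts over $[0,\infty)$ is monotone/dominated (each integrand is non-negative, using $G,Q\ge 0$ and $S$ non-decreasing so $dS\ge 0$). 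A secondary subtlety is the product rule $d(vQ(v)) = Q(v)\,dv + v\,dQ(v)$ when $Q$ is merely a non-decreasing function with possible jumps; I would either cite the standard Lebesgue--Stieltjes product formula or note that, since we only need the identity for the purpose of later optimization and $S$ (hence $Q=S'$ a.e.) is in fact continuous here because $S$ is convex, $Q$ has at most countably many jumps and the formula holds in the Stieltjes sense without an extra correction term.
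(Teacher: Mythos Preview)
Your proposal is correct and follows essentially the same route as the paper's proof: both arguments truncate to a finite interval, apply Lebesgue--Stieltjes integration by parts against $dF=-dG$, invoke the product rule $d(vQ(v))=Q(v)\,dv+v\,dQ(v)$, and then pass to the limit using the integrability of $V$ to kill the boundary terms. The only cosmetic difference is the order of operations: the paper integrates $\int R\,dF$ by parts in one stroke (using that $R$ is non-decreasing from Lemma~\ref{Rlemma}) and only afterwards writes $dR=d(Qv)-dS$, whereas you split $R=Qv-S$ first and handle the two pieces separately; this forces you to justify $S(0)=0$ and $S(N)G(N)\to 0$ in place of the paper's $R(0)=0$ and $R(N)G(N)\to 0$, but these are equivalent bookkeeping choices.

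One small remark on your discussion of the product rule: you worry about jumps of $Q$, but the Stieltjes identity $d(vQ(v))=v\,dQ(v)+Q(v)\,dv$ holds without correction as soon as one factor is continuous, and here the factor $v$ is trivially continuous; so no special argument about the countability of jumps of $Q$ is needed.
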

The proof is rather technical, relies on the properties of $S$, $R$, and $Q$ established in the above statements, and is thus deferred to Appendix~\ref{app_subsec_proof_lemma_dQ_dS}.

\begin{theorem} 
	\label{maintheorem}
	Assume the valuation $V \sim D$ and the discount sequence $\gb = \{\g_t\}_{t = 1}^\infty$ satisfy the aforementioned conditions (see Sec.~\ref{sec_Prelim}). Then the expected strategic revenue of an arbitrary pricing algorithm $\A \in \Alg$ is not greater than the one of the optimal constant algorithm $\A^\ast_D$: 
	\begin{equation} \label{maininequality}
		\forall \A \in \Alg \quad \hbox {we have} \quad \Expect \left[ \SRev_\gb(\A, V) \right] \le  \Expect \left[ \SRev_\gb(\A^\ast_D, V) \right].
	\end{equation}
\end{theorem}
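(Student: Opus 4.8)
The plan is to insert the integral identity of Lemma~\ref{lemma_ExpR_to_Int_dQ_dS} into a two-line estimate, after exploiting one cancellation. The key observation is that, by Lemma~\ref{QSlemma}, $S'(v)=Q(v)$ for almost every $v$; since $S$ is convex (Remark~\ref{remark_strat_prop}), satisfies $S(0)=0$, and obeys $0\le S(v)\le \Gamma v$ (each affine branch $S_{\astr}(v)=q_{\astr}v-r_{\astr}$ has $q_{\astr}\le\Gamma$ and $r_{\astr}\ge 0$, so the upper bound also forces $S$ to be continuous at the left endpoint), $S$ is absolutely continuous on $[0,+\infty)$ and hence $S(v)=\int_0^v Q(u)\,du$. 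Consequently the Lebesgue--Stieltjes measure $dS$ has no singular part and equals $Q(v)\,dv$, so in Lemma~\ref{lemma_ExpR_to_Int_dQ_dS} the term $\int_{[0,\infty)}G(v)Q(v)\,dv$ is cancelled exactly by $-\int_{[0,\infty)}G(v)\,dS(v)$, leaving $\Expect[R(V)]=\int_{[0,\infty)}G(v)\,v\,dQ(v)=\int_{[0,\infty)}H_D(v)\,dQ(v)$, where I use $H_D(v)=v\,G(v)$ together with $G(v)=\Prob[V\ge v]$ (continuity of $D$) and $\Expect[\SRev_\gb(\A,V)]=\Expect[R(V)]$.

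Next I would bound this against the constant-pricing value. By Remark~\ref{Qremark}, $Q$ is non-decreasing with $Q(+\infty)\le\Gamma$, so $dQ$ is a non-negative measure on $[0,+\infty)$ of total mass at most $\Gamma$ (a possible atom of $dQ$ at $v=0$ is harmless, since $H_D(0)=0$). Because $p^\ast(D)$ is a global maximizer of $H_D$, we have $H_D(v)\le H_D(p^\ast(D))$ for all $v\ge 0$, hence $\Expect[R(V)]=\int_{[0,\infty)}H_D(v)\,dQ(v)\le H_D(p^\ast(D))\int_{[0,\infty)}dQ(v)\le \Gamma\,H_D(p^\ast(D))$. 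Finally, by the computation preceding the definition of $\A^\ast_D$, a constant algorithm with price $p$ has expected strategic revenue $\Prob[V\ge p]\,p\,\Gamma=\Gamma\,H_D(p)$, so $\A^\ast_D$ attains exactly $\Gamma\,H_D(p^\ast(D))$; comparing the two displays yields inequality~(\ref{maininequality}) for an arbitrary $\A\in\Alg$.

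The one genuinely delicate point is the cancellation step: one must be certain that $dS=Q\,dv$ with no residual singular component, for otherwise the two ``$Q\,dv$''-type terms of Lemma~\ref{lemma_ExpR_to_Int_dQ_dS} would not annihilate and an uncontrolled negative term would remain. This is precisely where the convexity of $S$, together with $S(0)=0$ and the linear majorant $S(v)\le\Gamma v$, is used (the latter also securing continuity of $S$ at $0$ and hence absolute continuity on all of $[0,+\infty)$, not merely on the interior). After that, nothing more is needed: finiteness of $\Expect[R(V)]$ is already granted by Lemma~\ref{Rlemma}, and $\int_{[0,\infty)}H_D\,dQ$ is automatically finite since $0\le H_D\le H_D(p^\ast(D))$ and $dQ$ is a finite measure, so there is no further integrability bookkeeping to carry out.
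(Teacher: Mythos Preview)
Your proposal is correct and follows essentially the same route as the paper's proof: apply Lemma~\ref{lemma_ExpR_to_Int_dQ_dS}, use the absolute continuity of $S$ together with $S'=Q$ a.e.\ (Lemma~\ref{QSlemma}) to cancel the first and third integrals, and then bound $\int H_D(v)\,dQ(v)$ by $H_D(p^\ast(D))\cdot\Gamma$. Your write-up is, if anything, slightly more careful than the paper's about justifying that $dS$ has no singular part and about the possible atom of $dQ$ at $0$.
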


\begin{proof}[Proof of Theorem \ref{maintheorem}]
	
	Consider an arbitrary algorithm $\A \in \Alg$ and use the notations $S$, $R$, and $Q$ introduced above. From Lemma~\ref{lemma_ExpR_to_Int_dQ_dS}, we have
	\begin{equation}
		\label{eq_maintheorem_proof_1}
		\Expect \left[R(V)\right] = \!\!\!\!\!\int\limits_{[0; +\infty)}\!\!\!\!\! G(v) Q(v) dv + \!\!\!\!\!\int\limits_{[0; +\infty)}\!\!\!\!\! G(v) v dQ(v) - \!\!\!\int\limits_{[0; +\infty)}\!\!\!\!\! G(v) dS(v) = \!\!\!\!\!\int\limits_{[0; +\infty)}\!\!\!\!\! G(v) v dQ(v),
	\end{equation}
	where the latter identity of Eq.~(\ref{eq_maintheorem_proof_1}) holds due to the facts that $S$ is absolutely continuous on its domain (see Remark \ref{remark_strat_prop}), thus, $\int_{[0; +\infty)}G(v) dS(v) = \int_{[0; +\infty)} G(v) S^\prime(v) dv$, and that $S'(v) = Q(v)$ almost everywhere (see Lemma~\ref{QSlemma}). By definition, we have $H_D(v)=G(v) v$, $\fa v\ge0,$ and, hence, Eq.~(\ref{eq_maintheorem_proof_1}) implies that $\Expect \left[R(V)\right] = \int_{[0; +\infty)} H_D(v) dQ(v)$ can be upper bound by the expression
	\begin{equation}
		\label{eq_maintheorem_proof_2}
		H_D(p^\ast(D)) \cdot \int_{[0; +\infty)} \!\!\!\!\!\!1 dQ(v) = H_D(p^\ast(D)) \cdot (Q(+\infty) - Q(0)) \le H_D(p^\ast(D)) \cdot \Gamma,
	\end{equation}
	where $H_D(v)$ is bounded by its maximum $H_D(p^\ast(D))$, the first identity is due to the fact that $Q$ is non-decreasing on $v$, and non-negative $Q(v)$ is bounded by $\Gamma$ for all $v\ge 0$ (see Remark~\ref{Qremark}).
	Finally, remind that the expected strategic revenue $\Expect\left[\SRev_\gb(\A^\ast_D, V)\right]$ of the optimal constant algorithm $\A^\ast_D$ equals to the right hand side of Eq.~(\ref{eq_maintheorem_proof_2}) (see Sec.~\ref{sec_ConstAlg}).
\end{proof}

Th.~\ref{maintheorem} states that the optimal constant algorithm $\A^\ast_D$ is, in fact, optimal among all pricings~$\Alg$. 

\subsection{Non-uniqueness of the optimal algorithm: ``big deal" pricing} 
\label{subsec_OptAlgNonUnique}

It appears that \emph{the optimal constant algorithm $\A^\ast_D$ is not the unique optimal one}.
We provide an example of applying a general technique for building optimal algorithms of certain form. 
\begin{proposition}
	\label{prop_non_uniqu}
	Let the game have at least 2 rounds (i.e., $\Gamma > \g_1$). If an algorithm $\A_1$ sets the first price $p_1$ equal to $\Gamma p^\ast(D)/ \g_1$ and sets all further prices either $p_t = 0, t\ge2$, if the buyer accepts the first offer, $a_1=1$, or  $p_t = 2 \g_1 p_1/(\Gamma - \g_1), t\ge2$, otherwise; then the algorithm $\A_1$ is optimal. 
\end{proposition}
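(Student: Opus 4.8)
The plan is to invoke Theorem~\ref{maintheorem}, which already guarantees that no pricing algorithm can exceed $\Expect[\SRev_\gb(\A^\ast_D, V)] = \Gamma\cdot H_D(p^\ast(D))$; hence it suffices to show that $\A_1$ attains this bound, i.e. that $\Expect[\SRev_\gb(\A_1, V)] = \Gamma\, H_D(p^\ast(D))$. To that end I would first compute, for each valuation $v \ge 0$, the optimal buyer strategy against $\A_1$ by comparing the two subgames determined by the first decision $a_1$. Note that after $a_1 = 1$ all subsequent prices are $0$, and after $a_1 = 0$ all subsequent prices equal the constant $p' := 2\g_1 p_1/(\Gamma - \g_1)$, in both cases independently of the later decisions.

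Along the branch $a_1 = 1$: since every later price is $0$, accepting in all rounds $t \ge 2$ is optimal (strictly so when $v>0$), so the best surplus on this branch is $\g_1(v - p_1) + (\Gamma - \g_1)v = \Gamma v - \g_1 p_1 = \Gamma\bigl(v - p^\ast(D)\bigr)$, using $\g_1 p_1 = \Gamma p^\ast(D)$. Along the branch $a_1 = 0$: the buyer then faces the constant price $p'$ in all rounds $t \ge 2$, so, exactly as for constant algorithms, he either rejects everything (surplus $0$) or accepts everything, the latter yielding $(\Gamma - \g_1)v - (\Gamma-\g_1)p' = (\Gamma - \g_1)v - 2\g_1 p_1 = (\Gamma - \g_1)v - 2\Gamma p^\ast(D)$. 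Subtracting, the $a_1=1$ branch beats the ``reject then accept all'' option by $\g_1 v + \Gamma p^\ast(D) > 0$, so the buyer's effective choice is between the ``big deal'' (surplus $\Gamma(v - p^\ast(D))$) and buying nothing (surplus $0$); he takes the big deal precisely when $v \ge p^\ast(D)$, the boundary point $v = p^\ast(D)$ being irrelevant because $D$ is continuous.

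It then remains to compute the revenue. For $v > p^\ast(D)$ the buyer accepts only at round $1$, at the price $p_1$, so $\SRev_\gb(\A_1, v) = \g_1 p_1 + \sum_{t\ge2}\g_t\cdot 0 = \g_1 p_1 = \Gamma p^\ast(D)$; for $v < p^\ast(D)$ the buyer buys nothing and $\SRev_\gb(\A_1, v) = 0$. Taking the expectation over $V \sim D$ gives $\Expect[\SRev_\gb(\A_1, V)] = \Prob[V \ge p^\ast(D)]\cdot \Gamma p^\ast(D) = \Gamma\, H_D(p^\ast(D))$, which by Theorem~\ref{maintheorem} is the maximal attainable value; hence $\A_1$ is optimal.

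The main obstacle I anticipate is the careful verification of the buyer's strategic behaviour: one must rule out all ``mixed'' paths (accepting at round $1$ but rejecting some later round, or rejecting at round $1$ and then accepting only some of the constant-price rounds) and check that the dominance of the big-deal branch over the delayed-acceptance branch is strict over the whole relevant range of $v$; everything else is routine. One should also record the (easy) fact that $p^\ast(D) > 0$ — which holds since the continuous distribution $D$ puts positive mass on $(0,+\infty)$ and hence $H_D$ is positive somewhere — so that the threshold description of the buyer's behaviour is exactly as stated, and so that $p_1$ and $p'$ are well defined and positive.
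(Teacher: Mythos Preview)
Your proof is correct and follows essentially the same route as the paper's: reduce the buyer's decision to the handful of strategies determined by the first round, show that the ``accept-then-free-goods'' branch strictly dominates the ``reject-then-constant-price'' branch for all $v\ge 0$, and then compute the expected revenue to match the upper bound from Theorem~\ref{maintheorem}. One wording slip to fix: in the revenue paragraph you write ``the buyer accepts only at round~$1$,'' which contradicts your own (correct) earlier argument that on the $a_1=1$ branch he accepts in every round; since the later prices are $0$ the revenue is $\g_1 p_1$ either way, so the computation is unaffected.
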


\begin{proof}
	First, note that the buyer has no incentive to lie after the first round since the algorithm prices $p_t, t\ge3,$ do not depend on his decisions $a_t, t\ge2$. Hence, possible candidates for optimal strategies are $\nO^\infty$, $\nI^\infty$, $\nO\nI^\infty$, and $\nI\nO^\infty$. 
	It easy to see that the optimal buyer strategy in response to $\A_1$ is $\nI^\infty$ for the case $v > p^\ast(D)$ and $\nO^\infty$ for $v < p^\ast(D)$.
	Indeed, if the buyer accepts $p_1$, further offers are for free goods that will be accepted. If the buyer rejects $p_1$, then, for any strategy $\astr \in \Str$ s.t. $a_1=0$, we have 
	\begin{equation}
		\label{eq_prop_non_uniqu_proof_1}
		S_\astr(v) \le (\Gamma -\g_1) ( v - 2 \g_1 p_1/(\Gamma - \g_1) )
		< \Gamma v - 2 \g_1 p_1 < \Gamma v - \g_1 p_1 = S_{\nI^\infty}(v).
	\end{equation}
	Thus, if $S_{\nI^\infty}(v) > 0 = S_{\nO^\infty}(v)$, then $\nI^\infty$ is optimal strategy, and, if $S_{\nI^\infty}(v) < 0$, then Eq.~(\ref{eq_prop_non_uniqu_proof_1}) implies optimality of $\nO^\infty$.
	Finally, note that $S_{\nI^\infty}(v) = \Gamma v - \gamma_1 p_1 = \Gamma (v - p^\ast(D))$ that implies $S_{\nI^\infty}(v) > 0 \Leftrightarrow v > p^\ast(D)$.
	Hence, the expected strategic revenue of $\A_1$ is 
	\begin{equation}
		\label{eq_prop_non_uniqu_proof_2}
		\Expect\left[\SRev_\gb(\A_1, V)\right]
		= \Prob[p^\ast(D) \le V] \cdot \g_1 \Gamma p^\ast(D) /\g_1 
		=  H_D(p^\ast(D)) \Gamma 
		= \Expect\left[\SRev_\gb(\A^\ast_D, V)\right].
		 \vspace{-0.3cm}
	\end{equation}
\end{proof}
	 \vspace{-0.2cm}
The key idea behind the algorithm $\A_1$ is quite simple. Roughly speaking, the seller ``accumulates" all his revenue at the first round by proposing the buyer a ``big deal": to pay a large price at the first round and get all goods in the subsequent rounds for free, or, otherwise, get nothing\footnote{A similar pricing was proposed by~\cite{2013-OR-Kakade} for a class of mechanism environments with multiplicative separability and zero production cost. Their mechanism charges an up-front payment  (before rounds starts) and posts zero price each round obtaining thus truthfulness.  In contrast to that study, the ``big deal" pricing posts a large price at the first round (our setup does not allow an up-front payment) and is not truthful (since the price $p_1 = \Gamma p^\ast(D)/ \g_1$ is accepted by the strategic buyer whose valuation $v > p^\ast(D)$, not $v > p_1$).}.
Note that this optimal pricing algorithm depends both on the discounting $\gb$ and the valuation distribution $D$:
the price $p_1$ is calculated based on the knowledge of the total discounted revenue $\Gamma p^\ast(D) $ that is earned by $\A^\ast_D$ from selling all goods. 
An attentive reader may note that the idea of the aforementioned technique allows, in fact, to build more variants of optimal algorithms by  ``spreading"  the revenue $\Gamma p^\ast(D)$  in a certain way along the rightmost path of the tree $\T$.
In Sections~\ref{sec_NonEqualDisc_BgS} and~\ref{sec_NonEqualDisc_BlS}, we show that \emph{$\A_1$ may remain optimal in the cases when the constant algorithm $\A^\ast_D$ is no longer optimal}.

\section{Less patient seller}
\label{sec_NonEqualDisc_BgS}
Now we are ready to study the cases when the seller and the buyer discounts are different. 
Further, we argue that \emph{the constant algorithm $\A^\ast_D$ is no longer optimal among all algorithms $\Alg$}  in these cases.

We start our investigation from a seller which is less patient than the buyer in willingness to wait for the revenue. We consider the case when $\gb^\ttS \le \gb^\ttB$ (i.e., $\g_t^\ttS \le \g_t^\ttB \fa t\in\mathbb{N}$); e.g., when the discounts decrease geometrically:  $\gb^\ttS = \{\g_\ttS^{t - 1}\}_{t = 1}^\infty$ and $\gb^\ttB = \{\g_\ttB^{t - 1}\}_{t = 1}^\infty$, where $0<\g_\ttS\le\g_\ttB<1$.


\begin{lemma}
	\label{lemma_BgS_upperbound}
	Let $\A\in\Alg$, then the following upper bound for its expected strategic revenue holds:
	\begin{equation}
		\Expect \left[  \SRev_{\gb^\ttS, \gb^\ttB}(\A, V) \right]  \le \Gamma^\ttB \cdot H_D(p^\ast(D))
	\end{equation}	
\end{lemma}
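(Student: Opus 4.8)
The plan is to reduce this bound to the already-proved equal-discounts case (Theorem~\ref{maintheorem}) by exploiting the hypothesis $\gb^\ttS\le\gb^\ttB$ in the simplest possible way: replacing the seller's discount by the (pointwise larger) buyer's discount can only increase the seller's revenue, while leaving the buyer's optimal strategy untouched.

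First I would record the following termwise observation. Fix $\A\in\Alg$ and an arbitrary buyer strategy $\astr\in\Str$, and let $\{p_t\}_{t=1}^\infty$ be the induced price sequence. Since $\g^\ttS_t\le\g^\ttB_t$ and each instant revenue $a_tp_t\ge0$, we get $\g^\ttS_t a_t p_t\le\g^\ttB_t a_t p_t$ for every $t$, hence, summing over $t$,
\begin{equation*}
	\Rev_{\gb^\ttS}(\A,\astr)\ \le\ \Rev_{\gb^\ttB}(\A,\astr)\qquad\fa\astr\in\Str .
\end{equation*}
Now apply this with the particular strategy $\astr=\astr^{Opt}(\A,v,\gb^\ttB)$, i.e.\ the buyer's optimal response, which depends only on the \emph{buyer}'s discount $\gb^\ttB$ (and on $\A,v$) and is therefore the \emph{same} strategy whether the seller uses $\gb^\ttS$ or $\gb^\ttB$. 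By the definitions in Eqs.~(\ref{eq_SRev}) and in Section~\ref{sec_EqDiscounts} this yields, for every valuation $v\ge0$,
\begin{equation*}
	\SRev_{\gb^\ttS,\gb^\ttB}(\A,v)=\Rev_{\gb^\ttS}\!\big(\A,\astr^{Opt}(\A,v,\gb^\ttB)\big)\le\Rev_{\gb^\ttB}\!\big(\A,\astr^{Opt}(\A,v,\gb^\ttB)\big)=\SRev_{\gb^\ttB}(\A,v),
\end{equation*}
where $\SRev_{\gb^\ttB}:=\SRev_{\gb^\ttB,\gb^\ttB}$ is precisely the equal-discounts strategic revenue with the common discount $\gb:=\gb^\ttB$ (which, like any admissible discount, satisfies the standing assumptions of Section~\ref{sec_Prelim}).

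Taking expectations over $V\sim D$ — legitimate since $R(v):=\SRev_{\gb^\ttB}(\A,v)$ is non-negative and measurable (indeed monotone, by Lemma~\ref{Rlemma}) — I would then invoke Theorem~\ref{maintheorem} applied to the discount $\gb=\gb^\ttB$:
\begin{equation*}
	\Expect\big[\SRev_{\gb^\ttS,\gb^\ttB}(\A,V)\big]\le\Expect\big[\SRev_{\gb^\ttB}(\A,V)\big]\le\Expect\big[\SRev_{\gb^\ttB}(\A^\ast_D,V)\big].
\end{equation*}
Finally, the right-hand side is evaluated by the constant-algorithm revenue formula of Section~\ref{sec_ConstAlg}: since $\A^\ast_D\in\Alg_0$ has price $p^\ast(D)$ and the common discount $\gb^\ttB$ has $\Gamma^\ttB=\sum_t\g^\ttB_t$, we get $\Expect[\SRev_{\gb^\ttB}(\A^\ast_D,V)]=\Prob[V\ge p^\ast(D)]\cdot p^\ast(D)\cdot\Gamma^\ttB=\Gamma^\ttB\cdot H_D(p^\ast(D))$, which is the claimed bound.

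There is essentially no deep obstacle here: the entire content is the recognition that dominating $\gb^\ttS$ by $\gb^\ttB$ converts the problem into the equal-discounts instance already settled in Section~\ref{sec_EqDiscounts}. The only points demanding a little care are purely bookkeeping: checking that $\astr^{Opt}(\A,v,\gb^\ttB)$ is indeed indifferent to which discount the seller uses (so that the middle inequality compares the \emph{same} strategy under two revenue functionals), and justifying the interchange with the expectation, both of which are immediate from the definitions and Lemma~\ref{Rlemma}.
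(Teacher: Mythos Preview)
Your proof is correct and follows essentially the same approach as the paper: bound $\SRev_{\gb^\ttS,\gb^\ttB}(\A,v)$ pointwise by $\SRev_{\gb^\ttB}(\A,v)$ using $\gb^\ttS\le\gb^\ttB$ and the fact that $\astr^{Opt}$ depends only on $\gb^\ttB$, then invoke Theorem~\ref{maintheorem} with $\gb=\gb^\ttB$. Your write-up is in fact slightly cleaner than the paper's, which contains a typo (it writes the instant term as $a^O_t(v-p_t)$ instead of $a^O_t p_t$ in the revenue comparison).
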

\begin{proof} 
	Let $\astr^{Opt}(\A, v, \gb^\ttB) = \{a^O_t\}_{t = 1}^\infty$, then, using the independence of $\astr^{Opt}$ on the seller's discount, we get
	$\SRev_{\gb^\ttS, \gb^\ttB}(\A, v) = \sum_{t = 1}^{\infty} \g^\ttS_t a^O_t (v - p_t) \le \sum_{t = 1}^{\infty} \g^\ttB_t a^O_t (v - p_t) = \SRev_{\gb^\ttB, \gb^\ttB} (\A, v) = \SRev_{\gb^\ttB}(\A, v).$ Finally, 
	$
	\Expect \left[  \SRev_{\gb^\ttS, \gb^\ttB}(\A, V) \right] \le \Expect \left[\SRev_{\gb^\ttB}(\A, V)\right] \le \Gamma^\ttB \cdot H_D(p^\ast(D)), 
	$ where Theorem~\ref{maintheorem} is applied with $\gb=\gb^\ttB$ to infer the latter inequality.
\end{proof}

\begin{proposition}
	\label{prop_BgS_A1}
	Let $\gb^\ttS$ and $\gb^\ttB$ be the seller and the buyer discounts, respectively, s.t.\ $\gb^\ttS \le \gb^\ttB$. Then the algorithm $\A_1$ from Proposition~\ref{prop_non_uniqu} with $\gb$ set to $\gb^\ttB$ (i.e., with $p_1 = \Gamma^\ttB p^\ast(D)$) is optimal in $\Alg$. 
\end{proposition}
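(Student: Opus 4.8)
The plan is to show that the algorithm $\A_1$ (built with $\gb = \gb^\ttB$) already achieves the upper bound of Lemma~\ref{lemma_BgS_upperbound}, i.e.\ $\Expect[\SRev_{\gb^\ttS,\gb^\ttB}(\A_1, V)] = \Gamma^\ttB \cdot H_D(p^\ast(D))$, which forces optimality. The crucial observation is that $\A_1$ "accumulates" all of the seller's revenue at the first round: the buyer who accepts pays $p_1 = \Gamma^\ttB p^\ast(D)$ at round $1$ and then receives free goods forever, so the \emph{only} nonzero price ever collected is at $t=1$. Since $\g^\ttS_1 = 1 = \g^\ttB_1$ by Remark~\ref{remark_first_discount}, the seller's discount sequence is irrelevant to the revenue actually collected along the accepting path — the revenue is $\g^\ttS_1 p_1 = p_1$ regardless of $\gb^\ttS$.

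First I would argue that the buyer's optimal strategy in response to $\A_1$ is unchanged when we replace $\gb$ by $\gb^\ttB$: the buyer's surplus depends only on $\gb^\ttB$, so the analysis in the proof of Proposition~\ref{prop_non_uniqu} applies verbatim with $\Gamma \mapsto \Gamma^\ttB$ and $\g_t \mapsto \g^\ttB_t$. Hence the buyer plays $\nI^\infty$ exactly when $v > p^\ast(D)$ and $\nO^\infty$ when $v < p^\ast(D)$ (the measure-zero tie $v = p^\ast(D)$ is irrelevant since $D$ is continuous). Second, on the event $\{V > p^\ast(D)\}$ the realized seller revenue is $\g^\ttS_1 p_1 + \sum_{t\ge 2}\g^\ttS_t a_t \cdot 0 = p_1 = \Gamma^\ttB p^\ast(D)$, and on $\{V < p^\ast(D)\}$ it is $0$. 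Therefore
\begin{equation*}
	\Expect\left[\SRev_{\gb^\ttS,\gb^\ttB}(\A_1, V)\right] = \Prob[V \ge p^\ast(D)] \cdot \Gamma^\ttB p^\ast(D) = \Gamma^\ttB H_D(p^\ast(D)),
\end{equation*}
which matches the upper bound of Lemma~\ref{lemma_BgS_upperbound}, so no algorithm in $\Alg$ does better and $\A_1$ is optimal.

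I expect no serious obstacle here; the only point requiring mild care is checking that $\A_1$ is well-defined and admissible under $\gb^\ttB$, i.e.\ that $\Gamma^\ttB > \g^\ttB_1$ (needed for the "further prices" of $\A_1$ to be finite) — this is the hypothesis "the game has at least 2 rounds" inherited from Proposition~\ref{prop_non_uniqu}, which holds whenever $\gb^\ttB$ has a positive entry beyond the first; if $\gb^\ttB$ is supported only on round $1$ the statement degenerates and $\A_1 = \A^\ast_D$ anyway. The conceptual content — that a less patient seller should front-load revenue to dodge his own discounting, and that the "big deal" pricing does exactly this — is already isolated, so the proof is essentially a one-line computation combined with reuse of Proposition~\ref{prop_non_uniqu}'s strategy analysis and Lemma~\ref{lemma_BgS_upperbound}'s bound.
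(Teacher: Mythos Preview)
Your proposal is correct and follows essentially the same approach as the paper: you reuse the buyer-strategy analysis of Proposition~\ref{prop_non_uniqu} (valid because the optimal strategy depends only on $\gb^\ttB$), compute the seller's revenue using $\g^\ttS_1=\g^\ttB_1=1$ from Remark~\ref{remark_first_discount}, and match the upper bound of Lemma~\ref{lemma_BgS_upperbound}. The paper's proof is the same one-line computation; your extra remark on the degenerate one-round case is a nice addition but not present there.
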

\begin{proof} Since the optimal strategy is independent of the seller's discount, the beginning of the proof is similar to the one of Prop.~\ref{prop_non_uniqu} up to Eq.~(\ref{eq_prop_non_uniqu_proof_2}), where the seller's discount is used for the first time. In our case of different discounts, the identity Eq.~(\ref{eq_prop_non_uniqu_proof_2}) on the expected strategic revenue will have the form
	$
	\Expect\left[\SRev_{\gb^\ttS, \gb^\ttB}(\A_1, V)\right]
	= \Prob[p^\ast(D) \le V] \cdot \g^\ttS_1 \Gamma^\ttB p^\ast(D) /\g^\ttB_1 
	=  H_D(p^\ast(D)) \Gamma^\ttB,
	$
	where we used $\g^\ttS_1=\g^\ttB_1=1$ (see Remark~\ref{remark_first_discount}).
	We see that $\A_1$ achieves the upper bound of Lemma~\ref{lemma_BgS_upperbound} and is thus optimal.
\end{proof}
The relative expected revenue of the optimal algorithm $\A_1$ w.r.t.\ the optimal constant one $\A^\ast_D$ is $\Gamma^\ttB/\Gamma^\ttS$ which is $> 1$, when $\gb^\ttS < \gb^\ttB$; i.e., \emph{the optimal revenue is larger than the one obtained by offering the Myerson price constantly} (in contrast to the equal discount case).
For instance, for geometric discounts $\gb^\ttS = \{\g_\ttS^{t - 1}\}_{t = 1}^\infty$ and $\gb^\ttB = \{\g_\ttB^{t - 1}\}_{t = 1}^\infty$, this revenue improvement ratio $\Gamma^\ttB/\Gamma^\ttS$ is equal to $ (1-\g_\ttS)/(1-\g_\ttB)$ and  goes to $+\infty$ as $\g_\ttB\rightarrow 1-$ for a fixed $\g_\ttS$. 
Moreover, the  algorithm $\A_1$ provides exactly the same expected revenue as if the seller played in the game with the same discount as the buyer one $\gb^\ttB$.
\emph{This result is quite surprising}, because the dominance of the buyer's discount $\gb^\ttB$ over the seller's one $\gb^\ttS$ suggests a hypothesis that the seller should earn lower than with $\gb^\ttB$ (e.g., see the revenue of $\A^\ast_D$). But the ability of the seller to apply the trick of ``accumulation" of all his revenue at the first round (see Sec.~\ref{subsec_OptAlgNonUnique}) allows him to get the payments for all goods discounted by the buyer's $\gb^\ttB$ at the first round and to boost thus his revenue over the constant pricing.

\section{Less patient buyer}
\label{sec_NonEqualDisc_BlS}
In contrast to the previous cases, finding an optimal pricing here is much more difficult problem since the technique used in Sec.~\ref{sec_EqDiscounts} and~\ref{sec_NonEqualDisc_BgS} to upper bound the expected strategic revenue is no longer applicable (because it relies on the condition $\gb^\ttS \le \gb^\ttB$).
As we will see further, in the studied case, the obtained optimal algorithms are not trivial and require derivation of a multivariate analogue of the functional $H_D(\cdot)$ to be found in a multidimensional space. We obtain this functional in Sec.~\ref{subsec_finite_games_theory} and use it to provide extensive analysis of optimal algorithms in Sec.~\ref{subsec_finite_game_studies} and~\ref{subsec_infinite_game_studies}.
%

\begin{definition}
	\label{def_DiscRate}
	For a discount sequence $\gb = \{\g_t\}_{t = 1}^\infty$, we define \emph{the discount rate} sequence $\nub(\gb) := \{\nu_t(\gb)\}_{t = 1}^\infty$ as the sequence of the ratios of consecutive  components of $\gb$: $\nu_t(\gb) := \g_{t + 1}/\g_t$ when $\g_t > 0$, and $\nu_t(\gb) := 0$ when  $\g_t = 0$\footnote{Recall that if $\g_t = 0$ then $\g_{t'}=0$ for any $t' \ge t$, i.e., $\gb$ has no zeros between positive components (see Sec.~\ref{subsec_Setup}). Hence, the discount rate sequence $\nub(\gb)$ has no zeros between positive components as well.}.
\end{definition}

\begin{remark}
	\label{remark_DiscRateInequality}
	Let $\gb^1 = \{\g^1_t\}_{t = 1}^\infty$ and $\gb^2 = \{\g^2_t\}_{t = 1}^\infty$ be some discounts sequences. Then, the condition $\nub(\gb^2) \ge \nub(\gb^1)$ is equivalent to the one that the sequence $\{\g^2_t/\g^1_t\}_{t = 1}^\infty$ is non-decreasing (formally, treating $0/0$ as $+\infty$). The proof of this statement straightforwardly follows from Definition~\ref{def_DiscRate}.
\end{remark}

From here on in this section we consider the discounts $\gb^\ttS $ and $\gb^\ttB$ such that $\nub(\gb^\ttS) \ge \nub(\gb^\ttB)$.
This condition means that the seller is more patient than the buyer \emph{locally at each round} (see Remark~\ref{remark_DiscRateInequality}). In particularly, $\nub(\gb^\ttS) \ge \nub(\gb^\ttB)$ implies that $\gb^\ttS \ge \gb^\ttB$, i.e., the seller is globally more patient than the buyer as well, but the inverse implication is not true\footnote{We believe that the studied case of $\nub(\gb^\ttS) \ge \nub(\gb^\ttB)$ covers a large variety of discount sequences (e.g., the geometric ones) that describe a more patient seller.   Nonetheless, the study of the case when $\gb^\ttS \ge \gb^\ttB$ and $\nub(\gb^\ttS) \not\ge \nub(\gb^\ttB)$ is interesting and is left for future work. A possible direction to study this case consists in our following insight: if the buyer is locally more patient than the seller  at some round $t$ (i.e., $\nu_t(\gb^\ttS) < \nu_t(\gb^\ttB)$), then the trick similar to the one used in the ``big deal" algorithm can be applied at this round $t$ to get an optimal algorithm.}.
A typical example of the studied case is a pair of geometric discounts:  $\gb^\ttS = \{\g_\ttS^{t - 1}\}_{t = 1}^\infty$ and $\gb^\ttB = \{\g_\ttB^{t - 1}\}_{t = 1}^\infty$, where $0<\g_\ttB\le\g_\ttS<1$.


\begin{definition}
	Let $\gb$ be a discount sequence, then an algorithm $\A\in\Alg$ is said to be \emph{completely active for $\gb$}, if for any strategy $\astr\in\Str$ there exists a valuation $v \in [0; +\infty)$ such that $S_\astr(v) = S(v)$, where $S$ and $S_\astr$ are defined in Remark~\ref{remark_strat_prop}, i.e., the surplus function $S_\astr$ is tangent to the optimal surplus function $S$. We denote the set of all completely active algorithms for $\gb$ by $\tilde \Alg(\gb)$. 
\end{definition}

In the next subsection, we will obtain the central results of our study. We do it for the case of a finite number of rounds, but, in Sec.~\ref{subsec_infinite_game_studies}, we show how to use these results to obtain approximately optimal algorithms for the case of the infinite number of rounds.

%


\subsection{Finite games: multivariate optimization functional}
\label{subsec_finite_games_theory}
In this section, we consider the case of the  game with a finite time horizon $T\in\mathbb{N}$: in particular, in this case, seller algorithms, buyer strategies, and all discounts (including $\gb^\ttS, \gb^\ttB$) are considered as their $T$-length variants (they can be defined in a natural way similarly to their infinite analogues). For simplicity of presentation, we assume that all discounts are positive (i.e., $\neq \!\!0$) in all $T$ rounds.


\begin{definition} 
	\label{def_RegularDiscount}
	A discount sequence $\gb$ is said to be \emph{regular}\footnote{The reasons to introduce this  class of  discounts are discussed in Remark~\ref{remark_RegularDiscount}.}, if  $\gb\cdot\astr^1 \neq \gb\cdot\astr^2$ for any pair of strategies $\astr^1, \astr^2\in\Str$, i.e., any buyer strategy $\astr\in\Str$ results in a unique discounted quantity of purchased goods.
	Here we used the short notation for the scalar product: $\astr\cdot\bstr:=\sum_ta_tb_t$.
\end{definition}

In the following important proposition we show that any algorithm can be transformed to a completely active one for the discount $\gb^\ttB$ with no loss in the expected strategic revenue.

\begin{proposition} \label{prop_RatesCompleteActiveness}
	In a $T$-round game, let $\gb^\ttS, \gb^\ttB$ be discounts s.t.\ $\nub(\gb^\ttB) \le \nub(\gb^\ttS)$ and $\gb^\ttB$ is a regular one. Then, for any pricing algorithm $\A \in \Alg$,  there exists a completely active  algorithm $\tilde \A \in \tilde \Alg(\gb^\ttB)$ s.t.\ 
	\begin{equation}
	\label{prop_RatesCompleteActiveness_eq1}
	\Expect\left[\SRev_{\gb^\ttS, \gb^\ttB}(\A, V)\right] \le \Expect\left[\SRev_{\gb^\ttS, \gb^\ttB}(\tilde \A, V)\right].
	\end{equation}
\end{proposition}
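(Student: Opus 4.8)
The plan is to modify a given algorithm $\A$ into a completely active one for $\gb^\ttB$ by systematically eliminating "inactive" strategies — strategies $\astr$ whose surplus line $S_\astr$ never touches the upper envelope $S$ — while never decreasing the seller's expected revenue. Since $\gb^\ttB$ is regular, the distinct strategies $\astr \in \Str$ produce pairwise distinct discounted quantities $q_\astr = \gb^\ttB \cdot \astr$, so we may order the $2^T$ strategies by increasing $q_\astr$; the surplus functions $S_\astr(v) = q_\astr v - r_\astr$ are then lines of pairwise distinct slopes, and the optimal surplus $S = \max_\astr S_\astr$ is a piecewise-linear convex function whose pieces correspond exactly to the active strategies. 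The key structural fact (to be established first) is that, because the buyer's optimal strategy depends only on $\gb^\ttB$ and the prices, the expected revenue can be written as $\Expect[\SRev_{\gb^\ttS,\gb^\ttB}(\A,V)] = \sum_{\astr}\Prob[V \in I_\astr]\cdot \Rev_{\gb^\ttS}(\A,\astr)$, where $I_\astr$ is the (possibly empty) interval of valuations for which $\astr$ is optimal — and $I_\astr$ is nonempty precisely when $\astr$ is active. So only active strategies contribute, and the revenue they contribute through $\gb^\ttS$ is what must be controlled.

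Next I would describe the transformation itself. Take any strategy $\bstr$ that is \emph{inactive} for $\A$ relative to $\gb^\ttB$, so its line $S_\bstr$ lies strictly below $S$ everywhere. I want to modify the prices along the path encoded by $\bstr$ so that $S_\bstr$ is pushed up until it becomes tangent to the current envelope, without disturbing any of the currently active lines. The natural move is to decrease the prices offered along $\bstr$'s path at the rounds where $\bstr$ buys (this raises $r_\bstr$... wait — decreasing prices \emph{lowers} $r_\bstr = \sum \g^\ttB_t b_t p_t$, which \emph{raises} $S_\bstr(v) = q_\bstr v - r_\bstr$; good), choosing the decrease so that $S_\bstr$ rises to touch $S$ at exactly one point. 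Crucially, since a node is visited only by strategies sharing that prefix, one must check that the price changes needed to make $\bstr$ active do not push some other line above the envelope or otherwise create new inactive strategies. Here the hypothesis $\nub(\gb^\ttB) \le \nub(\gb^\ttS)$ enters: by Remark~\ref{remark_DiscRateInequality}, $\{\g^\ttS_t/\g^\ttB_t\}_t$ is non-decreasing, so lowering a price at a \emph{later} round (where $\bstr$'s line is steeper, i.e.\ $q_\bstr$ large, meaning $\bstr$ buys at late rounds) costs the seller relatively less in $\gb^\ttS$-discounted revenue than the gain registered in $\gb^\ttB$-surplus — which is exactly what lets the activation step be carried out revenue-monotonically. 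One repeats this finitely many times (there are at most $2^T$ strategies, and each step strictly decreases the number of inactive ones, or we argue termination via a suitable monotone quantity), arriving at $\tilde\A \in \tilde\Alg(\gb^\ttB)$ with $\Expect[\SRev_{\gb^\ttS,\gb^\ttB}(\A,V)] \le \Expect[\SRev_{\gb^\ttS,\gb^\ttB}(\tilde\A,V)]$.

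The main obstacle I anticipate is the bookkeeping in the activation step: verifying that one can simultaneously (i) raise the chosen inactive line to tangency, (ii) not demote any active line, and (iii) not create \emph{new} inactive lines among strategies sharing a prefix with $\bstr$ — all while the $\gb^\ttS$-revenue does not drop. The clean way is probably to reformulate the whole construction in terms of the convex conjugate / the piecewise-linear envelope directly: specify the desired final envelope $S$ as any convex piecewise-linear function through the "right" breakpoints with all $2^T$ slopes $q_\astr$ appearing, then recover prices from the increments, and show the $\gb^\ttS$-revenue of the recovered algorithm dominates that of $\A$ by a round-by-round comparison using the monotonicity of $\g^\ttS_t/\g^\ttB_t$. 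I expect the authors' proof to proceed along these lines, with the regularity of $\gb^\ttB$ guaranteeing distinct slopes (so the envelope has a well-defined combinatorial structure and "tangency at a single point" is generic) and the discount-rate inequality supplying the revenue-monotonicity of each local price adjustment.
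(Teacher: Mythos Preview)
Your high-level plan---iteratively activate the inactive strategies one at a time while keeping the seller's expected revenue nondecreasing---is exactly the paper's strategy. But the proposal is missing the central mechanical idea that makes each activation step go through, and the role you assign to the hypothesis $\nub(\gb^\ttB)\le\nub(\gb^\ttS)$ is not the right one.

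Concretely: the paper does \emph{not} simply lower prices along the inactive strategy's buy-rounds. If you only lower prices on the path of $\astr$, every other strategy sharing those nodes also has its surplus line move, and there is no reason their $\gb^\ttS$-revenues stay put. The paper's trick is a \emph{compensated} price change. It processes the inactive strategies in order of their last-buy round $t_1(\astr)=\max\{t:a_t=1\}$, smallest first. For the chosen $\astr$ it lowers exactly one price, at the node $\n=a_1\cdots a_{t_1-1}$ (where $\astr$ makes its final purchase), and simultaneously \emph{raises} the prices at the nodes $\lf^j(\rt(\n))$ so that $\g^\ttB_{t_1}\A'(\n)+\g^\ttB_{t_1+j+1}\A'(\lf^j(\rt(\n)))$ is held constant. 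Any strategy $\bstr\neq\astr$ that passes through $\n$ and accepts there must (since $\bstr\neq\astr$) also accept at some later round; the compensation then leaves $S_\bstr$ exactly unchanged. Thus only $S_\astr$ moves, upward until tangent; no other line is disturbed and no new inactivity is created. This is precisely the bookkeeping you flagged as an obstacle, and your proposal does not contain the mechanism that resolves it.

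The discount-rate hypothesis enters in the opposite direction from what you wrote. For each $\bstr\neq\astr$ through $\rt(\n)$, the increment of $r_\bstr=\Rev_{\gb^\ttS}(\A,\bstr)$ is
\[
\g^\ttS_{t_1}\bigl(\A'(\n)-\A(\n)\bigr)+\g^\ttS_{t_1+j+1}\bigl(\A'(\m)-\A(\m)\bigr)
=\Bigl(\tfrac{\g^\ttS_{t_1+j+1}}{\g^\ttS_{t_1}}-\tfrac{\g^\ttB_{t_1+j+1}}{\g^\ttB_{t_1}}\Bigr)\g^\ttS_{t_1}\bigl(\A'(\m)-\A(\m)\bigr)\ \ge\ 0,
\]
using the $\gb^\ttB$-compensation to substitute and then $\nub(\gb^\ttB)\le\nub(\gb^\ttS)$. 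So the later price \emph{increase}, weighted by $\gb^\ttS$, dominates the earlier price decrease; the seller's revenue on every active strategy (weakly) rises. Meanwhile $r_\astr$ drops, but $\astr$ was inactive and, by regularity of $\gb^\ttB$, becomes optimal for at most one valuation, which has $D$-measure zero; hence the expectation does not decrease.

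Finally, your ordering by $q_\astr$ is not enough: the paper orders by $t_1(\astr)$ specifically to guarantee that the lowered price $\A'(\n)$ does not go negative (if $\bstr$ is $\astr$ with its last $\nI$ flipped to $\nO$, then $\bstr$ is already active by minimality of $t_1$, and at $\A'(\n)=0$ one has $S_\astr(v)=S_\bstr(v)+\g^\ttB_{t_1}v$, which must cross the envelope). Without this ordering you cannot ensure the activation step terminates within the feasible price set.
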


\begin{proof}
	For a given algorithm and a given discount $\g^\ttS$, we will use the notation $r_\astr := \Rev_{\gb^\ttS}(\A, \astr)$ for any $\astr\in\Str$ (similarly to Remark~\ref{remark_strat_prop}, but  indicating explicitly the seller's discount).
	The main idea of the proof consists in the following technique.
	We will consider all strategies $\astr$ s.t. $S_\astr(v) < S(v)$  $\fa v \in [0; +\infty)$ (referred to as  \emph{non-active}), and, consequently, for each  of them denoted by $\astr$, we  apply the following procedure of modifying the source algorithm $\A$: define a transformation $\A'$ that does  not change $S_\bstr$ for $\bstr \in\Str\setminus\{\astr\}$, moves $S_\astr$ to the left until it is tangent to $S$ in some $v \in [0; +\infty)$, decreases $r_\astr$, and does not decrease $r_\bstr$ for $\bstr \in\Str\setminus\{\astr\}$. That will imply that the expected strategic revenue of the transformed algorithm $\A'$ is no lower than the one of the source algorithm $\A$. In this way, we will (one-by-one) make all strategies active.

	Let us consider the set of all non-active strategies. If it is empty, then $\A\in \tilde \Alg(\gb^\ttB)$ and Eq.~(\ref{prop_RatesCompleteActiveness_eq1}) holds. 
	Otherwise, note that the ``always-reject" strategy $\astr = \nO^T$ is always active, since $S_\astr(0) = 0 = S(0)$. Hence, one can order all non-active strategies  by ``the last $\nI$ index"   $t_1(\astr) = \max \{t|\ a_t = 1\}$. 
	
	We take a non-active strategy $\astr$ with the smallest $t_1(\astr)$, denoting $t_1 := t_1(\astr)$ and the node \mbox{$\n := a_1 a_2 \dots a_{t_1 - 1}$}, and construct a new algorithm $ \A'$ based on the source one $\A$ in the following way.  
	Set $\A' = \A$ and transform the prices $ \A'(\n),  \A'(\rt(\n)), \ldots,  \A'(\lf^{T - t_1-1}(\rt(\n)))$ as follows: 
	\begin{enumerate}
		\item decrease $\A'(\n)$ until the function $S_\astr$ is tangent to the function $S$ in some $v \in [0; +\infty)$;
		\item if $t_1 < T$, increase $ \A'(\lf^j(\rt(\n)))$ for $j = 0, \dots, T - t_1-1$ in such a way that 
\begin{equation}
\label{prop_RatesCompleteActiveness_proof_eq1}
		\g^\ttB_{t_1} \cdot \A'(\n) + \g^\ttB_{t_1 + j+1} \cdot \A'(\lf^j(\rt(\n))) = \mathrm{const}.
\end{equation}
	\end{enumerate}

	Since we chosen $\astr$ with the smallest $t_1(\astr)$ among non-active strategies the price $\A'(\n)$ obtained in the step~1 is non-negative (and, thus, this step is correct).
	Indeed, substitute the $t_1$-th component in $\astr$ by $0$ and denote the obtained strategy by $\bstr$. Due to selection of $\astr$, the strategy $\bstr$ is active. 
	Therefore, assume $\A'(\n)$ is decreased to $0$, then the function $S_\astr(v)$ becomes equal to $S_\bstr(v) + \g^\ttB_{t_1} v$ by the definition. Since $S_\bstr$ is tangent to $S$, the increase of its slope by $\g^\ttB_{t_1}$ will result in intersection with $S$. This means that $S_\astr$ will be tangent to $S$ before $\A'(\n)$ reaches $0$.
	
	Now let us prove that the transformation $\A'$ satisfies properties announced at the beginning of the proof. 
	Let $\bstr \in\Str\setminus\{\astr\}$. 
	The step~2 implies that the transformation does not change $S_\bstr$.
	For a strategy $\bstr$ that does not come through the node $\rt(\n)$, the revenue $r_\bstr$ remains the same, since the algorithm prices that contribute to $r_\bstr$ are not altered. For $\bstr \neq \astr$ that comes through the node $\rt(\n)$, let us prove that $r_\bstr$ can only increase. Since $\bstr \neq \astr$ there is a round $t=t_1+j+1, j\ge0,$ where  $b_t=1$.  Let $j$ s.t.\ this $t$ is the first round of acceptance after reaching the node $\rt(\n)$, and let us denote the node where this acceptance take place by  $\m:=\lf^j(\rt(\n))$. Therefore, one can write  the following expression for the increment of  $r_\bstr$:
$
\g^\ttS_{t_1} \left( \A'(\n) - \A(\n) + ({\g^\ttS_{t_1 + j+1}}/{\g^\ttS_{t_1}}) \big( \A'(\m) - \A(\m) \big) \right) =  $
$ = \g^\ttS_{t_1} \left(-({\g^\ttB_{t_1 + j+1}}/{\g^\ttB_{t_1}})\big (\A'(\m) - \A(\m)\big) + ({\g^\ttS_{t_1 + j+1}}/{\g^\ttS_{t_1}}) \big(\A'(\m) - \A(\m)\big)\right) 
\ge 0,$
		where we used Eq.~(\ref{prop_RatesCompleteActiveness_proof_eq1}) to obtain the first equation and used $\nub(\gb^\ttB) \le \nub(\gb^\ttS)$ to obtain the last inequality. So, $r_\bstr$ can only increase for $\bstr \in\Str\setminus\{\astr\}$.
	
	Finally, since $S_\astr$ becomes tangent to $S$, which is convex (see Remark~\ref{remark_strat_prop}), the function $S_\astr$ either equals to $S$ exactly in one point $v\in[0; +\infty)$ or coincides with $S_\bstr$ for some $\bstr \in\Str\setminus\{\astr\}$. The latter case is impossible since a function $S_\bstr$ have different slope for different strategy $\bstr$, because of regularity of $\g^\ttB$. Therefore, the optimal strategy does not change for the buyer with any valuation $v$ except the only one s.t.\ $S_\astr(v)=S(v)$, and the strategic revenue expectation is not affected by the decrease of $r_\astr$ (due to continuity of the valuation distribution $D$). 
	Thus, $\Expect\left[\SRev_{\gb^\ttS, \gb^\ttB}(\A, V)\right] \le \Expect\left[\SRev_{\gb^\ttS, \gb^\ttB}( \A', V)\right]$ and the number of non-active strategies of $\A'$ is reduced by one w.r.t.\ $\A$. After that, we repeatedly apply the above described transformation to $\A'$ until the resulted algorithm has no  non-active strategies. In this way, we get $\tilde\A\in\tilde\Alg$ that satisfies Eq.~(\ref{prop_RatesCompleteActiveness_proof_eq1}).
\end{proof}

An attentive reader may note that the the finiteness of the game is crucially used in the assumption that any (non-active) strategy  $\astr$ has ``the last  $\nI$ index" $t_1(\astr)$. It is certainly untrue for infinite strategies since there are the ones that accept the offer infinite number of rounds. Therefore, we consider the validity of the Prop.~\ref{prop_RatesCompleteActiveness}'s statement (or its analogue) for the infinite game as an open research question that could be considered as a possible direction for future work.

\begin{corollary}
	 \label{cor_completely_active_corollary}
	In a $T$-round game, let $\gb^\ttS, \gb^\ttB$ be discounts s.t.\ $\nub(\gb^\ttB) \le \nub(\gb^\ttS)$ and $\gb^\ttB$ is a regular one. If there exists an optimal pricing algorithm $\A^* \in \Alg$,  then there exists an optimal completely active algorithm $\tilde \A^* \in \tilde \Alg(\gb^\ttB)$. Thus, $\max_{\A \in \Alg} \Expect[\SRev_{\gb^\ttS, \gb^\ttB} (\A, V)] = \max_{\tilde\A \in \tilde \Alg(\gb^\ttB)} \Expect[\SRev_{\gb^\ttS, \gb^\ttB} (\tilde\A, V)]$.
\end{corollary}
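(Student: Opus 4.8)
The plan is to read this off almost immediately from Proposition~\ref{prop_RatesCompleteActiveness}, using the extra hypothesis that an optimal algorithm exists. First I would fix an optimal $\A^* \in \Alg$, so that $\Expect[\SRev_{\gb^\ttS,\gb^\ttB}(\A^*,V)] = \max_{\A\in\Alg}\Expect[\SRev_{\gb^\ttS,\gb^\ttB}(\A,V)]$. The hypotheses of the corollary ($\nub(\gb^\ttB)\le\nub(\gb^\ttS)$ and $\gb^\ttB$ regular) are precisely those of Proposition~\ref{prop_RatesCompleteActiveness}, so applying that proposition to $\A=\A^*$ yields a completely active algorithm $\tilde\A^* \in \tilde\Alg(\gb^\ttB)$ with $\Expect[\SRev_{\gb^\ttS,\gb^\ttB}(\A^*,V)] \le \Expect[\SRev_{\gb^\ttS,\gb^\ttB}(\tilde\A^*,V)]$.

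Next I would note that $\tilde\Alg(\gb^\ttB)\subseteq\Alg$, so the expected strategic revenue of $\tilde\A^*$ cannot exceed $\max_{\A\in\Alg}\Expect[\SRev_{\gb^\ttS,\gb^\ttB}(\A,V)]$, which by choice of $\A^*$ equals $\Expect[\SRev_{\gb^\ttS,\gb^\ttB}(\A^*,V)]$. Hence the displayed inequality is in fact an equality and $\tilde\A^*$ is itself optimal, which is the first assertion. For the identity of maxima, the inclusion $\tilde\Alg(\gb^\ttB)\subseteq\Alg$ gives $\max_{\tilde\A\in\tilde\Alg(\gb^\ttB)}\Expect[\SRev_{\gb^\ttS,\gb^\ttB}(\tilde\A,V)] \le \max_{\A\in\Alg}\Expect[\SRev_{\gb^\ttS,\gb^\ttB}(\A,V)]$, with the left-hand maximum attained at the $\tilde\A^*$ just constructed; and optimality of $\tilde\A^*$ within all of $\Alg$ gives the reverse inequality. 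So the two maxima coincide.

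There is essentially no obstacle here: all the real content sits in Proposition~\ref{prop_RatesCompleteActiveness}. The only point needing (minor) care is why the corollary carries the extra hypothesis ``if there exists an optimal pricing algorithm $\A^*$'': the transformation of Proposition~\ref{prop_RatesCompleteActiveness} only produces an algorithm that is \emph{no worse}, not one that attains the supremum, so without assuming attainability over $\Alg$ one could not conclude that the completely active class contains an optimum. Given that hypothesis, the argument above is complete.
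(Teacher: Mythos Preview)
Your proposal is correct and matches the paper's own reasoning: the paper simply remarks that the corollary ``can be easily obtained from the previous proposition,'' and your argument---apply Proposition~\ref{prop_RatesCompleteActiveness} to an assumed optimum $\A^*$, then use $\tilde\Alg(\gb^\ttB)\subseteq\Alg$ to force equality---is exactly that easy derivation, including the observation about why the existence hypothesis is needed.
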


This corollary can be easily obtained from the previous proposition and tells us that one can search for an optimal pricing algorithm among the class of  completely active ones $\tilde\Alg$.
Our next goal is to show that this class of algorithms $\tilde\Alg$ can be linearly parametrized by the set $\Delta^k:= \{\vb = \{v_j\}_{j=1}^k\in \mathbb{R}^k|\ 0 \le v_1 \le \dots \le v_k \}$, where $k := k(T) := 2^T - 1$. In order to do this, first of all, we introduce several matrix and vector notations.
First, from here on in our paper we fix an order of nodes $\N = \{\n_1,\ldots,\n_k\}$\footnote{E.g., a consistent order: the nodes from the left subtree come before the root node $\estr$, and the ones from the right subtree come after the root $\estr$; then we recursively repeat this rule for the left and right subtrees.}, and, given this, we represent an algorithm $\A\in\Alg$ as the vector of its prices $\A = (\A(\n_1),\ldots,\A(\n_k))$; note we use the same notation both for the algorithm and its vector representation, since the object type could be easily restored from the context where it is used.
We also introduce the map $\pb: \Str\times\Alg \rightarrow \mathbb{R}^T$, where  $\pb(\astr, \A)$ is the vector of consecutively offered prices by the algorithm $\A\in\Alg$ along the path $\astr\in\Str$. 

Second, given a regular discount $\gb$, we introduce the notion of \emph{$\gb$-dependent natural order} of the buyer strategies $\Str = \{\nO,\nI\}^T$: $\astr\prec_\gb\bstr \Leftrightarrow \gb^\ttB \cdot \astr < \gb^\ttB \cdot \bstr$ for any $\astr, \bstr\in\Str$. The important property of this order consists in that the slope of the $\gb$-discounted surplus function $S_\astr$ is lower than the one of $S_\bstr$ when $\astr\prec_\gb\bstr$. Using this order, we index the strategies: $\Str = \{\astr^0,\ldots,\astr^k\}$; note that the strategy $\nO^T$  is always the first one $\astr^0$, while the strategy $\nI^T$ is the last one $\astr^k$.
Third, given another discount $\gb'$, we introduce the payment vector $\rb(\gb', \gb, \A)$, whose $j$-th component is $r_j(\gb', \gb, \A) := \gb' \cdot \pb(\astr^j, \A)$ for $j = 1,\ldots,k$ (note that we exclude the zero payment corresponded to the zeroth strategy $\astr^0$).
We treat all vectors as vector-columns in our matrix operations.

Finally, we introduce the following $k \times k$  matrices: 
	\begin{itemize}
	\item  $J_T$ is a two-diagonal matrix with $1$ on the diagonal and $-1$ under the diagonal;
	\item $Z_T(\gb) = \mathrm{diag}(z_1,\ldots,z_k)$, where  with $z_j = (\gb \cdot \astr^j - \gb \cdot \astr^{j- 1})^{-1}$ for $j = 1,\ldots,k$;
	\item $K_T(\gb, \gb')=((\kappa_{ij}))_{i,j =  1,\ldots,k}$, where $\kappa_{ij} = \g'_ta_t$ if the path $\astr^i\in\Str$ passes through the node $\n_j\in\N$ whose round is $t$\footnote{In other words, the node $\n_j$ can be represented in the string notation as $a^i_1\ldots a^i_{t-1}$ for some $1\le t \le T$ (see Sec.~\ref{subsec_Notations}).}, and $\kappa=0$, otherwise. Note that, by the definition, the $i$-th component of the vector $K_T(\gb, \gb') \A$ is equal to $\sum_{t = 1}^T \g'_t a^i_t \A(a^i_1 \dots a^i_{t - 1})$.
\end{itemize}

\begin{lemma} 
	\label{lemma_linear_transormation}
	In a $T$-round game, let $\gb$ be a regular discount, the strategies $\Str$ are naturally ordered by $\gb$ (as above), while the matrix and vector notations are introduced as above, then the set of completely active pricing algorithms $\tilde \Alg(\gb)$ (i.e., their vector representations) can be linearly mapped onto $\Delta^{k(T)}$ by the matrix $W_T(\gb) := Z_T(\gb)J_TK_T(\gb, \gb)$, which is correctly defined and  is invertible. 
\end{lemma}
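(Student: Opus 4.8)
The plan is to establish three facts about $W_T(\gb)=Z_T(\gb)J_TK_T(\gb,\gb)$ — that it is well-defined, that it is an invertible $k\times k$ matrix ($k=2^T-1$), and that it maps the set $\tilde\Alg(\gb)$ of completely active price vectors exactly onto $\Delta^{k(T)}$ — after which the claim follows, since an invertible linear map that restricts to a surjection onto $\Delta^{k}$ must restrict to a bijection there. Well-definedness concerns only $Z_T(\gb)$: regularity of $\gb$ gives $\gb\cdot\astr^j\neq\gb\cdot\astr^{j-1}$, and the $\gb$-natural order gives $\gb\cdot\astr^j>\gb\cdot\astr^{j-1}$, so each $z_j=(\gb\cdot\astr^j-\gb\cdot\astr^{j-1})^{-1}$ is a well-defined positive number (hence $Z_T(\gb)$ is a nonsingular diagonal matrix). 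Since $J_T$ is lower bidiagonal with unit diagonal, $\det J_T=1$, so invertibility of $W_T(\gb)$ reduces to that of $K_T(\gb,\gb)$, and for this I would show the linear map $\A\mapsto K_T(\gb,\gb)\A$ is injective on $\mathbb R^{k}$ (domain and codomain having equal dimension). Recall the $i$-th coordinate of $K_T(\gb,\gb)\A$ equals $\Rev_\gb(\A,\astr^i)$; if $K_T(\gb,\gb)\A=0$, recover the prices by induction on node depth: for a node $\n$ of round $t_0$ with string $c_1\cdots c_{t_0-1}$, look at the coordinate indexed by the strategy that follows the path to $\n$, accepts at $\n$, and then rejects forever; every acceptance before round $t_0$ sits at a proper ancestor of $\n$, whose price already vanishes by induction, so this coordinate equals $\g_{t_0}\A(\n)$, and $\g_{t_0}>0$ forces $\A(\n)=0$.

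The geometric core is to identify the coordinates of $W_T(\gb)\A$ with the break-points of the upper envelope of the surplus lines. Put $q_j:=\gb\cdot\astr^j$ and $r_j:=\Rev_\gb(\A,\astr^j)=(K_T(\gb,\gb)\A)_j$, so $S_{\astr^j}(v)=q_jv-r_j$, with $q_0=r_0=0$ for $\astr^0=\nO^T$. A one-line computation gives $(W_T(\gb)\A)_j=z_j(r_j-r_{j-1})=(r_j-r_{j-1})/(q_j-q_{j-1})=:v_j$, which is exactly the abscissa of the intersection of the consecutive lines $S_{\astr^{j-1}}$ and $S_{\astr^j}$. The $\gb$-natural order makes the slopes strictly increasing, $0=q_0<q_1<\dots<q_k$, and then the elementary envelope fact applies: for lines with distinct increasing slopes, every line touches the upper envelope if and only if $v_1\le v_2\le\dots\le v_k$. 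I would prove the ``only if'' direction by contraposition — if $v_j>v_{j+1}$ for some $j$, then $S_{\astr^{j-1}}(v)>S_{\astr^j}(v)$ for $v<v_j$ while $S_{\astr^{j+1}}(v)>S_{\astr^j}(v)$ for $v>v_{j+1}$, and since $v_{j+1}<v_j$ these two half-lines cover all of $\mathbb R$, so $S_{\astr^j}$ lies strictly below $S(v)=\max_iS_{\astr^i}(v)$ everywhere, contradicting complete activeness. Combined with $v_1=r_1/q_1\ge0$ (prices and discounts are non-negative, $q_1>q_0=0$), this gives $W_T(\gb)\big(\tilde\Alg(\gb)\big)\subseteq\Delta^{k(T)}$.

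For surjectivity, fix $\vb=(v_1,\dots,v_k)\in\Delta^{k(T)}$ and set $\A:=W_T(\gb)^{-1}\vb$; it remains to check $\A\ge0$ componentwise (so $\A$ is a genuine algorithm) and $\A\in\tilde\Alg(\gb)$. Unwinding the definition of $W_T(\gb)$ gives $r_j=\Rev_\gb(\A,\astr^j)=\sum_{i=1}^j(q_i-q_{i-1})v_i$, non-decreasing in $j$ because every increment is $\ge0$; equivalently $\Rev_\gb(\A,\cdot)$ is monotone along the $\gb$-natural order of $\Str$. Non-negativity of the prices then follows from the node-recovery identity already used: for a node $\n$ of round $t_0$ one has $\g_{t_0}\A(\n)=\Rev_\gb(\A,\astr(\n))-\Rev_\gb(\A,\astr'(\n))$, where $\astr(\n)$ follows the path to $\n$, accepts at $\n$, and rejects afterwards, and $\astr'(\n)$ differs from it only by rejecting at $\n$; since the two strategies differ only in round $t_0$, $\gb\cdot\astr(\n)-\gb\cdot\astr'(\n)=\g_{t_0}>0$, so $\astr(\n)$ comes strictly later in the $\gb$-natural order and hence $\Rev_\gb(\A,\astr(\n))\ge\Rev_\gb(\A,\astr'(\n))$, giving $\A(\n)\ge0$. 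Finally, since the break-points are sorted and the slopes strictly increasing, the piecewise-linear function equal to $S_{\astr^j}$ on $[v_j,v_{j+1}]$ (with $v_0:=-\infty$, $v_{k+1}:=+\infty$) is convex and lies above every $S_{\astr^i}$ — each $S_{\astr^i}$ being one of its supporting lines — hence coincides with $S$, so every surplus line touches $S$ on a nonempty subinterval of $[0,+\infty)$, i.e.\ $\A\in\tilde\Alg(\gb)$. I expect the surjectivity step to be the main obstacle: specifically, proving that the unique preimage $W_T(\gb)^{-1}\vb$ has non-negative prices, which requires feeding the monotonicity of $\Rev_\gb(\A,\cdot)$ along the $\gb$-order into the node-recovery identity.
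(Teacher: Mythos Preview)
Your argument is correct and follows the same overall structure as the paper's: identify the coordinates of $W_T(\gb)\A$ with the intersection abscissae $v_j$ of consecutive surplus lines, then establish the biconditional $\A\in\tilde\Alg(\gb)\Leftrightarrow\vb\in\Delta^{k(T)}$. The differences are in execution. For invertibility of $K_T$, the paper sketches a block-diagonal induction on $T$ (rearranging rows and columns into two $K_{T-1}$-like blocks), whereas your node-recovery injectivity argument is more direct and reuses exactly the identity you need later. For the forward direction, the paper runs a nested induction on the ordering of the $v_j$, while your single contrapositive (an out-of-order pair $v_j>v_{j+1}$ sandwiches $S_{\astr^j}$ strictly below $S_{\astr^{j-1}}$ and $S_{\astr^{j+1}}$) is shorter. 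Most notably, in the surjectivity step the paper's appendix proves only complete activeness of the preimage and does not verify that $W_T(\gb)^{-1}\vb$ has non-negative prices; your monotonicity-of-revenue argument along the $\gb$-natural order, combined with the node-recovery identity $\g_{t_0}\A(\n)=\Rev_\gb(\A,\astr(\n))-\Rev_\gb(\A,\astr'(\n))$, fills that gap and is genuinely needed for the ``onto'' claim and for the equality in Theorem~\ref{th_problem_equivalence}.
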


\begin{proof}
	First, by the definition of the matrix $K_T(\gb, \gb)$ and the vector $\A$, we have that the payment vector $\rb(\gb, \gb, \A) = K_T(\gb, \gb)\A$.
	Second, let us denote the intersection point of the lines $S_{\astr^j}$ and $S_{\astr^{j - 1}}$ by $v_j$ for $j = 1, \dots, k$ and combine them in the vector $\vb = (v_1,\ldots,v_k)$. 
	From the identities
	$$
	 \gb \cdot \astr^j v_j -  r_j(\gb, \gb, \A) = S_{\astr^j}(v_j) = 	S_{\astr^{j-1}}(v_j)  =\gb \cdot \astr^{j - 1} v_j -  r_j(\gb, \gb, \A)  ,  \qquad j = 1, \dots, k,
	$$
	by simple arithmetic calculations, one can show that these intersection points can be expressed via the payment vector in the following matrix form: $\vb = Z_T(\gb) J_T \rb(\gb, \gb, \A)$. Combining with the previous finding, we have that $\vb = Z_T(\gb) J_TK_T(\gb, \gb)\A$. So, we obtain in this way the linear map $\wb_\gb (\A) := W_T(\gb)\A : \Alg \rightarrow \mathbb{R}^k$  that depends on $\gb$.
	
	The proof of the statement that $\wb_\gb (\A) \in \Delta^{k(T)}$ if and only if $\A \in \tilde \Alg(\gb)$  could be made via two inductions and is rather technical. Hence, it is deferred to Appendix~\ref{app_subsec_proof_lemma_linear_transormation} due to space constraints.
	The matrices $Z_T$, $J_T$, and $K_T$ are invertible\footnote{This fact is trivial for matrices $Z_T$ and $J_T$. To show this for $K_T$, just apply the induction. By rearranging of rows and columns of $K_T$ (it does not affect the property of invertibility)  one can obtain a  block diagonal matrix  with two blocks. Each of these blocks is based on a matrix with the form like $K_{T-1}$.}, thus, both the matrix $W_T$ and the map $\wb_\gb: \Alg \rightarrow \mathbb{R}^k$ are invertible as well. Hence, $\tilde \Alg(\gb)$ is linearly mapped onto $\Delta^{k(T)}$ by $\wb_\gb$.
\end{proof}

\begin{proposition}  
	\label{prop_problem_reduction}
	In a $T$-round game, let $\gb^\ttS$ be a discount, $\gb^\ttB$ be a regular discount, the strategies $\Str$ are naturally ordered by $\gb^\ttB$ (as above), while the matrix and vector notations are introduced as above. 
	Then there exists an invertible linear transformation $\wb_{\gb^\ttB}:\tilde \Alg(\gb^\ttB) \to \Delta^{k}, k = k(T)$ s.t., for any completely active pricing algorithm $\A \in \tilde \Alg(\gb^\ttB)$, its expected strategic revenue has the form
	\begin{equation}
	\label{prop_problem_reduction_eq_1}
\Expect_{V\sim D}\left[\SRev_{\gb^\ttS, \gb^\ttB}(\A, V)\right] = L_{D,\gb^\ttS, \gb^\ttB}(\vb) \quad   \hbox{for} \quad \vb := \wb(\A), 
	\end{equation}
	where
	\begin{equation}
\label{prop_problem_reduction_eq_2}
L_{D,\gb^\ttS, \gb^\ttB}(\vb) := (1 - F_D(\vb))^\intercal  \Xi_T(\gb^\ttS, \gb^\ttB)\vb, \qquad  \vb\in\Delta^{k},
\end{equation}
	 $\Xi_T(\gb^\ttS, \gb^\ttB):= J_T \cdot K_T(\gb^\ttB, \gb^\ttS)  K_T(\gb^\ttB, \gb^\ttB)^{-1}  J_T^{-1}  Z_T(\gb^\ttB)^{-1}$ is the invertible $k\times k$ matrix that depends only on the discounts, the vector $(1 - F_D(\vb)) \in \mathbb{R}^{k}$ has the $i$-th component equal to $1 - F_D(v_i)$, and $F_D$ is the cumulative distribution function of the variable $V$. 
\end{proposition}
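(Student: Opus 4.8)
The plan is to combine Lemma~\ref{lemma_linear_transormation} with a direct bookkeeping of the expected strategic revenue in terms of the intersection points $\vb$. Since $\gb^\ttB$ is regular, Lemma~\ref{lemma_linear_transormation} already supplies the invertible linear map $\wb = \wb_{\gb^\ttB} = W_T(\gb^\ttB) = Z_T(\gb^\ttB)J_TK_T(\gb^\ttB,\gb^\ttB)$ that sends $\tilde\Alg(\gb^\ttB)$ bijectively onto $\Delta^k$; so the only thing left is to express $\Expect_{V\sim D}\left[\SRev_{\gb^\ttS,\gb^\ttB}(\A,V)\right]$ as a function of $\vb=\wb(\A)$ and check that it equals the bilinear form in Eq.~(\ref{prop_problem_reduction_eq_2}).

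First I would write the expected revenue as a sum over strategies. For a completely active $\A$, the intersection points $v_1\le\dots\le v_k$ partition $[0,+\infty)$ into consecutive intervals on each of which a fixed strategy $\astr^j$ is optimal: $\astr^j$ is optimal exactly for $v\in[v_j,v_{j+1}]$ (with $v_0:=0$, $v_{k+1}:=+\infty$), because the $\gb^\ttB$-natural order of strategies matches the order of slopes of the $S_{\astr^j}$ and $S$ is their upper envelope. When the buyer plays $\astr^j$ the seller collects the $\gb^\ttS$-discounted revenue $r_j(\gb^\ttS,\gb^\ttB,\A) = \gb^\ttS\cdot\pb(\astr^j,\A)$, i.e.\ the $j$-th component of $\rb(\gb^\ttS,\gb^\ttB,\A) = K_T(\gb^\ttB,\gb^\ttS)\A$ (using the formula for $K_T(\gb,\gb')\A$ from the definition of $K_T$; note the discount carried by the payment is $\gb^\ttS$ while the path and ordering are governed by $\gb^\ttB$). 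Hence
\[
\Expect_{V\sim D}\left[\SRev_{\gb^\ttS,\gb^\ttB}(\A,V)\right] = \sum_{j=1}^{k} r_j(\gb^\ttS,\gb^\ttB,\A)\,\Prob[v_j\le V< v_{j+1}] = \sum_{j=1}^{k} r_j(\gb^\ttS,\gb^\ttB,\A)\bigl(F_D(v_{j+1})-F_D(v_j)\bigr),
\]
where the $j=0$ term vanishes since $r_0=0$. An Abel summation (summation by parts), using $F_D(v_{k+1})=1$ and continuity of $D$, rewrites this as $\sum_{j=1}^{k}\bigl(r_j - r_{j-1}\bigr)\bigl(1-F_D(v_j)\bigr) = (1-F_D(\vb))^\intercal\, J_T\,\rb(\gb^\ttS,\gb^\ttB,\A)$, since $J_T$ is exactly the first-difference operator $r\mapsto (r_j-r_{j-1})_j$.

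Next I would substitute $\rb(\gb^\ttS,\gb^\ttB,\A) = K_T(\gb^\ttB,\gb^\ttS)\A$ and then eliminate $\A$ in favor of $\vb$. From Lemma~\ref{lemma_linear_transormation}, $\vb = Z_T(\gb^\ttB)J_TK_T(\gb^\ttB,\gb^\ttB)\A$, so $\A = K_T(\gb^\ttB,\gb^\ttB)^{-1}J_T^{-1}Z_T(\gb^\ttB)^{-1}\vb$ (all three factors invertible by the footnote in Lemma~\ref{lemma_linear_transormation}). Plugging in gives
\[
\Expect_{V\sim D}\left[\SRev_{\gb^\ttS,\gb^\ttB}(\A,V)\right] = (1-F_D(\vb))^\intercal\, J_T\, K_T(\gb^\ttB,\gb^\ttS)\, K_T(\gb^\ttB,\gb^\ttB)^{-1}\, J_T^{-1}\, Z_T(\gb^\ttB)^{-1}\,\vb = (1-F_D(\vb))^\intercal\,\Xi_T(\gb^\ttS,\gb^\ttB)\,\vb,
\]
which is precisely Eq.~(\ref{prop_problem_reduction_eq_2}); invertibility of $\Xi_T(\gb^\ttS,\gb^\ttB)$ follows from invertibility of each factor ($J_T$, $Z_T(\gb^\ttB)$, $K_T(\gb^\ttB,\gb^\ttB)$ always, and $K_T(\gb^\ttB,\gb^\ttS)$ by the same block-triangular induction as for $K_T$ in Lemma~\ref{lemma_linear_transormation}, since both arguments are positive).

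The main obstacle I expect is the careful verification that on the interval between consecutive intersection points $v_j$ and $v_{j+1}$ the optimal strategy really is $\astr^j$ — i.e., that the complete-activeness hypothesis plus the $\gb^\ttB$-natural ordering forces the upper envelope $S$ to traverse the lines $S_{\astr^0},S_{\astr^1},\dots,S_{\astr^k}$ in exactly that order, with no line skipped and no out-of-order switching. This needs the regularity of $\gb^\ttB$ (distinct slopes), convexity of $S$, and the fact that every $S_{\astr^j}$ is tangent to $S$; it is essentially the same monotonicity argument used in Lemma~\ref{Rlemma} and in the proof of Lemma~\ref{lemma_linear_transormation}, so I would invoke that appendix lemma rather than redo it. The rest is the linear-algebra substitution above, which is routine once the identities $\rb = K_T(\gb^\ttB,\gb^\ttS)\A$ and $\vb = W_T(\gb^\ttB)\A$ are in hand.
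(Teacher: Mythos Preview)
Your proof is correct and follows essentially the same approach as the paper: use Lemma~\ref{lemma_linear_transormation} for the bijection $\wb_{\gb^\ttB}$, compute the expected revenue as $\sum_j r_j(F_D(v_{j+1})-F_D(v_j))$ by identifying which strategy is optimal on each interval $[v_j,v_{j+1})$, rewrite this via summation by parts (the paper phrases it as the matrix identity $dF(\vb)=J_T^\intercal(1-F_D(\vb))$), and then substitute $\rb=K_T(\gb^\ttB,\gb^\ttS)\A$ and $\A=W_T(\gb^\ttB)^{-1}\vb$. Your identification of the one delicate point---that complete activeness plus regularity forces the envelope $S$ to meet the $S_{\astr^j}$ in the natural order---and your deferral to the argument in the appendix to Lemma~\ref{lemma_linear_transormation} are exactly what the paper does (it simply calls this ``evident'').
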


\begin{proof}
	Let us take the transformation $\wb_{\gb^\ttB}$ defined by $\wb_{\gb^\ttB} (\A) := W_T(\gb^\ttB)\A$ (as in the proof of Lemma~\ref{lemma_linear_transormation}) and $\vb = \wb_{\gb^\ttB} (\A)$. Recall that, in this case, the $j$-th component of $\vb$ is the intersection point of the straight-line functions $S_{\astr^j}$ and $S_{\astr^{j - 1}}$. It is evident that the strategic buyer chooses the strategy $\astr^j$, when his valuation $v$ is in the segment $[v_{j}; v_{j + 1})$ for $j \ge 0$ (to be formally correct, we set $v_0 := 0, v_{k + 1} := +\infty$). Thus, the expected strategic revenue equals to
	$$
	\Expect\left[\SRev_{\gb^\ttS, \gb^\ttB}(\A, V)\right] = \sum_{j = 1}^k (F_D(v_{j + 1}) - F_D(v_j))  (\gb^\ttS \cdot \pb(\astr^j, \A)) = \sum_{j = 1}^k (F_D(v_{j + 1}) - F_D(v_j))  r_j(\gb^\ttS,\gb^\ttB , \A),
	$$
	see the definitions of $\pb$ and $\rb$ before Lemma~\ref{lemma_linear_transormation}.
	Let us denote by $dF(\vb)$ the $k$-dimensional vector with  $F_D(v_{j + 1}) - F_D(v_j)$ in the $j$-th component, then, using the identity $dF(\vb) = J_T^\intercal (1 - F_D(\vb))$, we have
	$$
		\Expect\left[\SRev_{\gb^\ttS, \gb^\ttB}(\A, V)\right] = dF(v)^\intercal  \rb(\gb^\ttS,\gb^\ttB , \A) = (1 - F_D(v))^\intercal  J_T \rb(\gb^\ttS,\gb^\ttB , \A).
	$$
	From the definition of the matrix $K_T$, one can obtain  $\rb(\gb^\ttS,\gb^\ttB , \A) = K_T(\gb^\ttB, \gb^\ttS) \A$ (as in the proof of Lemma~\ref{lemma_linear_transormation}). 
	Finally, we have $\A = W_T(\gb^\ttB)^{-1} \vb  = K_T(\gb^\ttB, \gb^\ttB)^{-1}  J_T^{-1} Z_T(\gb^\ttB)^{-1} \vb$ due to $\vb = W_T(\gb^\ttB)\A$ and  invertibility of $\wb_{\g^{\ttB}}$.
	
	So, let us combine all together:
	$$
		 \Expect\left[\SRev_{\gb^\ttS, \gb^\ttB}(\A, V)\right] = (1 - F_D(\vb))^\intercal  J_T \cdot K_T(\gb^\ttB, \gb^\ttS)  K_T(\gb^\ttB, \gb^\ttB)^{-1}  J_T^{-1}  Z_T(\gb^\ttB)^{-1}  \vb,
	$$
	where the matrix product between $(1 - F_D(\vb))^\intercal$ and $\vb$ is exactly the matrix $\Xi_T(\gb^\ttS, \gb^\ttB)$. 
\end{proof}

Corollary~\ref{cor_completely_active_corollary} and Proposition~\ref{prop_problem_reduction} immediately infer the following key result of our study.
\begin{theorem} 
	\label{th_problem_equivalence}
	In a $T$-round game, let $\gb^\ttS, \gb^\ttB$ be discounts s.t.\ $\nub(\gb^\ttB) \le \nub(\gb^\ttS)$ and $\gb^\ttB$ is a regular one.
	The optimization problem of finding an optimal algorithm is equivalent to maximization of the multivariate functional $L_{D,\gb^\ttS, \gb^\ttB}(\cdot)$ over the set $\Delta^k = \{\vb \in \mathbb{R}^k|\ 0 \le v_1 \le \dots \le v_k \}$,  $k = 2^T-1$, i.e.,
	\begin{equation}
	\label{th_problem_equivalence_eq1}
	\max_{\A \in \Alg} \Expect_{V\sim D}\left[\SRev_{\gb^\ttB, \gb^\ttS} (\A, V)\right] = \max_{\vb \in \Delta^k} L_{D,\gb^\ttS, \gb^\ttB}(\vb),
	\end{equation}
	 where $L_{D,\gb^\ttS, \gb^\ttB}$ is defined in Eq.~(\ref{prop_problem_reduction_eq_2}) and depends only on the discounts and the distribution $D$ of the valuation variable $V$.
\end{theorem}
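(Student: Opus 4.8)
The plan is to assemble the statement from the two reductions already established and then to upgrade the resulting equality of suprema to one of maxima. First I would record the identity
$\sup_{\A\in\Alg}\Expect[\SRev_{\gb^\ttS,\gb^\ttB}(\A,V)] = \sup_{\tilde\A\in\tilde\Alg(\gb^\ttB)}\Expect[\SRev_{\gb^\ttS,\gb^\ttB}(\tilde\A,V)]$.
The inequality ``$\ge$'' is immediate from $\tilde\Alg(\gb^\ttB)\subseteq\Alg$, and ``$\le$'' is exactly Proposition~\ref{prop_RatesCompleteActiveness}: every algorithm is weakly dominated in expected strategic revenue by a completely active one. I would deliberately invoke Proposition~\ref{prop_RatesCompleteActiveness} here rather than Corollary~\ref{cor_completely_active_corollary}, since the corollary presupposes existence of an optimizer, which I prefer to postpone to the end; the hypotheses $\nub(\gb^\ttB)\le\nub(\gb^\ttS)$ and regularity of $\gb^\ttB$ are consumed precisely in this step.

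Next I would transport the right-hand side onto $\Delta^k$, $k=2^T-1$. By Lemma~\ref{lemma_linear_transormation} the linear map $\wb_{\gb^\ttB}\colon\tilde\A\mapsto W_T(\gb^\ttB)\tilde\A$ is a bijection of $\tilde\Alg(\gb^\ttB)$ onto $\Delta^k$, and by Proposition~\ref{prop_problem_reduction} it intertwines the two objectives, $\Expect[\SRev_{\gb^\ttS,\gb^\ttB}(\tilde\A,V)] = L_{D,\gb^\ttS,\gb^\ttB}(\wb_{\gb^\ttB}(\tilde\A))$ for every $\tilde\A\in\tilde\Alg(\gb^\ttB)$. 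A change of variables along this bijection then gives $\sup_{\tilde\A\in\tilde\Alg(\gb^\ttB)}\Expect[\SRev_{\gb^\ttS,\gb^\ttB}(\tilde\A,V)] = \sup_{\vb\in\Delta^k}L_{D,\gb^\ttS,\gb^\ttB}(\vb)$, so that, combined with the previous step, the two sides of~(\ref{th_problem_equivalence_eq1}) coincide as suprema.

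It remains to show these suprema are attained; this is where I expect the real work to lie, since $\Delta^k$ is a closed but unbounded cone. I would first check finiteness of the supremum: for $\vb\in\Delta^k$ with associated algorithm $\wb_{\gb^\ttB}^{-1}(\vb)$, nonnegativity of the strategic surplus bounds the $\gb^\ttB$-discounted revenue of strategy $\astr^j$ by $\Gamma^\ttB v_j$, and the monotone ratio $\g^\ttS_t/\g^\ttB_t\le\g^\ttS_T/\g^\ttB_T$ (which is where $\nub(\gb^\ttB)\le\nub(\gb^\ttS)$ enters once more, via Remark~\ref{remark_DiscRateInequality}) turns this into a bound on the $\gb^\ttS$-discounted revenue, yielding $L_{D,\gb^\ttS,\gb^\ttB}(\vb)\le(\g^\ttS_T/\g^\ttB_T)\,\Gamma^\ttB\,\Expect[V]<\infty$ uniformly on $\Delta^k$. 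For attainment I would pass to the compactification $\overline{\Delta}^k:=\{0\le v_1\le\dots\le v_k\le+\infty\}\subset[0,+\infty]^k$: using the finite-mean decay $v(1-F_D(v))\to0$ and $1-F_D(v)\to0$ as $v\to+\infty$, one checks that $L_{D,\gb^\ttS,\gb^\ttB}$ extends continuously to $\overline{\Delta}^k$, where it attains its maximum; the remaining point is to rule out that this maximum is forced onto a face with infinite coordinates, i.e.\ to show that no optimal configuration benefits from pricing some node (in particular the ``always-accept'' node $\nI^T$) out of reach, so that the maximizer lies in $\Delta^k$ itself. Pulling it back through $\wb_{\gb^\ttB}^{-1}$ yields an optimal completely active algorithm $\tilde\A^\ast\in\Alg$, which, since the suprema over $\tilde\Alg(\gb^\ttB)$ and over $\Alg$ agree, is optimal among all algorithms; this establishes~(\ref{th_problem_equivalence_eq1}) with both sides attained. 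The main obstacle is precisely this escape-to-infinity analysis: the bilinear form $(1-F_D(\vb))^\intercal\Xi_T(\gb^\ttS,\gb^\ttB)\vb$ contains cross terms $(1-F_D(v_i))\,v_j$ with $i<j$ whose growth must be controlled jointly, using the ordering $v_1\le\dots\le v_k$ together with the tail behaviour of $D$.
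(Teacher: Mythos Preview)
Your approach follows the same two-step reduction as the paper---restrict to completely active algorithms, then parametrize by $\Delta^k$---but you are noticeably more careful than the paper on one point. The paper's entire proof is a single sentence invoking Corollary~\ref{cor_completely_active_corollary} and Proposition~\ref{prop_problem_reduction}; it does not separately address whether the maxima in~(\ref{th_problem_equivalence_eq1}) are attained, relying instead on the conditional formulation of Corollary~\ref{cor_completely_active_corollary} (``if there exists an optimal $\A^*\dots$''). You correctly isolate this gap, route through Proposition~\ref{prop_RatesCompleteActiveness} to get an unconditional equality of suprema, and then treat attainment as a distinct step via compactification of $\Delta^k$. That is a genuine gain in rigor over the paper.

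On the attainment step itself: your sketch is the right shape, and your identification of the cross terms $(1-F_D(v_i))\,\xi_{ij}\,v_j$ with $i<j$ as the potential obstruction to continuous extension is exactly the right worry. It may help to note that in the explicit $T=2$ case computed in Appendix~\ref{app_subsec_dimension_reduction} the matrix $\Xi_T$ is lower triangular, so those cross terms vanish; if one can establish lower-triangularity (or an analogous sign/structure condition) in general, the continuous extension to $\overline{\Delta}^k$ and the observation that sending $v_j\to+\infty$ only kills nonnegative diagonal-type contributions would dispatch the escape-to-infinity issue quickly. Alternatively, a boundary face with $v_{j+1}=\dots=v_k=+\infty$ corresponds to an auction in which the top strategies $\astr^{j+1},\dots,\astr^k$ are never used, and one can argue by comparison with the full game that this cannot strictly dominate. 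Either way, you have correctly located where the remaining work lies; the paper simply does not engage with it.
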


It is quite important to emphasize that the $k$-dimensional functional $L_{D,\gb^\ttS, \gb^\ttB}$  is a \emph{bilinear form} applied to the vectors $\vb$ and $1-F_D(\vb)$. 
This bilinear form is independent of the distribution $D$ and is defined by the matrix $\Xi_T(\gb^\ttS, \gb^\ttB)$.
 In this view, we note that there is a strong relationship between our optimization functional $L_{D,\gb^\ttS, \gb^\ttB}$ and the function  $H_D$ (see Sec.~\ref{sec_ConstAlg}).
In other words, the functional $L_{D,\gb^\ttS, \gb^\ttB}$ constitutes the key basis of optimal algorithms and is fundamental for them as the function $H_D(p) = p\Prob_{V\sim D}[V\ge p ]$ is fundamental for  optimal pricing in static auctions.

\begin{figure}
	\centering
	\vspace{-2mm}
	\includegraphics[width=\columnwidth]{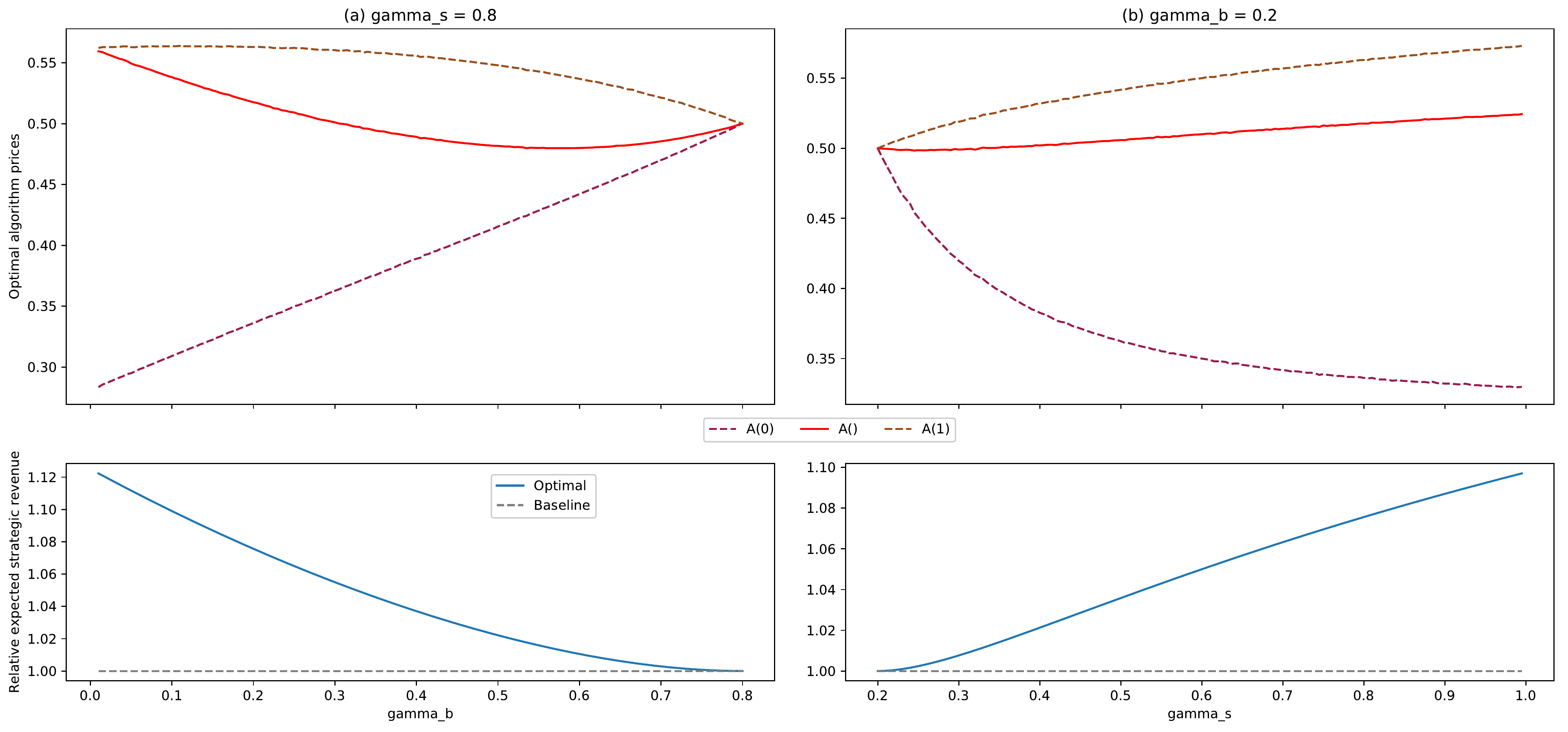}
	\vspace{-8mm}
	\caption{$2$-round game. The prices  $\A^\ast(\nO), \A^\ast(\estr), \A^\ast(\nI)$ and the relative expected strategic revenue (w.r.t.\ $\A^*_D$) of the optimal algorithm $\A^\ast$ for discount rates:
		(a) $\g_\ttS = 0.8$ and various $\g_\ttB$;
		(b) $\g_\ttB = 0.2$ and various $\g_\ttS$.}
	\label{img_T2_Uniform}
	\vspace{-2mm}
\end{figure}

\begin{remark}[Th.~\ref{maintheorem} as a special case of Th.~\ref{th_problem_equivalence}] 
	\label{remark_Lfunctional_forequal}
	Let us consider the case of equal discounts, $\gb^\ttS = \gb^\ttB$, then $K_T(\gb^\ttB, \gb^\ttS)= K_T(\gb^\ttB, \gb^\ttB)$ and the matrix $\Xi_T(\gb^\ttS, \gb^\ttB) = J_T \cdot K_T(\gb^\ttB, \gb^\ttS)  K_T(\gb^\ttB, \gb^\ttB)^{-1}  J_T^{-1}  Z_T(\gb^\ttB)^{-1}$ becomes equal just to the diagonal matrix $ Z_T(\gb^\ttB)^{-1}=\mathrm{diag}(\alpha_1,\ldots,\alpha_k)$, $\alpha_j \!\!=\!\! \gb^\ttB \cdot \astr^j - \gb^\ttB \cdot \astr^{j- 1}$. Hence,
	$$\textstyle L_{D,\gb^\ttB, \gb^\ttB} (\vb) = (1 - F_D(\vb))^\intercal   Z_T(\gb^\ttB)^{-1}\vb  = \sum_{j=1}^k(1-F_D(v_j))\alpha_jv_j  =  \sum_{j=1}^kH_D(v_{j})\alpha_j.$$
	Since $\alpha_j>0$ (due to the dependence of the order of $\{\astr^j\}_j$ on $ \gb^\ttB$) and $H_D(v) \le H_D(p^*(D)), \fa v,$ (see Sec.~\ref{sec_ConstAlg}) we infer that this sum above is maximal when $v_1 = \ldots = v_k = p^*(D)$.
    Thus, in the case of equal discounts, the optimization of the functional $L_{D,\gb^\ttB, \gb^\ttB}$ reduces to the maximization of the function $H_D$ used to find  Myerson's price $p^*(D)$. This is expected and \emph{additionally highlights the strong  similarity of our optimization functional for the dynamic pricing to the one for the static pricing}.
\end{remark}

So, in the particular case of equal discounts, the optimization of $L_{D,\gb^\ttB, \gb^\ttB}$ has no closed form solution since it reduces to the optimization of $H_D$. Hence, we expect that, in the other cases, generally, our optimization problem does not admit a closed form solution as well.
In the next subsections, we numerically find the maximum of $L_{D,\gb^\ttS, \gb^\ttB}$ for several representative games and show that the obtained optimal algorithms are no longer constant and significantly outperform the optimal constant pricing in terms of the expected strategic revenue.

\begin{remark}[on regularity of $\gb^\ttB$]
	\label{remark_RegularDiscount}
	The regularity of the discount $\gb^\ttB$ is used  in two cases, namely, to get: 
	(1)~the uniqueness of $\gb$-dependent natural order of the strategies $\Str$;
	(2)~zero probability of the set of the valuations for which the  optimal buyer strategy is not unique.
	The case~(1) is used in Lemma~\ref{lemma_linear_transormation} and Prop.~\ref{prop_problem_reduction}; there, regularity is just needed for simplicity of presentation of the proofs; these statements (possibly with a slight change) will certainly hold without this restriction on $\gb^\ttB$.
	The case~(2) is used in Prop.~\ref{prop_RatesCompleteActiveness} to guarantee that the strategic buyer will not prefer (with non-zero probability) a strategy that has been non-active before the transformation. So, Prop.~\ref{prop_RatesCompleteActiveness} may not hold without regularity of  $\gb^\ttB$. But we believe that one can obtain a similar result for a series of algorithms that "converges" to a one from $\tilde \Alg$  and use this series to obtain the statement of Th.~\ref{th_problem_equivalence}.
 	In any way, the restriction on the regularity of $\gb^\ttB$ does not harm the main conclusions of our work, because, for a finite horizon, regular discounts are more frequent than non-regular ones, e.g., there is just a finite number of non-regular  geometric discounts for a finite horizon. Hence, our qualitative results from Sec.~\ref{subsec_finite_game_studies} and~\ref{subsec_infinite_game_studies} are not affected by this restriction.
\end{remark}

\begin{figure}
	\centering
	\vspace{-2mm}
	\includegraphics[width=\columnwidth]{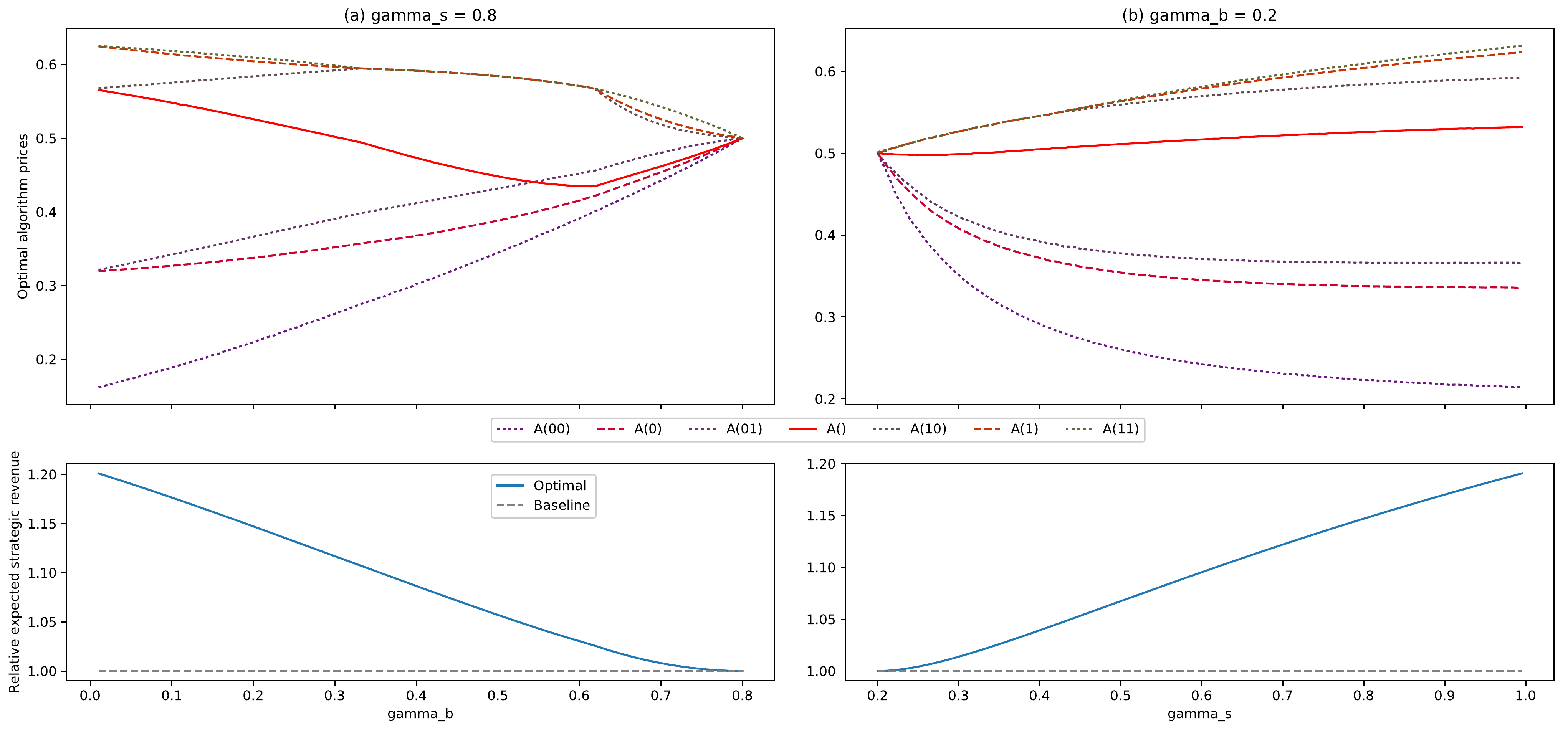}
	\vspace{-8mm}
	\caption{$3$-round game. The prices  $\A^\ast(\n)$, for nodes $\n\in\N$ s.t.\  $|\n|\le 2$, and relative expected strategic revenue (w.r.t.\ $\A^*_D$) of the optimal algorithm $\A^\ast$ for discounts:
		(a) $\g_\ttS = 0.8$ and various $\g_\ttB$;
		(b)  $\g_\ttB = 0.2$ and various $\g_\ttS$.}
	\label{img_T3_Uniform}
	\vspace{-4mm}
\end{figure}

\subsection{Finite games: case study} 
\label{subsec_finite_game_studies}

In this subsection, based on several representative game settings, we demonstrate how to find optimal algorithms using the functional $L_{D,\gb^\ttS, \gb^\ttB}$ and show the key properties of these algorithms. 
We consider finite geometric discounts $\gb^\ttB = \{\g_\ttB^{t - 1}  \Ind_{\{t \le T\}}\}_{t = 1}^\infty, \gb^\ttS = \{\g_\ttS^{t - 1}  \Ind_{\{t \le T\}}\}_{t = 1}^\infty$ for $0 < \g_\ttB < \g_\ttS < 1$ and 
the valuation $V$ uniformly\footnote{During our experimentation, we also analyzed some other distributions. Since the results for them are found to be similar to the ones for uniform, we present these results in Appendix~\ref{app_sec_numsulutions_diff_distr}.} distributed in $[0,1]$. For $2$- and $3$-round games, according to Th.~\ref{th_problem_equivalence}, we find the optimal pricing algorithms by maximizing the functional $L_{D,\gb^\ttS, \gb^\ttB}$ from Eq.~(\ref{prop_problem_reduction_eq_2}).
Their expected revenues are compared with the expected revenue $H_D(p^*(D))\Gamma^\ttS$ of the optimal constant pricing $\A^*_D$ (see Sec.~\ref{sec_ConstAlg}), which is \emph{treated as the baseline} from here on in this paper.

{\bf The case of $T=2$.}
The maximum of the $3$-variate functional $L_{D,\gb^\ttS, \gb^\ttB}$ can be found in the hyperplane $v_2 = v_3$ (the proof is provided in  Appendix~\ref{app_subsec_dimension_reduction}). 
Thus, for $T=2$  the maximization problem is reduced\footnote{This case show that even though the dimension of the problem in right-hand side of Eq.~(\ref{th_problem_equivalence_eq1}) can be reduced, it still could not be reduced to a one-dimensional problem in general. The same we observe in the case of $T=3$.} to a $2$-variate optimization of the function $L_2:\Delta^2 \to \mathbb{R}$, where
$L_2(v_1,v_2) = (1 - F_D(\vb))^\intercal  \Upsilon_2(\g_\ttS, \g_\ttB)\vb$,  $\vb\in\Delta^2$, and $\Upsilon_2(\g_\ttS, \g_\ttB) = 
\begin{pmatrix}
\gamma_\ttS & 0 \\ 
- (\gamma_\ttS - \gamma_\ttB) & 1 + \gamma_\ttS - \gamma_\ttB
\end{pmatrix} $.

Note that, for the uniform distribution $D = U[0; 1]$, $F_D(v) = v$ and the optimized functional $L_2$ becomes thus quadratic. Hence the problem can be solved by means of QP. We solve this problem numerically using the Sequential Least Squares Programming method. 
So, for several pairs of $(\g_\ttS, \g_\ttB)$, we find the optimal algorithm $\A^*$ and depict in Fig.~\ref{img_T2_Uniform} both its prices $\A^*(\n)$ for all nodes $\n$ and its relative expected strategic revenue (w.r.t.\ $\A^*_D$). Namely, Fig.~\ref{img_T2_Uniform}(a) contains results for $\g_\ttS = 0.8$ and $\g_\ttB\in\{0.01 + i \cdot 0.005\}_{i = 0}^{148}$, while Fig.~\ref{img_T2_Uniform}(b) contains results for $\g_\ttB = 0.2$ and $\g_\ttS\in\{0.2 + i \cdot 0.005\}_{i = 0}^{159}$.


First, at the bottom of Fig.~\ref{img_T2_Uniform}  we see that \emph{the optimal algorithm outperforms the baseline optimal constant pricing for any observed pair of discounts}. 
Second, the top part of Fig.~\ref{img_T2_Uniform} demonstrates us that, for any pair of discounts,  the optimal algorithm is  a \emph{consistent pricing}, i.e., the one which never sets prices lower (higher) than earlier accepted (rejected, resp.) ones~\cite{2017-WWW-Drutsa}. In fact, this property is theoretically guaranteed for the studied case; namely, it easily follows from the relation between the optimal prices and the optimum $\vb$: $\A^*(\nO) = v_1$, $\A^*(\estr) = \g_\ttB v_1 + (1 - \g_\ttB) v_2 $, and $\A^*(\nI) = v_3$.
Third, the obtained optimal algorithms are appeared to be continuous in $\g_\ttS$ and $\g_\ttB$. Moreover, if the distance between the discount rates $\g_\ttS$ and $\g_\ttB$ converges to $0$, then the optimal algorithm $\A^*$ converges to the optimal constant one $\A^*_D$ (what experimentally supports Remark~\ref{remark_Lfunctional_forequal}). 

{\bf The case of $T=3$.}
In a similar way as it done for the previous case, the dimensionality of the optimization problem can be lowered from $7$ to $4$, when $\g_\ttB < (\sqrt 5 - 1)/2$, and to $5$, when $\g_\ttB > (\sqrt 5 - 1)/2$\footnote{The different cases are results of the change of the order of the values $\{\gb^\ttB \cdot \astr|\astr\in\Str\}$ at the border point $(\sqrt 5 - 1)/2$.}.
The method to solve the optimization problem and the  set of $(\g_\ttS, \g_\ttB)$ are the same as in the case of $T=2$. Fig.~\ref{img_T3_Uniform} is arranged similarly to Fig.~\ref{img_T2_Uniform}.

Analogously to the case of $T=2$, in Fig.~\ref{img_T3_Uniform}, we observe the superiority of the optimal algorithm $\A^*$ over the baseline $\A^*_D$ for any pair of discount rates, as well as convergence to $\A^*_D$ as $|\g_\ttS - \g_\ttB|\to 0$  and the  continuity of  $\A^*$ in $\g_\ttS$ and $\g_\ttB$.
But, in contrast to the the case of $T=2$, \emph{the optimal algorithm may be non-consistent}: the condition of consistency is violated by the reverse order of the prices $\A^*(\estr) < \A^*(\nO\nI)$ for $\g_\ttB>$  $\approx 0.54$ (which seen in Fig.~\ref{img_T3_Uniform}(a)), i.e., the seller offers a price larger than the one at the first round if the buyer rejects the first price, but accepts the one at the second round.

There is a lot of other interesting observations: e.g., pairs of equal prices  when $\g_\ttB \to 0$ (see Fig.~\ref{img_T2_Uniform} and~\ref{img_T3_Uniform}); some specific area of pairs of   $(\g_\ttS, \g_\ttB)$ where algorithm prices becomes equal (see Fig.~\ref{img_T3_Uniform}), etc. They are seen further in  Fig.~\ref{img_T4_UniformInf} as well, and  a thorough  study of them is deferred to  future work.

\begin{figure}
	\centering
	\vspace{-2mm}
	\includegraphics[width=\columnwidth]{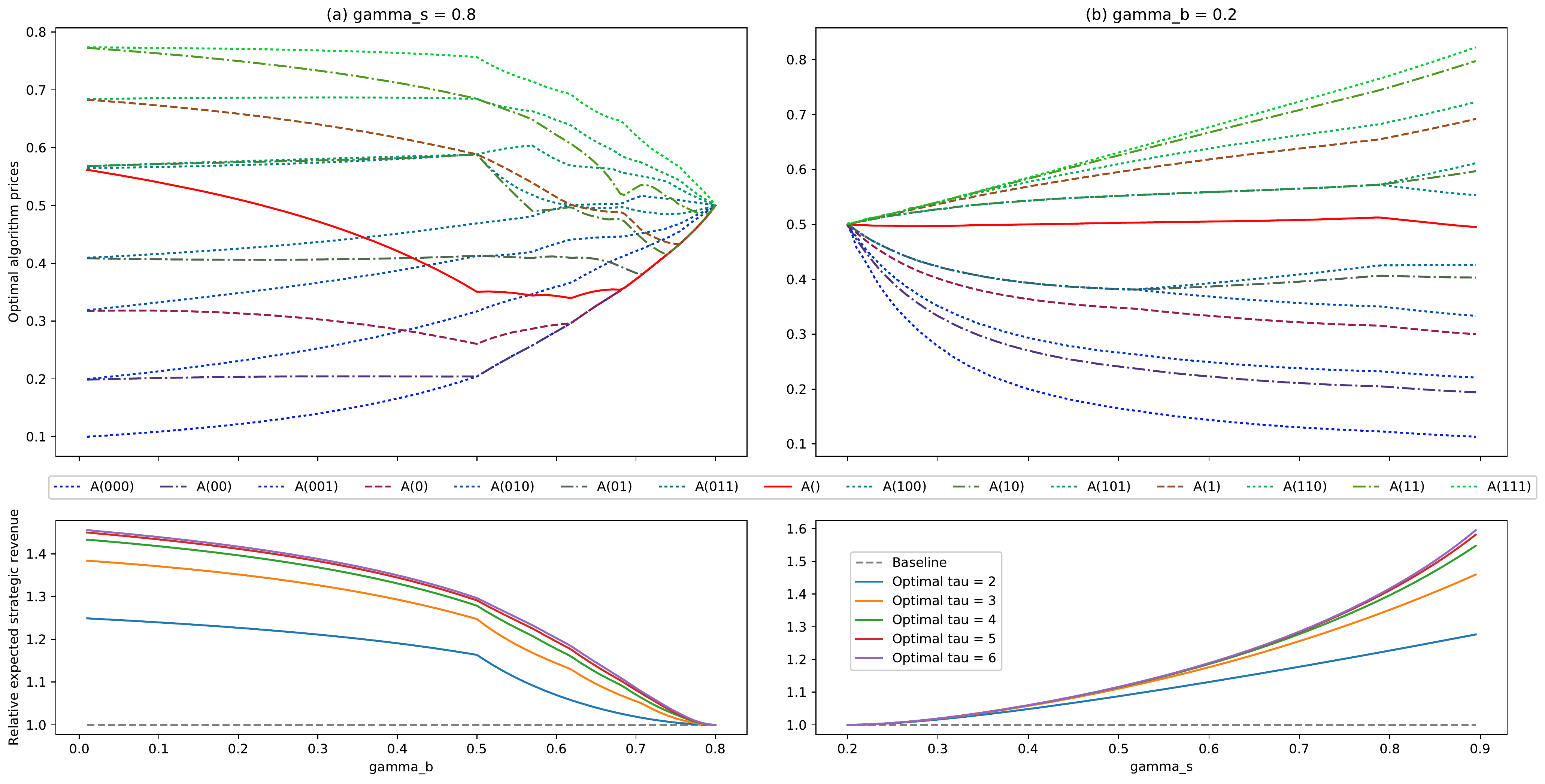}
	\vspace{-8mm}
	\caption{Infinite game. The prices  $\A^\ast_4(\n)$, for nodes $\n\in\N$ s.t.\  $|\n|\le 3$, of the optimal $4$-step algorithm $\A^\ast_4$ and the relative expected strategic revenue (w.r.t.\ $\A^*_D$) of the optimal $\tau$-step algorithm $\A^\ast_\tau, \tau=2,..,6,$ for discounts:
		(a) $\g_\ttS = 0.8$ and various $\g_\ttB$;
		(b)  $\g_\ttB = 0.2$ and various $\g_\ttS$.}
	\label{img_T4_UniformInf}
	\vspace{-4mm}
\end{figure}

\subsection{Infinite game: approximately optimal algorithms and case study} 
\label{subsec_infinite_game_studies}
Let us return to the case of the infinite game with $\nub(\gb^\ttB) < \nub(\gb^\ttS)$. In this case, we have no powerful instrument to find an optimal pricing (unlike to the case of finite games).
However, one can approximate the optimal algorithm by an optimal one in some finite dimensional subclass of $\Alg$. 
Namely, for $\tau\in\mathbb{N}$, let us say that $\A$ is a \emph{$\tau$-step pricing algorithm}, if   \mbox{$\fa \astr, t > \tau:\ \A(\astr_{1: t - 1}) = \A(\astr_{1: \tau - 1}), $} i.e., at rounds $t>\tau$, it offers the price equal to the one that has been offered at the round $\tau$. 
The set of all $\tau$-step algorithms is denoted by $\Alg_{\tau}$ and we refer to any $\A\in\Alg_{\tau}$ as a \emph{finite algorithm} as well.

An attentive reader may note that the problem of finding  the optimal $\tau$-step algorithm for the infinite game is equivalent to finding the optimal algorithm in the  $\tau$-round game with the finite discounts $\tilde \gb^\ttS$ and $\tilde \gb^\ttB$, 
where the operator $\tilde\gb$ means: $\tilde\g_t := \g_t, t<\tau$; 
$\tilde\g_\tau :=\sum_{t = \tau}^\infty \g_t$; and $\tilde\g_t := 0, t > \tau$. The condition $\nub(\gb^\ttB) < \nub(\gb^\ttS)$ implies $\nub(\tilde\gb^\ttB) < \nub(\tilde\gb^\ttS)$ (see Remark~\ref{remark_DiscRateInequality}). Hence, one can apply the optimization technique from Theorem~\ref{th_problem_equivalence}.
The following proposition (the proof  is presented in Appendix~\ref{app_subsec_proof_prop_approx_by_finite_alg}) formally states that the expected strategic revenue of the optimal $\tau$-step algorithm $\A^*_\tau$ converges to one of the optimal pricing $\A^*$ when $\tau\to\infty$.

\begin{proposition}
	\label{prop_approx_by_finite_alg}
	let $\gb^\ttS$, $\gb^\ttB$ be discounts s.t.\ $\nub(\gb^\ttB) \le \nub(\gb^\ttS)$ and $\Gamma^\ttS_\tau := \sum_{t = \tau + 1}^\infty \g^\ttS_t$ for $\tau\in\mathbb{N}$. Then the following bounds hold:
	$$
	\max_{\A \in \Alg_\tau} \Expect\left[\SRev_{\gb^\ttS, \gb^\ttB}(\A, V)\right]  \le  \max_{\A \in \Alg} \Expect\left[\SRev_{\gb^\ttS, \gb^\ttB}(\A, V)\right] \le \max_{\A \in \Alg_\tau} \Expect\left[\SRev_{\gb^\ttS, \gb^\ttB}(\A, V)\right]  + \Gamma^\ttS_\tau \Expect\left[V\right].
	$$
\end{proposition}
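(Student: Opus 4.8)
The lower bound $\max_{\A\in\Alg_\tau}\Expect[\SRev_{\gb^\ttS,\gb^\ttB}(\A,V)]\le\max_{\A\in\Alg}\Expect[\SRev_{\gb^\ttS,\gb^\ttB}(\A,V)]$ is immediate: a $\tau$-step algorithm is a pricing algorithm, so $\Alg_\tau\subseteq\Alg$. Thus the whole content is the upper bound, and the plan is the following. Fix an arbitrary $\A\in\Alg$ and, for a.e.\ valuation $v$ (so that $R(v):=\SRev_{\gb^\ttS,\gb^\ttB}(\A,v)<\infty$, cf.\ Lemma~\ref{Rlemma}), let $\{a^O_t\}_{t=1}^\infty=\astr^{Opt}(\A,v,\gb^\ttB)$ with offered prices $p_t=\A(a^O_1\dots a^O_{t-1})$. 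I would split the strategic revenue along this optimal path into a \emph{head} $\sum_{t=1}^{\tau}\g^\ttS_t a^O_t p_t$ and a \emph{tail} $\sum_{t>\tau}\g^\ttS_t a^O_t p_t$, bound each separately, and integrate against $D$.

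For the tail, the key is that the continuation of an optimal buyer strategy is optimal for every subgame: upon reaching any node, rejecting all remaining offers is feasible and yields continuation surplus $0$, hence $\sum_{t\ge s}\g^\ttB_t a^O_t(v-p_t)\ge 0$ for every $s$ (the sequential version of the third property in Remark~\ref{remark_strat_prop}). I then transfer this nonnegativity from $\gb^\ttB$ to $\gb^\ttS$: by Remark~\ref{remark_DiscRateInequality} the hypothesis $\nub(\gb^\ttB)\le\nub(\gb^\ttS)$ is equivalent to $\{\g^\ttS_t/\g^\ttB_t\}_t$ being non-decreasing, and since $\sum_t\g^\ttS_t a^O_t(v-p_t)$ converges absolutely (both $\sum_t\g^\ttS_t a^O_t v\le v\Gamma^\ttS$ and $\sum_t\g^\ttS_t a^O_t p_t=R(v)$ are finite), an Abel-summation argument with these non-decreasing weights gives $\sum_{t>\tau}\g^\ttS_t a^O_t(v-p_t)\ge 0$. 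Hence the tail revenue obeys $\sum_{t>\tau}\g^\ttS_t a^O_t p_t\le v\sum_{t>\tau}\g^\ttS_t a^O_t\le v\,\Gamma^\ttS_\tau$ (using $a^O_t\le 1$), contributing the term $\Gamma^\ttS_\tau\Expect[V]$ after taking expectations.

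For the head, the goal is $\Expect\big[\sum_{t=1}^{\tau}\g^\ttS_t a^O_t p_t\big]\le\max_{\A'\in\Alg_\tau}\Expect[\SRev_{\gb^\ttS,\gb^\ttB}(\A',V)]$. The tempting shortcut of simply truncating $\A$ after round $\tau$ (repeating the round-$\tau$ price, or zeroing the tail) and comparing revenues round by round does \emph{not} work: collapsing the future changes the buyer's incentives and can destroy exactly the ``front-loading'' leverage that $\A$ used in its first $\tau$ rounds — the one-round truncation of the ``big deal'' algorithm $\A_1$ of Proposition~\ref{prop_non_uniqu} becomes a constant-price algorithm at the inflated price $\Gamma^\ttB p^\ast(D)$, whose strategic revenue collapses far below the head revenue of $\A_1$. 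Instead I would argue structurally: the seller's revenue depends only on the round-$1,\dots,\tau$ prices and on the buyer's first $\tau$ decisions, and a strategic buyer never accepts an offer that, even combined with the most favorable conceivable continuation (all remaining goods free), yields negative surplus — this caps how aggressively the head prices can be set as a function of $v$. Translating $\A$'s first $\tau$ rounds into the equivalent $\tau$-round game with collapsed discounts $\tilde\gb^\ttS,\tilde\gb^\ttB$ ($\tilde\g_\tau:=\sum_{t\ge\tau}\g_t$, as defined before the proposition), invoking the reduction of Theorem~\ref{th_problem_equivalence}, and using $\nub(\gb^\ttB)\le\nub(\gb^\ttS)$ (so that inflating the round-$\tau$ discount only helps the seller) should yield the head estimate. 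Combining the head and tail bounds and letting $\A$ range over $\Alg$ gives the claim.

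The step I expect to be the main obstacle is precisely this head estimate. The tail bound is robust and needs only the elementary Abel-summation transfer of continuation-surplus nonnegativity; by contrast, the head cannot be handled by a naive truncation, and this is exactly where the statement is only \emph{approximate} rather than exact — one must control the revenue extracted in the first $\tau$ rounds by an arbitrary, possibly non-truthful and front-loading algorithm using the value of the finite $\tau$-round problem, carefully tracking the strategic buyer's indifference conditions across all $\tau$ rounds and the discount-rate hypothesis to compare the two discountings.
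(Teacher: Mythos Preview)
Your tail bound is essentially the paper's argument. The paper writes $\gb^\ttS\,\Ind_{\{t>\tau\}}$ as $\sum_{i\ge1}(c_i-c_{i-1})\,\gb^\ttB\,\Ind_{\{t>\tau+i-1\}}$ with $c_i=\g^\ttS_{\tau+i}/\g^\ttB_{\tau+i}$ non-decreasing, then bounds each piece using the continuation-surplus nonnegativity $\sum_{t>\tau+i}\g^\ttB_t a^O_t(v-p_t)\ge0$; your Abel-summation step is the same decomposition written differently. (One minor quibble: your citation of Lemma~\ref{Rlemma} for $R(v)<\infty$ is not quite legitimate, since that lemma is proved only for equal discounts; but the bound $R(v)\le v\,\Gamma^\ttS$ follows from your own Abel argument with $\tau=0$, so the point is moot.)

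The head is where your proposal has a genuine gap, and your approach diverges from the paper's. You try to control $\Expect\big[\sum_{t\le\tau}\g^\ttS_t a^O_t p_t\big]$ for a fixed $\A$ by ``translating'' its first $\tau$ rounds to the collapsed game and invoking Theorem~\ref{th_problem_equivalence}. That theorem only rewrites the optimum as a multivariate functional; it gives no inequality you can apply here, and you have not said what the translation actually is or why it dominates. The paper avoids this difficulty by never fixing $\A$: since the buyer's optimal strategy $\astr^{Opt}(\A,v,\gb^\ttB)$ does not depend on $\gb^\ttS$, the strategic revenue is \emph{linear} in the seller's discount, hence $\max_{\A}\Expect[\SRev_{\gb^\ttS,\gb^\ttB}]\le\max_{\A}\Expect[\SRev_{\gb^\ttS\Ind_{\le\tau},\gb^\ttB}]+\max_{\A}\Expect[\SRev_{\gb^\ttS\Ind_{>\tau},\gb^\ttB}]$ (this is the paper's Lemma~\ref{lemma_discounts_sum_bound}). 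The head term is then a \emph{maximum} over all of $\Alg$ with the truncated seller discount, and the paper bounds it by $\max_{\A\in\Alg_\tau}\Expect[\SRev_{\gb^\ttS,\gb^\ttB}]$ by first passing from $\gb^\ttS\Ind_{\le\tau}$ to the collapsed $\tilde\gb^\ttS$ (monotonicity of revenue in the seller's discount) and then identifying the resulting problem with the $\tau$-step one. The missing idea in your sketch is precisely this linearity/subadditivity in $\gb^\ttS$, which cleanly separates the two maxima; once you see it, there is no need for Theorem~\ref{th_problem_equivalence} at all.
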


Finally, let us consider geometric discounts $\gb^\ttB = \{\g_\ttB^{t - 1} \}_{t = 1}^\infty, \gb^\ttS = \{\g_\ttS^{t - 1} \}_{t = 1}^\infty$ for $0 < \g_\ttB < \g_\ttS < 1$ and 
the valuation $V$ uniformly distributed in $[0,1]$. Following the procedure described in Sec.~\ref{subsec_finite_game_studies}, we numerically find optimal $\tau$-step algorithm $\A^*_\tau$, $\tau=2,..,6$, for the same set of pairs $(\g_\ttS, \g_\ttB)$ as in the case of $T=2$ in Sec.~\ref{subsec_finite_game_studies}. The obtained in this way prices of $\A^*_4$ and the relative expected revenue of $\A^*_\tau$, $\tau=2,..,6$  are arranged in  Fig.~\ref{img_T4_UniformInf} similarly to Fig.~\ref{img_T2_Uniform}.
We see that the expected strategic revenue of $\A^*_\tau$ converges quite quickly to the optimal one. This observation constitutes the empirical evidence of Prop.~\ref{prop_approx_by_finite_alg}, which suggests that the convergence rate is equal to  $\g_\ttS$. We also can note the observations similar to the ones made for the optimal algorithms in Sec.~\ref{subsec_finite_game_studies}. In particular, we see that in the case of the infinite game,  \emph{the baseline optimal constant algorithm is significantly outperformed by algorithms with noticeably non-static pricing} as well.

\section{Conclusions}

We studied online learning algorithms that maximize expected cumulative revenue of repeated posted-price auctions  in the scenario with a strategic buyer that holds a fixed private valuation.
More precisely, we investigated the situation in which the seller ant the buyer may have different level of the patience to wait for utility, and which is modeled via own discounts in the cumulative utilities for the buyer and the seller.
Surprisingly, we found that only in the case of equal discounts, the seller cannot advantageously use the ability to change prices in dynamic fashion (i.e., to learn them) with respect to  the static approach. 
Namely, the case of  equal discounts admits  two optimal algorithms; one of them constantly offers the Myerson price, while the other one  proposes a ``big deal": pay for all goods in advance (at the first round) or get nothing.
But, first, in the case of more patient buyer, the  pricing algorithm ``big deal" was shown to outperform the constant pricing.

Second, in the inverse case when the seller's discount rate is larger than the one of the buyer, 
we reduced the problem of finding an optimal algorithm to a multidimensional optimization problem with a multivariate analogue of the functional used to determine Myerson's price. Our reduction does not admit a closed form solution in general (similarly, to the case of revenue optimal static auctions), but  can be solved by state-of-the-art numerical optimization techniques (like gradient ones).
We conducted extensive analysis of numerically found optimal algorithms to demonstrate that they are non-trivial, may be non-consistent, and generate  larger expected revenue than the constant pricing with the Myerson price.
Overall, this work provided clear techniques for obtaining guarantees on the seller's revenue in repeated posted-price auctions that may help in future studies on a more sophisticated scenarios and auction mechanisms.

%
%

\bibliographystyle{abbrv}
\bibliography{2018-arxiv-rppa}

\appendix
\numberwithin{equation}{section}
\numberwithin{figure}{section}

\section{Missing proofs}

\subsection{Existence of an optimal strategy $\astr \in \Str$ (from Section~\ref{subsec_Setup})}
\label{app_subsec_optstrat_exist}
Assume we are given an algorithm $\A \in \Alg$, a correct discount sequence $\gb = \{\g_t\}_{t = 1}^\infty$ and a private valuation $v \in [0; +\infty)$ (they  are fixed). In this case, for the function $F:\Str \rightarrow \mathbb{R} \cup \{-\infty\},\ F(\astr) = S_\astr(v)$ the following proposition holds. 
\begin{proposition}
	\label{appx_prop_existOptStrat}
	There exists a strategy $\astr^\ast \in \Str$ such that $\forall \astr \in \Str: F(\astr^\ast) \ge F(\astr)$.
\end{proposition}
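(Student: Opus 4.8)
The plan is to view the buyer's problem as an infinite-horizon discounted decision process on the price tree $\T$ and to solve it by a value-function (one-shot-deviation) argument, exploiting that every per-round surplus is bounded above by $\g_t v$ and that the tails $\sum_{t>n}\g_t$ vanish. As a first step I would note two cheap bounds on $M:=\sup_{\astr\in\Str}F(\astr)$: rejecting at every round ($\astr=\nO^\infty$) gives $F(\astr)=0$, so $M\ge0$; and since $\g_t a_t(v-p_t)\le\g_t v$ for all $t$, every partial sum of $F(\astr)$ is at most $\sum_t\g_t v=\Gamma v$, so $0\le M\le\Gamma v<\infty$.

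Next, for each node $\n\in\N$ of depth $n=|\n|$ I would define the value $U(\n)$ as the supremum, over all continuations $(b_1,b_2,\dots)\in\{0,1\}^{\mathbb{N}}$ played after $\A$ reaches $\n$, of the corresponding tail surplus $\sum_{t>n}\g_t b_{t-n}(v-\tilde p_t)$, where $\tilde p_t$ are the prices $\A$ offers along $\n$ and this continuation. Rejecting forever shows $U(\n)\ge0$, and $\g_t b_{t-n}(v-\tilde p_t)\le\g_t v$ shows $U(\n)\le v\sum_{t>n}\g_t$, which tends to $0$ as $n\to\infty$ because $\sum_t\g_t$ converges; thus every $U(\n)$ is finite and $U(\n)\to0$ down any branch. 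Splitting a continuation according to its first action yields the Bellman identity
\[
U(\n)=\max\bigl(U(\n\nO),\ \g_{n+1}(v-\A(\n))+U(\n\nI)\bigr),
\]
and for $\n=\estr$ this supremum is exactly $\sup_{\astr}F(\astr)=M$.

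Then I would construct $\astr^\ast=(a^\ast_1,a^\ast_2,\dots)$ greedily: with $\n_0=\estr$, at the node $\n_k$ already reached I pick $a^\ast_{k+1}$ to be a maximizing action in the Bellman identity at $\n_k$ and let $\n_{k+1}$ be the left child of $\n_k$ if $a^\ast_{k+1}=0$ and the right child otherwise. In both cases the Bellman identity gives $U(\n_k)=\g_{k+1}a^\ast_{k+1}(v-\A(\n_k))+U(\n_{k+1})$, so telescoping yields, for every $k$,
\[
M=U(\estr)=\sum_{t=1}^{k}\g_t a^\ast_t(v-p^\ast_t)+U(\n_k),
\]
where $p^\ast_t=\A(\n_{t-1})$ are the prices offered along $\astr^\ast$. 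Letting $k\to\infty$ and using $U(\n_k)\to0$, the partial sums of $F(\astr^\ast)=\sum_t\g_t a^\ast_t(v-p^\ast_t)$ converge to $M$; in particular $F(\astr^\ast)=M$ is finite and $F(\astr^\ast)=M\ge F(\astr)$ for every $\astr\in\Str$, so $\astr^\ast$ is the desired strategy.

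The only genuinely delicate point is the passage to the infinite horizon: an individual tail surplus can diverge to $-\infty$ when ever larger prices are accepted at later rounds, so the Bellman identity and the telescoping limit must be justified without assuming each continuation has a finite surplus. The one-sided estimate $0\le U(\n)\le v\sum_{t>n}\g_t$ is exactly what handles this: it keeps every value finite, forces the remainder term to vanish, and makes the greedily built $\astr^\ast$ automatically have a convergent surplus, so no oscillation issue arises for it. A compactness/semicontinuity argument on $\Str=\{0,1\}^{\mathbb{N}}$ with the product topology is a possible alternative, but the value-function argument is cleaner and more self-contained, and it is the one I would write out in detail.
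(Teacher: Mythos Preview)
Your argument is correct and is essentially the dynamic-programming reformulation of the paper's proof. Both proofs build $\astr^\ast$ greedily, choosing at each round the action that preserves the supremum; the paper does this by maintaining the invariant $\sup_{\astr\in\Str_k}S_\astr(v)=M$ on the cylinder $\Str_k$ of strategies agreeing with $\astr^\ast$ on the first $k$ rounds, which is exactly your telescoping identity $M=\sum_{t\le k}\g_t a^\ast_t(v-p^\ast_t)+U(\n_k)$ rewritten. The genuine difference is in closing the limit: the paper first proves, by a separate contradiction argument, the tail inequality $\sum_{t\ge T}\g_t a^\ast_t(v-p_t)\ge 0$ for every $T$, and then runs another contradiction to show $S_{\astr^\ast}(v)=M$; you replace both steps by the single two-sided estimate $0\le U(\n)\le v\sum_{t>|\n|}\g_t$, which immediately forces the remainder to vanish and the partial sums of $F(\astr^\ast)$ to converge to $M$. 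Your route is shorter and more transparent (and makes explicit why no oscillation can occur for $\astr^\ast$), while the paper's version avoids introducing the value function and works entirely with full-strategy surpluses; the content is the same, but the Bellman formulation packages it more cleanly.
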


\begin{proof}[Proof of Proposition~\ref{appx_prop_existOptStrat}]
	Denote $M = \sup_{\astr \in \Str} S_\astr(v)$ and $\Str_0 = \Str$. Let $k \ge 0$ be a non-negative integer, then assume that $a^\ast_1, \dots, a^\ast_k \in \{\nO, \nI\}$ and $\Str_k = \{\astr = \{a_t\}_{t = 1}^\infty \in \Str|\ a_1\dots a_k = a^\ast_1\dots a^\ast_k\}$ such that $\sup_{\astr \in \Str_k} S_\astr(v) = M$ defined (if such conditions holds we call the tuple $(a^\ast_1, \dots, a^\ast_k, \Str_k)$ \emph{correct}). We define such $a^\ast_{k + 1}$ that the tuple $(a^\ast_1, \dots, a^\ast_k, a^\ast_{k + 1}, \Str_{k + 1})$ is correct. 
	
	$\{\astr = \{a_t\}_{t = 1}^\infty \in \Str|\ a_1\dots a_k a_{k + 1} = a^\ast_1\dots a^\ast_k \nO\}$ and $\{\astr = \{a_t\}_{t = 1}^\infty \in \Str|\ a_1\dots a_k a_{k + 1} = a^\ast_1\dots a^\ast_k \nI\}$ are denoted by $\Str^0_k$ and $\Str^1_k$ respectively, similarly $\sup_{\astr \in \Str^0_k} S_\astr(v)$ and $\sup_{\astr \in \Str^1_k} S_\astr(v)$ are denoted by $M^0$ and $M^1$. Since $\Str_k = \Str^0_k \cup \Str^1_k$, the following identity holds $M = \max\{M^0, M^1\}$. If $M = M^0$ we define $a^\ast_{k + 1} = \nO$ and $\Str_{k + 1} = \Str^0_k$, otherwise $a^\ast_{k + 1} = \nI$ and $\Str_{k + 1} = \Str^1_k$. Thus, by the definition of $a^\ast_{k + 1}$ and $\Str_{k + 1}$ the tuple $(a^\ast_1, \dots, a^\ast_k, a^\ast_{k + 1}, \Str_{k + 1})$ is correct. 
	
	Taking into account that for $k = 0$ the tuple $(\Str_0)$ is correct, we obtain uniquely defined sequence $\astr^\ast = \{a^\ast_t\}_{t = 1}^\infty$ such that for all integer $k \ge 0$ the tuple $(a^\ast_1, \dots, a^\ast_k, \Str_k)$ for $\Str_k = \{\astr = \{a_t\}_{t = 1}^\infty \in \Str|\ a_1\dots a_k = a^\ast_1\dots a^\ast_k\}$ is correct.
	
	Before finally proving Proposition~1 we prove an auxillary statement: 
	\[ 
	\forall T \in \mathbb{N}\ \sum_{t \ge T} \g_t a^\ast_t (v - p_t) \ge 0,
	\] where $\{p_t\}_{t = 1}^\infty$ is the sequence of prices set by the algorithm $\A$ in response to $\astr^\ast$. Indeed, for an arbitrary $T \in \mathbb{N}$ assume $\sum_{t \ge T} \g_t a^\ast_t(v - p_t) = - \delta < 0$. In this case, since the series $\sum_{t \ge T} \g_t$ and $\sum_{t \ge T} \g_t a^\ast_t(v - p_t)$ converge, there exists such integer $T_0 \ge T$ that $\sum_{t \ge T_0} \g_t v< \frac \delta 3$  and $\sum_{t \ge T_0} \g_t a^\ast_t(v - p_t) > - \frac \delta 2$, which impllies 
	\[ 
	\sum_{T_0 \ge t \ge T} \g_t a^\ast_t(v - p_t) = \sum_{t \ge T} \g_t a^\ast_t(v - p_t) - \sum_{t > T_0} \g_t a^\ast_t (v - p_t) < -\delta + \frac \delta 2 = - \frac \delta 2
	\]Thus, for an arbitrary strategy $\astr \in \Str_{T_0}$ (denote prices corresponding to $\astr$ by $q_t: \forall t \le T_0 + 1:\ q_t = p_t$) we gain for $\bstr = a^\ast_1\dots a^\ast_T \nO^\infty$
	\[
	S_\astr(v) = \sum_{t \le T} \g_t a^\ast_t (v - p_t) + \sum_{T < t \le T_0} \g_t a^\ast_t(v - p_t) + \sum_{T_0 < t} \g_t a_t (v - q_t) < S_\bstr(v)  - \frac \delta 2 + \frac \delta 3 < M - \frac \delta 6,
	\]which implies $\sup_{\astr \in \Str_{T_0 - 1}} S_\astr(v) \le M - \frac \delta 6 < M$. This contradicts to the correctness of the tuple $(a^\ast_1, \dots, a^\ast_{T_0 - 1}, \Str_{T_0 - 1})$.

	Now assume $\forall \astr \in \Str\ S_\astr(v) < M$, hence, $S_{\astr^\ast}(v) < M$ and there exists such strategy $\astr \in \Str$ that $M > S_\astr(v) > S_{\astr^\ast}(v)$ by the definition of $M$. Define $\varepsilon = S_\astr(v) - S_{\astr^\ast}(v)$. Consider an integer $T_1 \ge 0$ such that $\sum_{t > T_1} \g_t v < \varepsilon$ ($T_1$ exists, since the series $\sum_{t = 1}^\infty \g_t$ converges). By the definition of $\astr^\ast$ and $\Str_{T_1}$ there exists such strategy $\bstr = \{b_t\}_{t = 1}^\infty \in \Str_{T_1}$ that $S_\bstr(v) > S_\astr(v)$ (and $b_1 \dots b_{T_1} = a^\ast_1 \dots a^\ast_{T_1}$). Hence, denoting the price sequence set by $\A$ in response to $\bstr$ by $\{p^1_t\}_{t = 1}^\infty$ ($p^1_1\dots p^1_{T_1 + 1} = p_1 \dots p_{T_1 + 1}$) and using that $\sum_{t > T_1} \g_t a^\ast_t(v - p_t) \ge 0$ we gain
	\[
	\varepsilon < S_\bstr(v) - S_{\astr^\ast}(v) = \sum_{t > T_1} \g_t b_t (v - p^1_t) - \sum_{t > T_1} \g_t a^\ast_t (v - p_t) < \varepsilon - 0 = \varepsilon,
	\]which is the contradiction. Thus, the desired result $\exists \astr \in \Str\ S_\astr(v) = M$ is obtained. 
\end{proof}

\subsection{Existence of a global maximum point of $H_D(v)$ (from Section~\ref{sec_ConstAlg})}
\label{app_subsec_globmax_H_D}

Assume the non-negative random variable $V \sim D$ has a finite expectation, and the distribution function $G(v)=\Prob_{V\sim D}[v < V]$ is continuous. 
\begin{proposition}
	\label{appx_prop_globMaxH}
	In this case, the function $H_D(v) = G(v) \cdot v$ has a global maxima point. 
\end{proposition}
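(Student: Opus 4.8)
The plan is to show that $H_D(v) = G(v)\cdot v$ attains its supremum on $[0;+\infty)$ by combining a compactness argument on a bounded interval with a tail estimate that rules out the supremum "escaping to infinity." First I would record the elementary facts: $H_D$ is continuous on $[0;+\infty)$ (as the product of the continuous function $G$ — continuity of $G$ is assumed — and the identity map), $H_D(0) = 0$, and $H_D(v) \ge 0$ everywhere since $G(v) \in [0,1]$ and $v \ge 0$. In particular $M := \sup_{v \ge 0} H_D(v) \ge 0$ is well-defined; if $M = 0$ then $v = 0$ is a global maximum point and we are done, so assume $M > 0$.

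The key quantitative step is a tail bound showing $H_D(v) \to 0$ as $v \to +\infty$. This is where the finite-expectation hypothesis enters. By Markov's inequality (or directly, since $\Expect[V] = \int_0^\infty G(t)\,dt < \infty$), we have $v\,G(v) = v\,\Prob[V > v] \le \Expect[V\,\Ind_{\{V > v\}}] \to 0$ as $v \to \infty$ by dominated convergence (the integrand is bounded by the integrable $V$ and tends to $0$ pointwise). Hence $\lim_{v\to\infty} H_D(v) = 0$. I expect this tail estimate to be the main obstacle in the sense that it is the only place where a genuine measure-theoretic argument (rather than elementary calculus) is needed; everything else is bookkeeping.

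Given the tail bound, choose $R > 0$ large enough that $H_D(v) < M/2$ for all $v \ge R$. Then $\sup_{v \ge 0} H_D(v) = \sup_{v \in [0,R]} H_D(v)$, since values outside $[0,R]$ are all below $M/2 \le M$ and the supremum $M$ is therefore approached only within $[0,R]$. Now $H_D$ is continuous on the compact interval $[0,R]$, so by the extreme value theorem it attains its maximum there at some point $p^\ast \in [0,R]$, and $H_D(p^\ast) = M$. This $p^\ast$ is then a global maximum point of $H_D$ on all of $[0;+\infty)$, which is the claim. (If one additionally wants the leftmost maximizer, as the paper later invokes for the definition of the Myerson price, one notes that the set $\{v \ge 0 : H_D(v) = M\}$ is a nonempty closed subset of $[0,R]$ and hence has a smallest element.)
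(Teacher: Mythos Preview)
Your proof is correct and follows essentially the same route as the paper's: bound $vG(v)\le \Expect[V\,\Ind_{\{V>v\}}]$, use integrability of $V$ to conclude $H_D(v)\to 0$ as $v\to\infty$, then apply the extreme value theorem on a suitable compact interval. The only cosmetic differences are that the paper phrases the tail step via ``absolute continuity of the Lebesgue integral'' rather than dominated convergence, and it fixes a reference point $v_0$ with $H_D(v_0)>0$ instead of working with $M/2$; your explicit treatment of the degenerate case $M=0$ is a small improvement in rigor.
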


\begin{proof}[Proof of Proposition~\ref{appx_prop_globMaxH}]
	We denote the probability measure function by $\mu_V:\mathfrak{B}(\mathbb{R}) \rightarrow [0; 1],\ \mu_V(A) = \Prob[V \in A]$, since we already use $\Prob$ in traditional manner (e.g. $\Prob[V \ge 0]$). Here $\mathfrak{B}(\mathbb{R})$ is the Borel Algebra for $\mathbb{R}$ with the standard topology set. In this terms, the function $H_D$ can be expressed as follows
	\[	
	H_D(\bar v) = G(\bar v) \cdot \bar v = \Prob[V > \bar v] \cdot \bar v = \bar v \cdot  \int_{(\bar v; +\infty)} 1 d \mu_V,
	\]which can be upper bound by $\int_{(\bar v; +\infty)} v d \mu_V$, where $\int_A f(v) d \mu_V$ denotes the Lebegue's integral of a function $f$ on a set $A$ w.r.t. the probability measure $\mu_V$. Due to the absolute continuity of the Lebesgue integral (see Appendix~\ref{app_subsec_abscont}), the fact that $\mu_V((\bar v; +\infty)) \xrightarrow[\bar v \to \infty]{} 0$ holds, and, since $V$ has a finite expectation, we obtain that $\int_{(\bar v; +\infty)} v d \mu_V \xrightarrow[\bar v \to \infty]{} 0$ and, thus, 
	\[ 
	H_D(\bar v) \le \int_{(\bar v; +\infty)} v d \mu_V \xrightarrow[\bar v \to \infty]{} 0 \Rightarrow
	H_D(\bar v) \xrightarrow[\bar v \to \infty]{} 0.
	\] Hence, for an arbitrary picked point $v_0 > 0$ such that $H_D(v_0) > 0$ there exists such $v_1 > v_0$ that $\forall v > v_1\ H_D(v) < H_D(v_0)$, thus, if $H_D$ has a maxima point $v^\ast$ in the segment $[0; v_1]$, then $v^\ast$ is the global maxima point, but $H_D$ is a continuous function, hence, it has a maxima point in any segment, thus, Proposition~2 is proved.
	
\end{proof}

\subsection{Proof of the claims in Lemma~\ref{Rlemma} (from Section~\ref{sec_EqDiscounts})}
\label{app_subsec_proof_RLemma}

\begin{proof}[Proof of Lemma~\ref{Rlemma}]
	The proof of that $R(v)$ is increasing is provided in the main text in Section~\ref{sec_EqDiscounts}.
	
	\underline{Proof of $R(0)=0$.}
	
	Now we also note that $\forall v \ge 0\ \ R(v) \ge 0$, since $R(v) =   \sum_{t = 1}^{\infty} \g_t a_t p_t$ (as a sum of non-negative terms). By the definition $S(0) = -R(0)$ and, thus,  $R(0) = - S(0) \le 0$. Therefore, $R(0)=0$.
	
	\underline{Proof of $0 \le \Expect\left[R(V)\right] < +\infty$.}
	
	Now consider $v \ge 0$ and let $\astr = \{a_t\}_{t = 1}^\infty$ be the optimal strategy for $v$. Hence, 
	\begin{gather*}
		0 \le S(v) = \sum_{t = 1}^{\infty} \g_t a_t (v - p_t) \le \Gamma v - R(v) \Rightarrow R(v) \le \Gamma v
	\end{gather*} Thus, 
	\begin{gather*}
		\forall v \ge 0:\ 0 \le R(v) \le \Gamma v \Rightarrow 0 \le \Expect  \left[ R(V)  \right] \le \Gamma \cdot \Expect \left[ V \right]
	\end{gather*}
\end{proof}

\subsection{Proof of  Lemma~\ref{QSlemma} (from Section~\ref{sec_EqDiscounts})}
\label{app_subsec_proof_QSlemma}

\begin{proof}[Proof of  Lemma~\ref{QSlemma}]
	Consider an arbitrary optimal strategy $\astr$ for the valuation $v$, i.e., s.t. $S(v) = S_\astr(v)$. Since $S_\astr(w) =  q_\astr w - r_\astr$ for any $w\ge0$ and $S_\astr(v) = S(v)$, we can write $S_\astr(v + \delta) = S(v) + q_\astr \delta$ and by the definition of the derivative $S(v + \delta) = S(v) + S^\prime(v) \delta + o_{\delta \to 0}(\delta)$, thus $S_\astr(v + \delta) - S(v + \delta) = (q_\astr - S^\prime(v))\delta + o_{\delta \to 0}(\delta)$, which should be not greater than zero for all possible $\delta$ since $S$ is convex.
	Hence we get $q_\astr = S^\prime$, because otherwise $(q_\astr - S^\prime(v)) \delta + o_{\delta \to 0}(\delta)$ will take both positive and negative values in a neighborhood of $0$. Finally, remind that $Q(v)=q_\astr$  since $\astr$ is optimal for $v$. 
\end{proof}

\subsection{Proof of Lemma~\ref{lemma_ExpR_to_Int_dQ_dS} (from Section~\ref{subsec_OptAlgProof})}
\label{app_subsec_proof_lemma_dQ_dS}

\begin{proof}[Proof of Lemma~\ref{lemma_ExpR_to_Int_dQ_dS}]
	First we note, that due to the absolute continuity of the Lebesgue integral (see \ref{app_subsec_abscont}) and the fact that $R(V)$ has a finite expectation we can write $\Expect\left[R(V) I_{[0; \bar v]}(V)\right] \xrightarrow[\bar v \to +\infty]{} \Expect\left[R(V)\right]$. Rewrite $\Expect\left[R(V) I_{[0; \bar v]}(V)\right]$ using Lebesgue-Stieltjes integral w.r.t. the fact that $F(v) = 1 - G(v)$:
	\begin{equation}\label{LebesgueToStieltjes}
		\Expect\left[R(V) I_{[0; \bar v]}(V)\right] = \int\limits_{[0; \bar v]} R(v) dF(v) = - \int\limits_{[0; \bar v]} R(v) d G(v)
	\end{equation}For the latter integral we use the integration by parts formula (which holds since $G$ is continuous on it's domain and $R$ is non-decreasing on it's domain) and gain
	\begin{gather} \label{integration_by_parts}
		- \int\limits_{[0; \bar v]} R(v) d G(v) = - G(v) R(v) \Bigm|_0^{\bar v}  + \int\limits_{[0; \bar v]} G(v) dR(v) =\\
		- G(v) R(v) \Bigm|_0^{\bar v}   + \int\limits_{[0; \bar v]} G(v) d(Q(v) \cdot v)  - \int\limits_{[0; \bar v]} G(v) dS(v)
	\end{gather}
	
	Since the function $G$ is continuous on $[0; \bar v]$, the Riemann-Stieltjes integral $\int\limits_0^{\bar v} G(v) d(Q(v) \cdot v)$ is defined and equals to the corresponding Lebesgues-Stieltjes integral. For the Riemann-Stieltjes integral for our conditions following identity holds
	\begin{equation}\label{NewtonLeibniz}
		\int\limits_0^{\bar v} G(v) d(Q(v) \cdot v) = \int\limits_0^{\bar v} G(v) Q(v) dv + \int\limits_0^{\bar v} G(v) \cdot v dQ(v).
	\end{equation}
	Bringing together that $R(0) = 0$ (Lemma~1) and $\lim\limits_{\bar v \to +\infty} R(\bar v) G(\bar v) = 0$ (because 
	$$R(\bar v) G(\bar v) \le \int_{(\bar v; +\infty)} R(v) dF(v) \xrightarrow[\bar v \to +\infty]{} 0,$$
	 which holds due to the absolute continuity of the Lebesgue integral), Eq.~(\ref{LebesgueToStieltjes}), Eq.~(\ref{integration_by_parts}) and Eq.~(\ref{NewtonLeibniz}) (and taking the limits) we get 
	\[ 
	\Expect\left[R(V)\right] = \lim\limits_{\bar v \to +\infty} \Expect\left[R(V) I_{[0; \bar v]}(V)\right]  = \int\limits_0^{+\infty} G(v) Q(v) dv + \int\limits_0^{+\infty} G(v) v dQ(v) - \int\limits_{[0; +\infty)} G(v) dS(v).
	\]Considering first two integrals in the latter expression as Lebesgue and Lebesgue-Stieltjes integrals respectively, we obtain the desired result
	
\end{proof}

\subsection{Proof of $\wb_\gb (\A) \in \Delta^{k(T)} \Leftrightarrow \A \in \tilde \Alg(\gb)$ from Lemma~\ref{lemma_linear_transormation} (from Section~\ref{subsec_finite_games_theory})}
\label{app_subsec_proof_lemma_linear_transormation}

	The proof of the statement that $\wb_\gb (\A) \in \Delta^{k(T)}$ if and only if $\A \in \tilde \Alg(\gb)$  could be made via two following inductions.
\begin{itemize}
	\item Let $\A \in \tilde \Alg(\gb)$ and  $\vb = \wb_\gb(\A)$. 
	Then, for $ j = 1, \dots, k$, $ v_j \ge 0$ (\emph{the basis of the induction}). Indeed, assume that this condition is violated for some $j$, then  $S_{\astr^{j - 1}}(v) < S_{\astr^j}(v)$ $\fa v > v_j$, but $v_j < 0$, and, thus $\astr^{j - 1}$ is not active, which is a contradiction. 
	So, let us set $v_0:=0$ (for the notation simplicity); assume, for $s\ge0$, $0 \le v_1 \le \dots \le v_s $ and $ v_s \le v_{s + 1}, \dots, v_{k}$; and prove that $v_{s + 1} \le v_{s + 2}, \dots, v_k$ (\emph{the inductive step}).  
	
	Assume the contrary: for some $j > s + 1$ we have $v_s \le v_j < v_{s + 1}$. 
	Then \mbox{$S_{\astr^s}(v_j) > S_{\astr^{s + 1}}(v_j)$} since $S_{\astr^s}(v_{s + 1}) = S_{\astr^{s + 1}}(v_{s + 1})$ and the slope of $S_{\astr^s}$ is less than that of $S_{\astr^{s + 1}}$. If \mbox{$S_{\astr^{j}}(v_j) \ge S_{\astr^s}(v_j)$}, we have $S_{\astr^{s + 1}}(v) < S_{\astr^{j - 1}}(v)$ for $v > v_j$ and $S_{\astr^{s + 1}}(v) < S_{\astr^s}(v)$ for $v \le v_j$, which means that $\astr^{s + 1}$ is not active.
	Otherwise, we have $S_{\astr^{j - 1}}(v) < S_{\astr^s}(v)$ for $v \le v_j$ and $S_{\astr^{j - 1}}(v) < S_{\astr^{j}}(v)$ for $v > v_j$, which means that $\astr^{j - 1}$ is not active. Both cases infer contradiction, thus, the induction holds. 
	
	\item Conversely, let $\vb  = \wb_\gb(\A) \in \Delta^{k(T)}$. 
	Then $S_{\astr^j}(0) \le S_{\astr_{j - 1}}(0)$ for all $j > 0$ (and, thus, $S_{\astr^0}$ is active). 
	Indeed, assume that this condition is violated for some $j$, then $v_j < 0$, contradiction (\emph{the basis of the induction}). 
	So, let us set $v_0:=0$ (for the notation simplicity); assume, for $s\ge0$,  that 
	$$
	S_{\astr^j},j \le s, \:\hbox{are active}\:, \quad S_{\astr^j}(v_s) \le S_{\astr^{j - 1}}(v_s) \:\hbox{for}\: j> s  \quad \hbox{and}\quad S_{\astr^j}(v_s) \le S_{\astr^s}(v_s) \:\hbox{for}\: j < s;
	$$
	and prove that 
	$$
	S_{\astr^{s + 1}} \:\hbox{is active}\:, S_{\astr^j}(v_{s + 1}) \le S_{\astr^{j - 1}}(v_{s + 1}) \:\hbox{for}\: j > s + 1 \quad \hbox{and}\quad S_{\astr^j}(v_{s + 1}) \le S_{\astr^{s + 1}}(v_{s + 1})\:\hbox{for}\:  j < s + 1;
	$$
	i.e., (\emph{the inductive step}).
	
	The second condition is due to $v_j \ge v_{s + 1}$ for $j > s + 1$. The third condition for $j = s$ follows from the definition of $v_{s + 1}$ and the same for $j < s$ is due to the fact that $S_{\astr^j}(v_{s + 1}) \le S_{\astr^s}(v_{s + 1})( = S_{\astr_{s + 1}}(v_{s + 1}))$, since the slope of the function $S_{\astr^j}$ is less than the slope of the function $S_{\astr^s}$ and $S_{\astr^j}(v_s) \le S_{\astr^s}(v_s)$. The second condition together with the third conditions gives the activeness of $S_{\astr^{s + 1}}$. Thus, the induction holds. 
\end{itemize}

\subsection{The dimension reduction in the case of finite game with horizon $T = 2$} \label{app_subsec_dimension_reduction}

Let $\gb^\ttB = \{\g_b^{t - 1} \cdot I_{\{t \le 2\}}\}_{t = 1}^\infty$ and $\gb^\ttS =  \{\g_s^{t - 1} \cdot I_{\{t \le 2\}}\}_{t = 1}^\infty$  for $0 < \g_b < \g_s < 1$. Build the $\Xi_T(\gb^\ttB, \gb^\ttS)$ matrix by the definition: 
\[
\Xi_T(\gb^\ttB, \gb^\ttS) = \begin{pmatrix}
\gamma_s & 0 & 0 \\ -(\gamma_s - \gamma_b) & 1 - \gamma_b & 0 \\ 0 & 0 & \gamma_s
\end{pmatrix}
\] Now we prove that $L(v) = (1 - F(v))^\intercal \Xi_T(\gb^\ttB, \gb^\ttS) v$ always has a maximum on the hyperplane $v_2 = v_3$. 

 Indeed, assume it's not and consider a maximum point with coordinates $v_1, v_2, v_3$ such that $v_2 < v_3$.   Consider two possible cases: $(1 - F(v_3)) v_3 > (1 - F(v_2)) v_2$ and $(1 - F(v_3)) v_3 \le (1 - F(v_2)) v_2$. 
 \begin{itemize}
 	\item   $(1 - F(v_3)) v_3 > (1 - F(v_2)) v_2$ implies $L(v_1, v_2, v_3) < L(v_1, v_3, v_3)$, since 
\[
L(v_1, v_3, v_3) - L(v_1, v_2, v_3) = -(\gamma_s - \gamma_b) (F(v_2) - F(v_3)) v_1 + (1 - \gamma_b) ((1 - F(v_3)) v_3 - (1 - F(v_2)) v_2),  
\]where the left term is non-negative, since $v_3 > v_2$, and the right term is strictly positive by the assumption. $L(v_1, v_2, v_3) < L(v_1, v_3, v_3)$ contradicts to our assumption.

\item $(1 - F(v_3)) v_3 \le (1 - F(v_2)) v_2$ implies $L(v_1, v_2, v_3) \le L(v_1, v_2, v_2)$, since 
\[
L(v_1, v_2, v_2) - L(v_1, v_2, v_3) = \gamma_s ((1 - F(v_2)) v_2 - (1 - F(v_3)) v_3) \ge 0.
\]This also contradicts to our assumption.
\end{itemize}

Both possible cases infer contradiction, thus, our assumption is wrong. Q.E.D.

\subsection{Proof of  Proposition~\ref{prop_approx_by_finite_alg} (from Section~\ref{subsec_infinite_game_studies})}
\label{app_subsec_proof_prop_approx_by_finite_alg}

\begin{proof}[Proof of Proposition~\ref{prop_approx_by_finite_alg}]
	The left inequality is trivial, since $\Alg_\tau \subset \Alg$. The second obvious observation is that $\SRev_{\gb^\ttS, \gb^\ttB}(\A, v)$ for $\A \in \Alg_\tau$ is equal to $\SRev_{\gb^\ttS, \tilde \gb^\ttB}(\A, v)$, where $\tilde \gb^\ttS = \{\tilde \g_t\}_{t = 1}^\infty$ is equal to $\gb^\ttS$ on $[1; \tau - 1]$, $\tilde \g_\tau = \sum_{t = \tau}^{\infty} \g^\ttS_t$ and $\tilde \g_t = 0$ for $t > \tau$. For such sellers discount $\tilde \gb^\ttS$ the strategic revenue is obviously does not depend on algorithm prices for rounds $\tau + 1, \tau + 2, \dots$, thus, 
	\[
	\max_{\A \in \Alg_\tau} \Expect\left[\SRev_{\gb^\ttS, \gb^\ttB}(\A, V)\right]  = \max_{\A \in \Alg} \Expect\left[\SRev_{\gb^\ttS, \tilde \gb^\ttB}(\A, V)\right].
	\]The following step of the proof is formulated as a lemma:
	\begin{lemma} \label{lemma_discounts_sum_bound}
		Let $\gb^\ttS, \gb^{\ttS, 1}, \gb^{\ttS, 2}$ be discounts such that $\gb^\ttS = \gb^{\ttS, 1} + \gb^{S, 2}$. In this case 
		\[
		\max_{\A \in \Alg} \Expect\left[\SRev_{\gb^\ttS, \gb^\ttB}(\A, V)\right] \le 	\max_{\A \in \Alg} \Expect\left[\SRev_{\gb^{\ttS, 1}, \gb^\ttB}(\A, V)\right] + 
		\max_{\A \in \Alg} \Expect\left[\SRev_{\gb^{\ttS, 2}, \gb^\ttB}(\A, V)\right]
		\]
	\end{lemma}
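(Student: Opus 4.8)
The plan is to observe that the seller's discount enters the strategic revenue only through a term-by-term sum, so the bound is essentially a one-line consequence of linearity once a single optimal buyer strategy is fixed. First I would fix an arbitrary algorithm $\A\in\Alg$ and an arbitrary valuation $v\ge 0$, and let $\astr := \astr^{Opt}(\A, v, \gb^\ttB)$ be an optimal buyer response. The point is that $\astr$ is determined by the buyer's discount $\gb^\ttB$ alone and is therefore the very same object whether the seller's revenue is discounted by $\gb^\ttS$, $\gb^{\ttS,1}$, or $\gb^{\ttS,2}$. Writing $\{p_t\}_{t=1}^\infty$ for the prices induced by $\A$ along $\astr$, the definitions give $\SRev_{\gb^\ttS,\gb^\ttB}(\A,v) = \Rev_{\gb^\ttS}(\A,\astr) = \sum_{t\ge 1}\g^\ttS_t a_t p_t$, and splitting this series of non-negative terms according to $\g^\ttS_t = \g^{\ttS,1}_t + \g^{\ttS,2}_t$ yields the pointwise identity
\[
\SRev_{\gb^\ttS,\gb^\ttB}(\A,v) = \Rev_{\gb^{\ttS,1}}(\A,\astr) + \Rev_{\gb^{\ttS,2}}(\A,\astr) = \SRev_{\gb^{\ttS,1},\gb^\ttB}(\A,v) + \SRev_{\gb^{\ttS,2},\gb^\ttB}(\A,v),
\]
which holds for every $v$ (as an identity in $[0,+\infty]$).

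Second, I would take expectations over $V\sim D$. Since every term is non-negative, the expectations exist in $[0,+\infty]$ and linearity applies unconditionally, giving $\Expect[\SRev_{\gb^\ttS,\gb^\ttB}(\A,V)] = \Expect[\SRev_{\gb^{\ttS,1},\gb^\ttB}(\A,V)] + \Expect[\SRev_{\gb^{\ttS,2},\gb^\ttB}(\A,V)]$. Bounding each summand by $\max_{\A'\in\Alg}\Expect[\SRev_{\gb^{\ttS,1},\gb^\ttB}(\A',V)]$ and $\max_{\A'\in\Alg}\Expect[\SRev_{\gb^{\ttS,2},\gb^\ttB}(\A',V)]$ respectively, and then taking the maximum over $\A\in\Alg$ on the left-hand side, produces exactly the claimed inequality. (For the discount families used in the paper these expectations are moreover finite, because $\SRev_{\gb^{\ttS,i},\gb^\ttB}(\A,v)\le\SRev_{\gb^\ttS,\gb^\ttB}(\A,v)$ pointwise, the latter being controlled by the same kind of estimate as in Lemma~\ref{Rlemma}; but note that the subadditivity above needs no such finiteness.)

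I do not expect a genuine obstacle here: the whole argument rests on the single structural fact that the buyer's optimum is insensitive to the seller's discount. The only point deserving a word of care is that $\SRev$ need not be single-valued when the strategic buyer has several optimal responses (cf.\ the footnote at its definition); I would resolve this by committing, for each $v$, to one and the same optimal strategy $\astr$ when evaluating all three strategic revenues — legitimate precisely because the set of optimal buyer responses is identical for $\gb^\ttS$, $\gb^{\ttS,1}$, and $\gb^{\ttS,2}$ — so the pointwise identity holds verbatim regardless of the tie-breaking rule. Measurability of $v\mapsto\SRev_{\gb^{\ttS,i},\gb^\ttB}(\A,v)$ and the remaining routine bookkeeping mirror the arguments already used for Lemma~\ref{Rlemma}.
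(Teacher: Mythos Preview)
Your proposal is correct and is precisely the trivial argument the paper has in mind (the paper explicitly omits the proof as ``trivial''). The single structural fact you isolate --- that $\astr^{Opt}(\A,v,\gb^\ttB)$ is independent of the seller's discount, making $\SRev$ linear in $\gb^\ttS$ pointwise --- is exactly the one-line reason the lemma holds.
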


	We omit the proof, since it is trivial. Apply Lemma \ref{lemma_discounts_sum_bound} to the sellers discount divided into two parts as follows: \mbox{$\gb^\ttS = \gb^\ttS \cdot I_{\{t > \tau\}} + \gb^\ttS \cdot I_{\{t \le \tau\}}$}: 
	\[
	\max_{\A \in \Alg} \Expect\left[\SRev_{\gb^\ttS, \gb^\ttB}(\A, V)\right] \le 	\max_{\A \in \Alg} \Expect\left[\SRev_{ \gb^\ttS \cdot I_{\{t \le \tau\}}, \gb^\ttB}(\A, V)\right] + 
	\max_{\A \in \Alg} \Expect\left[\SRev_{ \gb^\ttS \cdot I_{\{t > \tau\}} , \gb^\ttB}(\A, V)\right]
	\]
	The left term of the right-hand side of the inequality is not greater than $\max_{\A \in \Alg} \Expect\left[\SRev_{\gb^\ttS, \tilde \gb^\ttB}(\A, V)\right]$, since $\tilde \gb^\ttS \ge \gb^\ttS \cdot I_{\{t \le \tau\}}$ and the right term is not greater than $\Gamma^\ttS_\tau\Expect\left[V\right]$. This fact can be proved in following several steps: 
	\begin{enumerate}
		\item Following identitiy can be verifyed by the direct application of $\nu(\gb)$ definition: 
		\[
			\g^\ttS_{\tau + i} = \g^\ttB_{\tau + i} \cdot \frac{\nu(\gb^\ttS)_{\tau + i - 1}}{\nu(\gb^\ttB)_{\tau + i - 1}} \cdot \dots \cdot \frac{\nu(\gb^\ttS)_{\tau + 1}}{\nu(\gb^\ttB)_{\tau + 1}} \cdot \frac{\g^\ttS_{\tau}}{\g^\ttB_{\tau}}
		\]
		\item Define $c_{i} := \frac{\nu(\gb^\ttS)_{\tau + i - 1}}{\nu(\gb^\ttB)_{\tau + i - 1}} \cdot \dots \cdot \frac{\nu(\gb^\ttS)_{\tau + 1}}{\nu(\gb^\ttB)_{\tau + 1}} \cdot \frac{\g^\ttS_{\tau}}{\g^\ttB_{\tau}}$ for $i \ge 1$, thus, $c_i$ are increasing and 
		\[
			 \g^\ttS_{\tau + i} = c_i \cdot \g^\ttB_{\tau + i}
		\]
		\item In this case 
		\[
			\gb^\ttS \cdot \Ind_{\{t > \tau\}} = c_1 \gb^\ttB \cdot  \Ind_{\{t > \tau\}} + (c_2 - c_1) \gb^\ttB \cdot  \Ind_{\{t > \tau + 1\}} + (c_3 - c_2) \gb^\ttB \cdot  \Ind_{\{t > \tau + 2\}} 
		\]
		\item Consider a case when the discount of the seller is $c \cdot \gb^\ttB \cdot \Ind_{\{t > \tau + i\}}$ for some $c > 0$. Let $\A \in \Alg$ and $\astr = \{a_t\}_{t = 1}^\infty$ be some optimal strategy for a valuation $v > 0$. Then 
		\[
			S(v) = \sum_{t = 1}^{\infty} a_t \g^\ttB_t (v - \A(\astr_{1:t - 1})) \ge \sum_{t = 1}^{\tau + i} a_t \g^\ttB_t (v - \A(\astr_{1:t - 1})) \Rightarrow \sum_{t = \tau + i + 1}^{\infty} a_t \g^\ttB_t v \ge \sum_{t = \tau + i + 1}^{\infty} a_t \g^\ttB_t \A(\astr_{1:t - 1}) 
		\]
		But the right part of the last inequality is $\frac 1c \SRev_{c \gb^\ttB \cdot \Ind_{\{t > \tau + i\}}, \gb^\ttB }(\A, v)$, and, thus, 
		\[
			c \Gamma^B_{\tau + i} v \ge  \SRev_{ c \gb^\ttB \cdot \Ind_{\{t > \tau + i\}}, \gb^\ttB}(\A, v) \Rightarrow c \Gamma^B_{\tau + i} \Expect\left[V\right] \ge \max_{\A \in \Alg}\Expect\left[\SRev_{c \gb^\ttB \cdot \Ind_{\{t > \tau + i\}}, \gb^\ttB}(\A, V)\right]
		\]
		\item Finally, apply Lemma \ref{lemma_discounts_sum_bound} and the identity from our third step and get (for the notation simplicity $c_0 := 0$): 
		\[
			 \max_{\A \in \Alg} \Expect\left[\SRev_{ \gb^\ttS \cdot I_{\{t > \tau\}} , \gb^\ttB}(\A, V)\right] \le \sum_{i = 1}^{\infty} (c_i - c_{i - 1}) \Gamma^\ttB_{\tau + i - 1} \Expect\left[V\right]  = \Gamma^\ttS_{\tau} \Expect\left[V\right]  \qquad \hbox{Q.E.D.}
		\]
		
	\end{enumerate}
	
\end{proof}

\section{Auxiliary definitions and propositions}

\subsection{Absolute continuity of the Lebesgue integral (\cite{kolmogorov2012introductory})} 
\label{app_subsec_abscont}

\begin{proposition}
	Consider the Lebesgue measure $\mu$ on $\mathbb{R}$ and let $f:\mathbb{R} \rightarrow \mathbb{R}$ be an integrable function on $A \subset \mathbb{R}$, then for any $\varepsilon > 0$ there exists such $\delta > 0$ that 
	\[
	\left|\int_B f(x) d\mu \right| < \varepsilon,
	\]where $B \subset A: \mu(B) < \delta$. 
\end{proposition}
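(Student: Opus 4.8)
The plan is to reduce to a non-negative integrand and then run a truncation (monotone-convergence) argument. First I would observe that $\left|\int_B f \, d\mu\right| \le \int_B |f| \, d\mu$ for every measurable $B \subseteq A$, so it suffices to establish the bound with $|f|$ in place of $f$; thus assume without loss of generality that $f \ge 0$ and $\int_A f \, d\mu < \infty$.

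Next, for $n \in \mathbb{N}$ put $f_n := \min(f, n)$, so that $0 \le f_n \le n$, each $f_n$ is integrable, and $f_n \uparrow f$ pointwise on $A$. By the monotone convergence theorem $\int_A f_n \, d\mu \to \int_A f \, d\mu$, and this limit is finite by integrability of $f$ on $A$; hence, given $\varepsilon > 0$, I can fix $N \in \mathbb{N}$ with $\int_A (f - f_N)\, d\mu < \varepsilon/2$.

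Finally I would set $\delta := \varepsilon/(2N)$ and verify the conclusion directly: for any measurable $B \subseteq A$ with $\mu(B) < \delta$,
\[
\int_B f \, d\mu = \int_B (f - f_N)\, d\mu + \int_B f_N \, d\mu \le \int_A (f - f_N)\, d\mu + N\,\mu(B) < \frac{\varepsilon}{2} + N \cdot \frac{\varepsilon}{2N} = \varepsilon,
\]
using $f - f_N \ge 0$ for the first term and $0 \le f_N \le N$ for the second. Combined with the initial reduction, $\left|\int_B f \, d\mu\right| \le \int_B |f|\, d\mu < \varepsilon$, which is the claim.

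This is a standard textbook fact (the paper attributes it to Kolmogorov--Fomin), so there is no genuine obstacle; the only points needing a little care are (i) the reduction to $f \ge 0$ via $\left|\int_B f\right| \le \int_B |f|$, and (ii) justifying $\int_A (f - f_N)\, d\mu \to 0$, which is exactly where the integrability of $f$ on $A$ enters, through monotone convergence (equivalently, dominated convergence with dominating function $f$). One should also keep in mind that $B$ is tacitly assumed measurable, as is implicit in writing $\mu(B)$.
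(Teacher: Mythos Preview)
Your proof is correct and is the standard truncation argument for absolute continuity of the Lebesgue integral. The paper does not actually supply its own proof of this proposition: it merely states the result in an appendix and cites Kolmogorov--Fomin, so there is nothing to compare against beyond noting that your argument is exactly the textbook one.
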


\section{Numerical Solutions for different distributions}
\label{app_sec_numsulutions_diff_distr}

Here we provide plots for $V \sim \beta(4, 2), V \sim \beta(2, 4)$, and $V$ distributed with the density \mbox{$\frac{1 - e^{-x}}{1 - e^{-1}} \Ind_{0 \le x \le  1}$} in different special cases. Discount are taken identically to those from Section \ref{subsec_finite_game_studies} and Section \ref{subsec_infinite_game_studies} (as well as grids for $\g_\ttB$ and $\g_\ttS$). Figures descriptions are given in the following list: 
\begin{enumerate}
	\item Figure \ref{img_T2_Beta42}: 2-round game for $V \sim \beta(4, 2)$.
	\item Figure \ref{img_T2_Beta24}: 2-round game for $V \sim \beta(2, 4)$.
	\item Figure \ref{img_T3_Beta42}: 3-round game for $V \sim \beta(4, 2)$.
	\item Figure \ref{img_T3_Beta24}: 3-round game for $V \sim \beta(2, 4)$.
	\item Figure \ref{img_T4_ExpInf}: infinite game for $V$ distributed with the density \mbox{$\frac{1 - e^{-x}}{1 - e^{-1}} \Ind_{0 \le x \le  1}$}, which is a density of a random variable $\xi \sim Exp(1)$ conditioned by $0 \le \xi \le 1$. 
\end{enumerate} 

\begin{figure}
	\centering
	\vspace{-2mm}
	\includegraphics[width=\columnwidth]{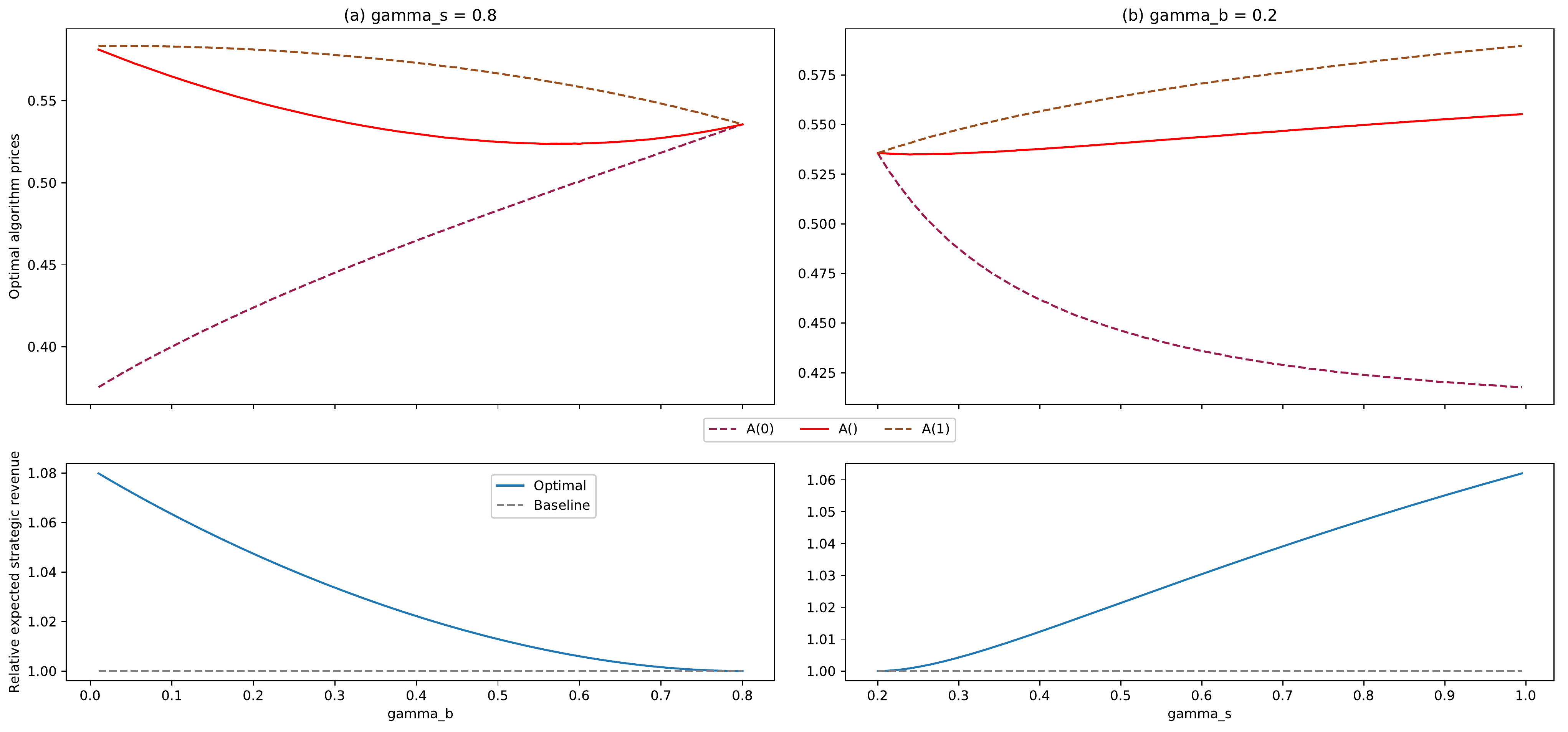}
	\vspace{-8mm}
	\caption{$2$-round game. The prices  $\A^\ast(\nO), \A^\ast(\estr), \A^\ast(\nI)$ and the relative expected strategic revenue (w.r.t.\ $\A^*_D$) of the optimal algorithm $\A^\ast$ for discount rates:
		(a) $\g_\ttS = 0.8$ and various $\g_\ttB$;
		(b) $\g_\ttB = 0.2$ and various $\g_\ttS$.}
	\label{img_T2_Beta42}
	\vspace{-4mm}
\end{figure}

\begin{figure}
	\centering
	\vspace{-2mm}
	\includegraphics[width=\columnwidth]{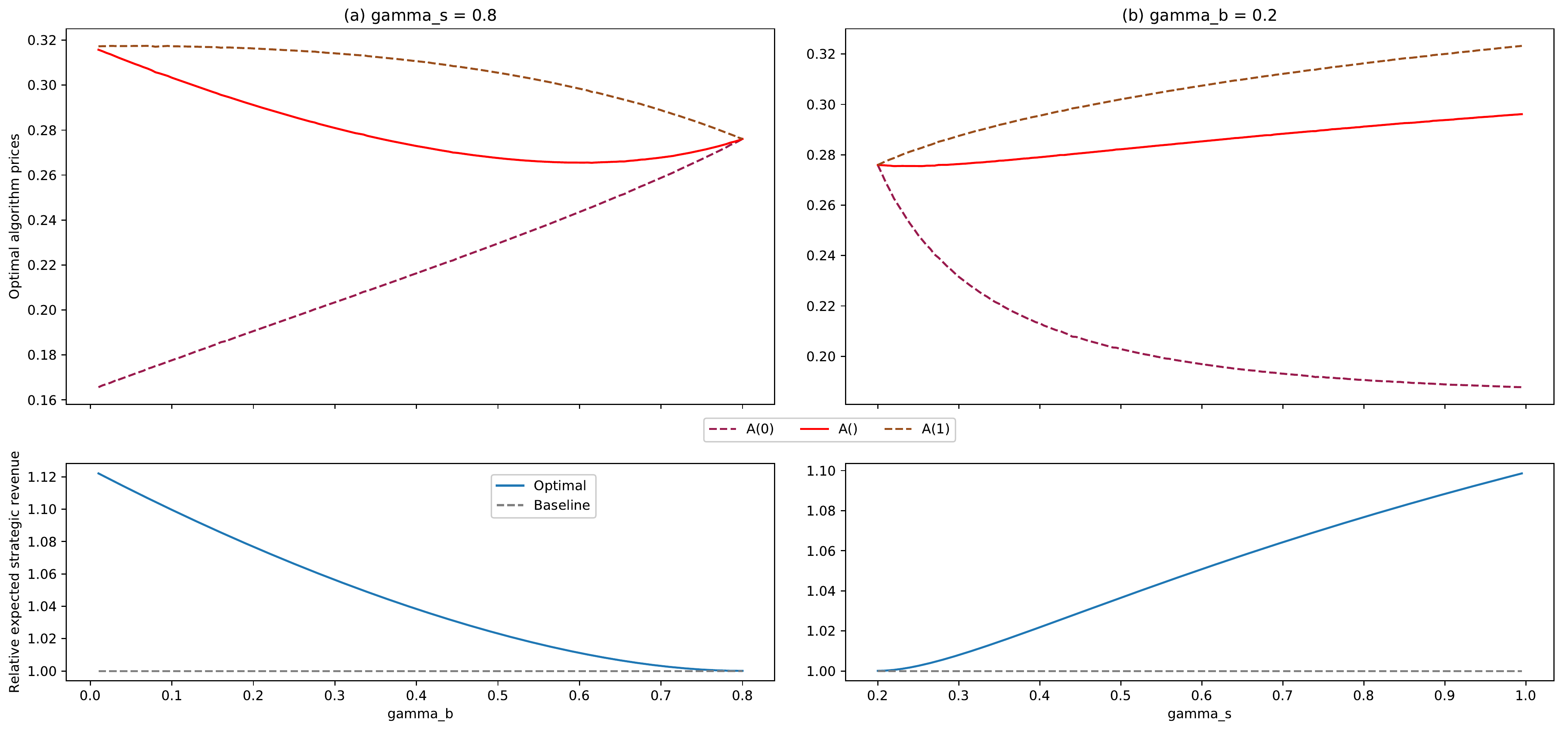}
	\vspace{-8mm}
	\caption{$2$-round game. The prices  $\A^\ast(\nO), \A^\ast(\estr), \A^\ast(\nI)$ and the relative expected strategic revenue (w.r.t.\ $\A^*_D$) of the optimal algorithm $\A^\ast$ for discount rates:
		(a) $\g_\ttS = 0.8$ and various $\g_\ttB$;
		(b) $\g_\ttB = 0.2$ and various $\g_\ttS$.}
	\label{img_T2_Beta24}
	\vspace{-4mm}
\end{figure}

\begin{figure}
	\centering
	\vspace{-2mm}
	\includegraphics[width=\columnwidth]{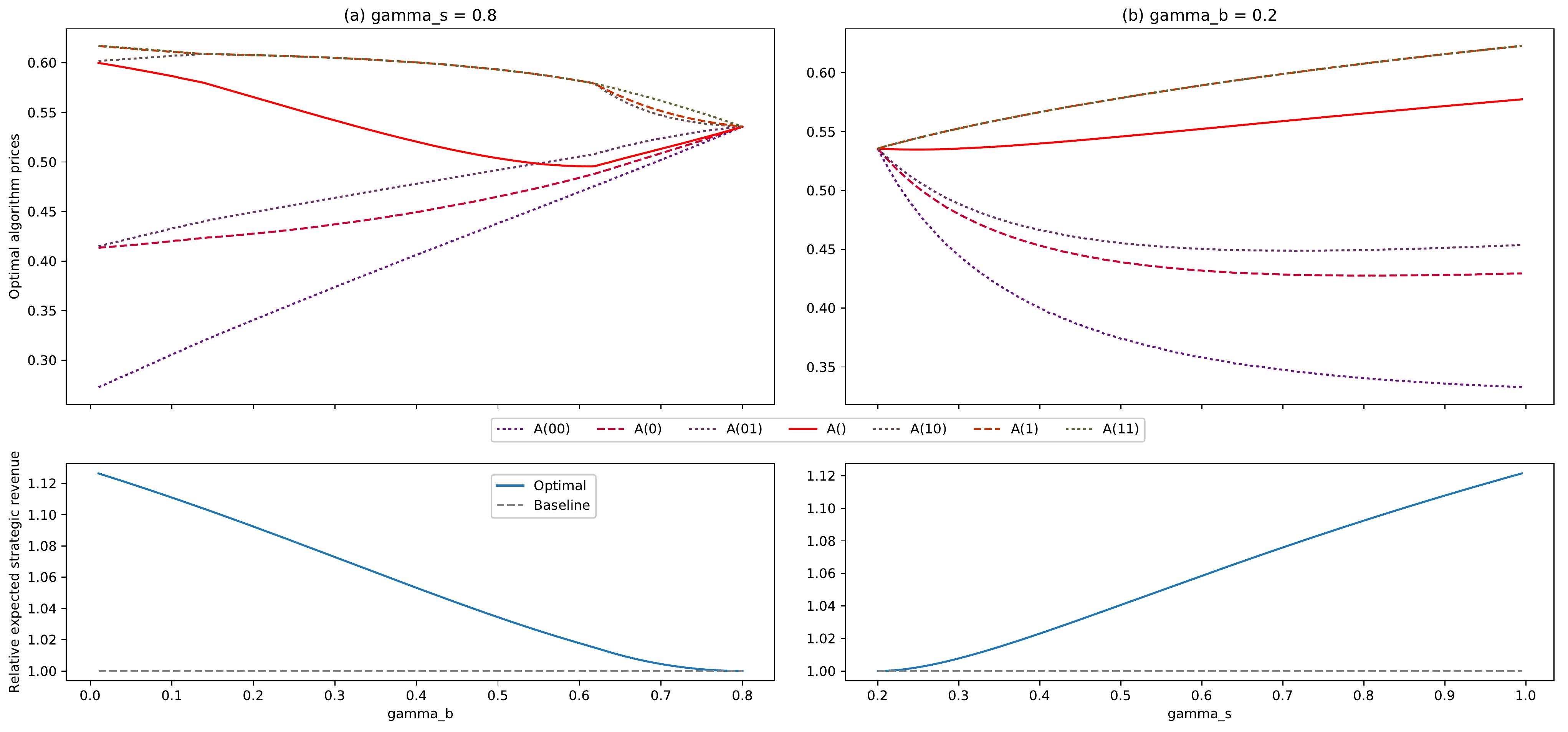}
	\vspace{-8mm}
	\caption{$3$-round game. The prices  $\A^\ast(\n)$, for nodes $\n\in\N$ s.t.\  $|\n|\le 2$, and relative expected strategic revenue (w.r.t.\ $\A^*_D$) of the optimal algorithm $\A^\ast$ for discounts:
		(a) $\g_\ttS = 0.8$ and various $\g_\ttB$;
		(b)  $\g_\ttB = 0.2$ and various $\g_\ttS$.}
	\label{img_T3_Beta42}
	\vspace{-4mm}
\end{figure}

\begin{figure}
	\centering
	\vspace{-2mm}
	\includegraphics[width=\columnwidth]{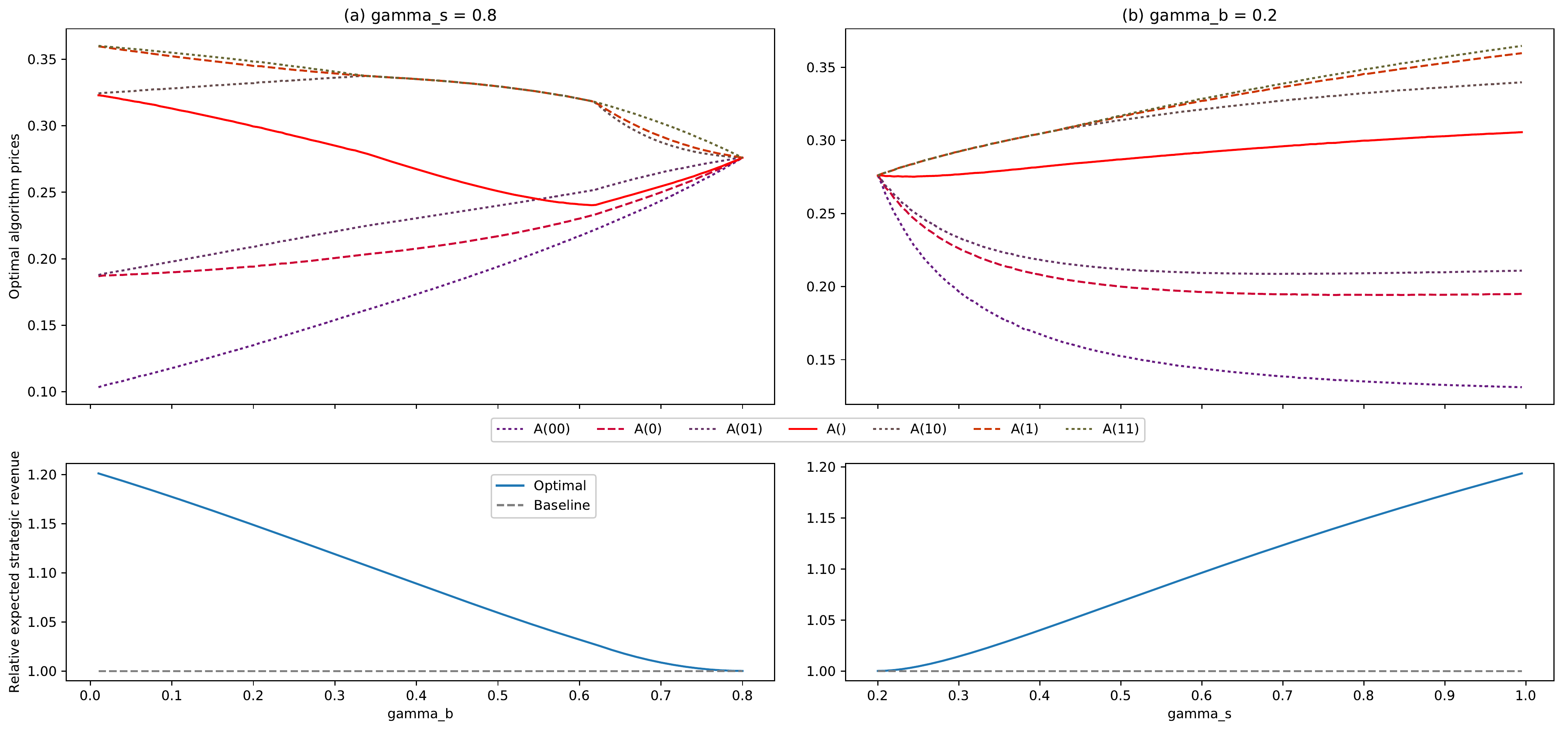}
	\vspace{-8mm}
	\caption{$3$-round game. The prices  $\A^\ast(\n)$, for nodes $\n\in\N$ s.t.\  $|\n|\le 2$, and relative expected strategic revenue (w.r.t.\ $\A^*_D$) of the optimal algorithm $\A^\ast$ for discounts:
		(a) $\g_\ttS = 0.8$ and various $\g_\ttB$;
		(b)  $\g_\ttB = 0.2$ and various $\g_\ttS$.}
	\label{img_T3_Beta24}
	\vspace{-4mm}
\end{figure}

\begin{figure}
	\centering
	\vspace{-2mm}
	\includegraphics[width=\columnwidth]{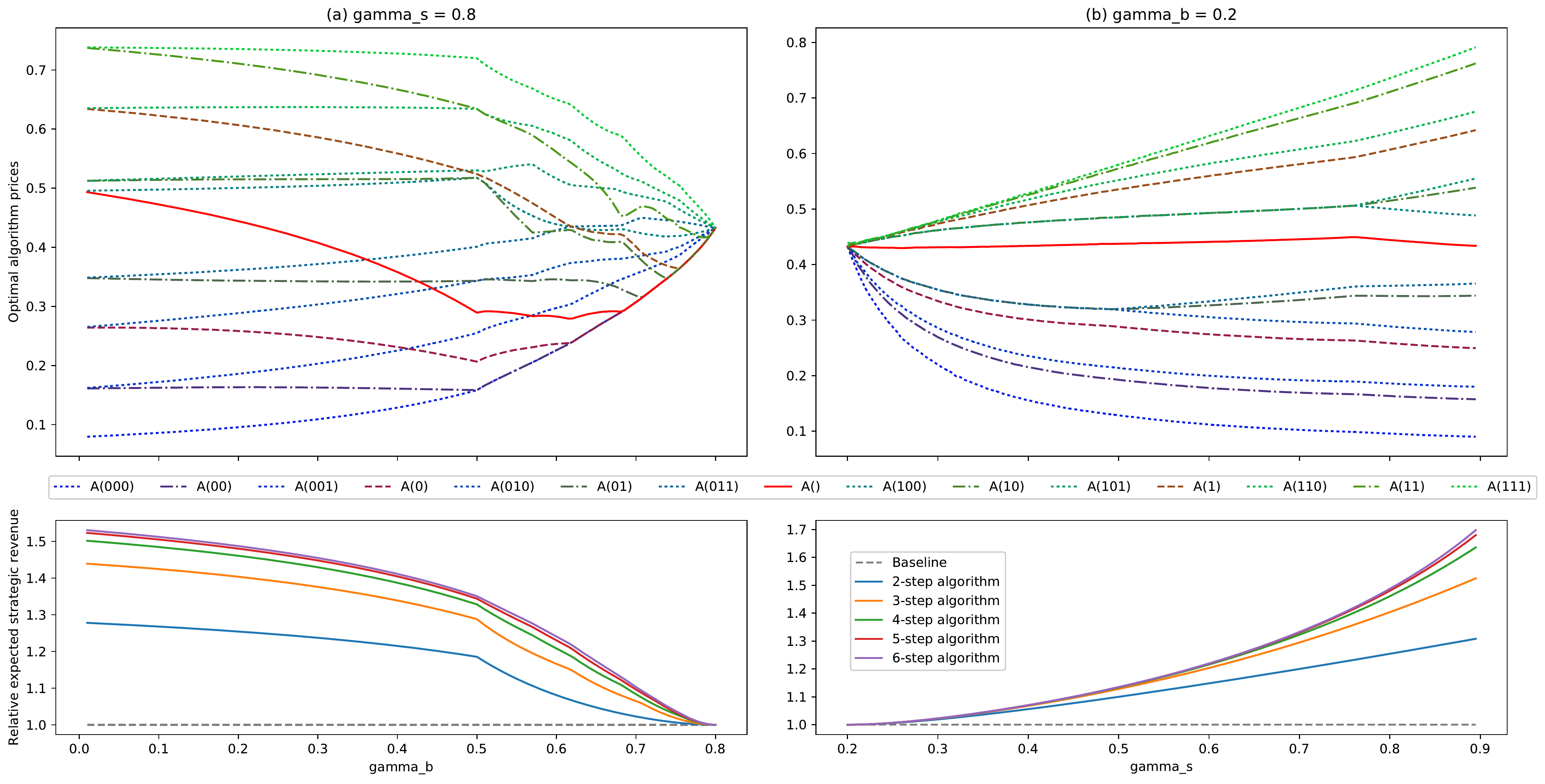}
	\vspace{-8mm}
	\caption{Infinite game. The prices  $\A^\ast_4(\n)$, for nodes $\n\in\N$ s.t.\  $|\n|\le 3$, of the optimal $4$-step algorithm $\A^\ast_4$ and the relative expected strategic revenue (w.r.t.\ $\A^*_D$) of the optimal $\tau$-step algorithm $\A^\ast_\tau, \tau=2,..,6,$ for discounts:
		(a) $\g_\ttS = 0.8$ and various $\g_\ttB$;
		(b)  $\g_\ttB = 0.2$ and various $\g_\ttS$.}
	\label{img_T4_ExpInf}
	\vspace{-4mm}
\end{figure}

\section{The analog of the Envelope theorem}
Here we show how one can use the abstract method to obtain results in case $\gb^\ttS \le \gb^\ttB$. First of all let us introduce the generalization of the envelope theorem: 

\begin{theorem}
	Let $V = I_1 \times \dots \times I_{|A|}$, where $I_j$ is an interval or a segment.\footnote{Think of $V$ as a type of a buyer: if the set of alternatives in a game is $A$, then for each alternative $i$-th buyer has a valuation $v_i$, the vector $(v_1, \dots, v_{|A|})$ is the type of the buyer.} Let $p, \omega: V_i \to \mathbb{R}, \mathbb{R}^{|A|}$\footnote{$\omega$ corresponds to the distribution of alternatives, $p$ corresponds to the payment.} such that 
	\begin{enumerate}
		\item $\sum_{i = 1}^{|A|} \omega_i(v) = 1$ for all $v \in V$.
		\item $\forall v_0, v_1 \hspace{0.5cm} v_0 \cdot \omega(v_0) - p(v_0) \ge v_0 \cdot \omega(v_1) - p(v_1)$ (where $a \cdot b$ is for a scalar product).\footnote{This property means that the optimal buyers strategy is to tell true about his type in the mechanism $(p, \omega)$.}
	\end{enumerate}
	Then following statements about $\omega$ and $p$ are true: 
	\begin{enumerate}
		\item $U(v) := v \cdot \omega(v) - p(v)$ is convex, $\omega_i, p$ are increasing in $v_i$ for all $i$. 
		\item The differential $d U(v)$ is defined almost everywhere and equals to $\omega(v)$. 
	\end{enumerate}
\end{theorem}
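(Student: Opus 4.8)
The plan is to recognize $U$ as a supremum of affine functions and then run the standard envelope / Myerson-monotonicity argument.

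\textbf{Convexity and the envelope minorants.} For each fixed $v_1\in V$ the map $v\mapsto v\cdot\omega(v_1)-p(v_1)$ is affine in $v$, and incentive compatibility (property~2) says precisely that $U(v)\ge v\cdot\omega(v_1)-p(v_1)$ for every $v\in V$, with equality at $v=v_1$. Hence
\[
U(v)=\sup_{v_1\in V}\big(v\cdot\omega(v_1)-p(v_1)\big),
\]
a pointwise supremum of affine functions, so $U$ is a finite convex function on the convex set $V$. Moreover, for each $v_1$ the corresponding affine minorant supports the graph of $U$ at $v_1$, i.e.\ $\omega(v_1)$ is a subgradient: $\omega(v_1)\in\partial U(v_1)$ for all $v_1\in V$.

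\textbf{Almost-everywhere differentiability and $dU=\omega$.} A finite convex function on $V$ is locally Lipschitz on the interior of $V$, so by Rademacher's theorem it is differentiable outside a set of Lebesgue measure zero. At any point $v$ where $U$ is differentiable the subdifferential $\partial U(v)$ is the singleton $\{\nabla U(v)\}$; combined with $\omega(v)\in\partial U(v)$ this forces $\nabla U(v)=\omega(v)$. Thus $dU(v)$ exists for almost all $v$ and equals $\omega(v)$, which is the second assertion.

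\textbf{Monotonicity of $\omega$ and $p$.} Adding the incentive constraint for the pair $(v_0,v_1)$ to the one for $(v_1,v_0)$ gives $(v_1-v_0)\cdot(\omega(v_1)-\omega(v_0))\ge 0$ for all $v_0,v_1\in V$; specializing to $v_0,v_1$ that agree in every coordinate except the $i$-th yields $(v_1^{(i)}-v_0^{(i)})(\omega_i(v_1)-\omega_i(v_0))\ge 0$, i.e.\ $\omega_i$ is non-decreasing in $v_i$ (equivalently, $\nabla U$ is monotone because $U$ is convex). For the payment, write $p(v)=v\cdot\omega(v)-U(v)$ and restrict to the $i$-th coordinate line through a fixed point; using $\partial_i U=\omega_i$ a.e.\ one gets $\partial_i p=\sum_j v^{(j)}\,\partial_i\omega_j$, and the matrix $(\partial_i\omega_j)=\nabla^2 U$ is positive semidefinite, so $\partial_i p\ge 0$ since (as in the valuation setting) the coordinates of $v$ are non-negative; hence $p$ is non-decreasing in each $v_i$. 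Convexity of $U$ was already shown in the first step, which completes the proof.

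\textbf{Main obstacle.} The one genuinely delicate point is the passage from ``$\omega(v)$ is always a subgradient of $U$'' to ``$\nabla U(v)=\omega(v)$ almost everywhere'': it rests on the classical facts that a finite convex function on a convex set is differentiable off a null set and that its subdifferential collapses to a single point exactly at such points. The remaining steps — convexity, the monotonicity inequalities, and the payment bookkeeping — are routine once the envelope characterization is in place.
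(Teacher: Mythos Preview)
Your treatment of convexity and of the envelope identity $dU=\omega$ a.e.\ is correct and in fact cleaner than the paper's: you recognize $U$ as a pointwise supremum of affine functions, observe that each $\omega(v)$ is therefore a subgradient of $U$ at $v$, and then invoke the standard fact that a finite convex function is differentiable off a null set, where its subdifferential collapses to the gradient. The paper instead argues convexity by the direct chord inequality and obtains $dU=\omega$ from the two-sided IC sandwich $\delta v\cdot\omega(v)\le U(v+\delta v)-U(v)\le \delta v\cdot\omega(v+\delta v)$ together with coordinatewise monotonicity of $\omega$ (hence a.e.\ continuity). Your route is more conceptual and avoids the continuity bookkeeping; the paper's route is more elementary and self-contained. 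Your argument for the monotonicity of $\omega_i$ (add the two IC inequalities, specialize to a single-coordinate perturbation) is exactly what the paper does.

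There is, however, a genuine gap in your proof that $p$ is non-decreasing in each $v_i$. You write $\partial_i p=\sum_j v_j\,\partial_i\omega_j=(Hv)_i$ with $H=\nabla^2U$ and then claim $(Hv)_i\ge 0$ because $H$ is positive semidefinite and $v\ge 0$. Positive semidefiniteness only gives $x^\top Hx\ge 0$ for all $x$; it does \emph{not} force the coordinates of $Hv$ to be non-negative when $v\ge 0$ (take for instance $H=\begin{pmatrix}1&-1\\-1&2\end{pmatrix}$ and $v=(1,0)^\top$, which gives $Hv=(1,-1)^\top$). In addition, you are assuming $U$ is twice differentiable, which has not been established. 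The paper instead works directly with the pair of IC inequalities for $v=(v_i,v_{-i})$ and $v'=(v_i',v_{-i})$, $v_i'>v_i$, and reads off from the resulting sandwich
\[
v_i'(\omega_i(v)-\omega_i(v'))\ \le\ p(v)-p(v')\ \le\ v_i(\omega_i(v)-\omega_i(v'))
\]
both $\omega_i(v)\le\omega_i(v')$ and (using $v_i\ge 0$) $p(v)\le p(v')$, without any second-order differentiability. You should replace your Hessian computation by this direct IC argument.
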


\begin{proof}
	First let us prove the convexity of $U$. Consider $v_0, v_1 \in V$. Denote $v_a = (1 - a) v_0 + a v_1$ for $a \in [0; 1]$. Then 
	\begin{align*}
		U(v_a) &= v_a \cdot \omega(v_a) - p(v_a) = (1 - a) (v_0 \cdot \omega(v_a) - p(v_a)) + a (v_1 \cdot \omega(v_a) - p(v_a)) \le \\
		&\le  (1 - a) (v_0 \cdot \omega(v_0) - p(v_0)) + a (v_1 \cdot \omega(v_1) - p(v_1)) = (1 - a) U(v_0) + a U(v_1),  
	\end{align*}thus, by the definition $U$ is convex. 
	
	Consider $v = (v_i, v_{-i})$ and $v^\prime = (v_i^\prime, v_{-i})$ for $v_i^\prime > v_i$. Then by the second property
	\begin{gather*}
		 v_i \omega_i(v) - p(v) \ge v_i \omega_i(v^\prime) - p(v^\prime) \\ 
		 v_i^\prime \omega_i(v^\prime) - p(v^\prime) \ge v_i^\prime \omega_i(v) - p(v)
	\end{gather*}Rearranging these inequalities we get 
	\[
		v_i^\prime(\omega_i(v) - \omega_i(v^\prime)) \le p(v) - p(v^\prime) \le v_i (\omega_i(v) - \omega_i(v^\prime)), 
	\]which can be satisfied only if $\omega_i(v) - \omega_i(v^\prime) \le 0$, since $v_i^\prime > v_i$. Thus, $\omega_i(v) \le \omega_i(v^\prime)$ and $p(v) \le p(v^\prime)$, i.e. $\omega_i, p$ are increasing in $v_i$ for all $i$, since $i, v_{-i}, v_i < v_i^\prime$ were chosen arbitrarily. 
	
	Finally let $v \in V$ and $\delta v$ be the increment, in this case by the second property
	\[
		\delta v \cdot \omega (v) \le U(v + \delta v) - U(v) \ge \delta v \cdot  \omega(v + \delta v), 
	\]which infers that $d U(v)$ equals to $\omega(v)$, if $\omega$ is continuous in $v$, which is almost everywhere. \emph{Q.E.D.}
\end{proof}

\end{document}